\documentclass[a4paper,UKenglish]{article}
\usepackage{graphicx}
\usepackage{amsthm}
\usepackage{thm-restate}
\usepackage[most]{tcolorbox}
\usepackage{todonotes}
\usepackage{lmodern}
\usepackage{mathtools}
\usepackage{fullpage}
\usepackage{amssymb}
\usepackage{amsmath}
\usepackage{amsfonts}
\usepackage{dsfont}
\usepackage{url}
\usepackage{hyperref}
\usepackage[noend]{algorithm2e}
\usepackage[utf8]{inputenc}
\usepackage[T1]{fontenc}
\usepackage{tikz}
\usepackage{mathrsfs} 
\usepackage{centernot}
\usepackage{mdframed}
\usetikzlibrary{automata,positioning}
\usepackage{totcount}
\usepackage{float}
\usepackage{nccmath}
\usepackage{thmtools}
\usepackage[capitalise]{cleveref}

\theoremstyle{plain}
\newtheorem{thm}{Theorem}
\crefname{thm}{theorem}{theorems}

\crefname{prop}{proposition}{propositions}
\newtheorem{ppt}[thm]{Property}
\crefname{ppt}{property}{properties}
\Crefname{ppt}{Property}{Properties}

\crefname{cor}{corollary}{corollaries}
\newtheorem{lem}[thm]{Lemma}
\crefname{lem}{lemma}{lemmas}
\Crefname{lem}{Lemma}{Lemmas}
\newtheorem{lemma}[thm]{Lemma}
\crefname{lemma}{lemma}{lemmas}
\Crefname{lemma}{Lemma}{Lemmas}

\crefname{pblm}{problem}{problems}
\theoremstyle{definition}
\newtheorem{defn}[thm]{\textsf{Definition}}
\crefname{defn}{definition}{definitions}
\theoremstyle{remark}
\newtheorem{req}[thm]{\textsf{Remark}}
\crefname{req}{remark}{remarks}

\newtotcounter{todoctr}

\newcommand{\RR}{\mathbb{R}}
\newcommand{\NN}{\mathbb{N}}

\newcommand{\eps}{\varepsilon}
\newcommand{\es}{\varnothing}
\newcommand{\subs}{\subseteq}

\renewcommand{\bar}[1]{\overline{#1}}

\DeclareMathOperator*{\argmin}{\arg\!\min}

\newcommand{\opt}{\text{opt}}
\newcommand{\OPT}{\text{OPT}}
\newcommand{\Pp}{\mathcal{P}}
\newcommand{\Aa}{\mathcal{A}}
\newcommand{\Ss}{\mathcal{S}}

\newcommand{\Tt}{\mathcal{T}}
\newcommand{\Ll}{\mathcal{L}}

\newcommand{\MR}{\text{MR}}

\newcommand{\ALG}{\mathrm{\textsc{alg}}}
\newcommand{\SOL}{\mathrm{\textsc{sol}}}
\newcommand{\sol}{\mathrm{\textsc{sol}}}
\newcommand{\alg}{\mathrm{\textsc{alg}}}
\newcommand{\apij}{{\alg^{(i+1)}_{\rho, j}}}
\newcommand{\aptj}{{\alg^{(i+2)}_{\rho, j}}}

\crefname{algocfline}{alg.}{algs.}
\Crefname{algocfline}{Algorithm}{Algorithms}
\Crefname{algocf}{Line}{Lines}

\LinesNumbered

\title{A Reusable Framework for Robust Approximation Algorithms in the Interval Uncertainty Model}

\author{
\hspace{-2cm}
\begin{minipage}{8cm}
    \begin{center}
        \textbf{Ralf Klasing}\\
        CNRS, LaBRI, University of Bordeaux\\
        \texttt{ralf.klasing@labri.fr}
    \end{center}
\end{minipage}\begin{minipage}{6cm}
    \begin{center}
        \textbf{Tobias Mömke}\\
        University of Augsburg\\
        \texttt{moemke@informatik.uni-augsburg.de}
    \end{center}
\end{minipage}\\
\begin{minipage}{6cm}
\vspace{0.5cm}
    \begin{center}
        \textbf{Émile Naquin}\\
        LaBRI, University of Bordeaux\\
        \texttt{emile.naquin@labri.fr}
    \end{center}
\end{minipage}
}

\begin{document}
	
	\maketitle
	\begin{abstract}
		Robust optimization under interval uncertainty aims to compute solutions that perform well on a range of scenarios that are described by interval-constrained costs. In this paper, we revisit a framework introduced by Ganesh, Maggs and Panigrahi in 2020 to study the robust optimization of NP-hard problems under interval uncertainty. We start by generalizing a result in the $\ell=0$ case, which transforms a category of approximation algorithms into a robust approximation algorithm. Furthermore, in the general case, we provide a theorem that turns any local search-based approximation algorithm into a robust approximation algorithm under three newly formalized conditions over the moves of the local search algorithm. We then use this result to present the first robust approximation algorithm for Weighted $k$-Set Cover, the third NP-hard problem known to admit a robust approximation, and the first since the publication of Ganesh, Maggs and Panigrahi's paper.
	\end{abstract}
    \thispagestyle{empty}
	\newpage
	\setcounter{page}{1}
    
	\section{Introduction}
	
	Real-life optimization problems usually have to deal with uncertainty, such as delays, unpredictable demand, or weather. There is a wide body of literature modeling uncertainty for various kinds of optimization problems. Notably, stochastic models have enjoyed widespread study and use, but they require some information about the probabilities of some possibly extreme events, which is difficult to estimate or model. A somewhat opposite view on the guarantees that we seek is the worst-case model, in which we are interested in ensuring some optimization guarantee on any possible scenario, irrespective of their likeliness.
	
	Consider for example a mailperson who wants to deliver letters to $n$ houses. The mailperson wants to minimize the time it takes them to complete their deliveries, and therefore, this is a Traveling Salesperson Problem (TSP) instance. However, the mailperson cannot know precisely how long it takes to go from one person's home to another, because traffic conditions can evolve, rain can reduce the maximum speed, etc. Therefore, the mailperson will consider that the time needed to go from house $A$ to house $B$ is contained in some interval $[\ell_{AB}, u_{AB}]$. We call a realization of the times between the houses a \textit{scenario}, that is, a vector of the time between each pair of houses where each time is contained in the uncertainty interval for that specific pair of houses.
	
	Now that we have two vectors, $\ell$ and $u$, for the lower- and upper-bounds of the time between pairs of houses, we want to find a way to express that we want a trip that has the best worst-case guarantee. To do so, Ganesh, Maggs and Panigrahi \cite{GaneshRobustAlgorithmsTSP2023} have introduced an approximation framework which ensures that the solution is comparatively close to the optimum on each scenario. For a solution $S$, we look at its \textit{regret} $r_S$, defined as the maximum over all scenarios 
	$w\in[\ell, u]$ of $w(S)-\opt_w$, where $w(S) := \sum_{e \in S} w_e$ and $\opt_w$ is the cost of an optimal solution given the selected vector $w$.
	
	We can now introduce a measure of how uncertain the problem is by letting $\MR=\min_{\text{solution } S} r_S$, the \textit{minimal regret} obtainable by a solution. This is an interesting parameter because it measures how wide the intervals are, but seen through the problem itself. Indeed, for some problems, having an enormous uncertainty on some data points is not a problem (consider for example a spanning tree problem on a graph with a bridge: no matter the weight of the bridge, it will be used in every solution, and therefore its uncertainty will not have any repercussion in $\MR$).
	
	For some $\alpha,\beta\geq1$, we say that a solution $S$ is an {\em $(\alpha, \beta)$-robust approximation} if on every scenario $w\in[\ell, u]$, we have $w(S)\leq \alpha \cdot \opt_w + \beta\cdot\MR$. Thus on every scenario, our solution's weight is comparable to that of an optimal solution to which we add some term $\MR$, which quantifies how uncertain the problem is. Remark that we cannot hope for a one-parameter approximation, as both $\opt_w$ and $\MR$ are necessary here. If we ask for $\beta=0$, this is just a standard approximation algorithm, which does not get more freedom as the uncertainty grows, and thus cannot reach a constant approximation factor in general. On the other hand, if we ask for $\alpha=0$, then in the case where $\ell=u$, there is exactly one scenario, so the minimal regret is $\MR=0$ (this is the regret of the optimal solution on the scenario $\ell=u$), and thus the right-hand side is $0$, so it cannot be satisfied. Furthermore, note that the solution of minimal regret $\MR$ is a $(1,1)$-robust approximation.
	
	\medskip\noindent
	\textbf{Our contributions.} In this article, we make three significant contributions.
	
	(1)  In \Cref{sec:l-zero}, we show how to convert any \emph{difference-approximation algorithm} for a \emph{linear selection} problem (both notions are defined formally below) to a robust approximation algorithm in the case where the lower bound on the weights is $\ell=0$, as stated in the following:
	
	\begin{restatable*}{thm}{thmLZero}
		\label{thm:l-zero}
		Any linear selection problem $\Pp$ that admits a polynomial-time $\gamma$-difference approximation algorithm and a $\lambda$-approximation algorithm also admits a polynomial-time $(\lambda\delta, \lambda(\delta\gamma+1))$-robust approximation algorithm in the $\ell=0$ case, where $\delta$ is an upper bound on the integrality gap of the LP formulation of $\Pp$.
	\end{restatable*}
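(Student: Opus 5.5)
The plan follows the Ganesh--Maggs--Panigrahi approach for $\ell=0$: first compute a \emph{fractional} solution of near-minimal regret with an LP whose separation oracle is the $\gamma$-difference approximation, then round it with the $\lambda$-approximation, losing the integrality gap $\delta$. I am assuming that a $\gamma$-difference approximation is an algorithm that, given a weight vector and a reference point, returns a solution whose difference objective is within a factor $\gamma$ of the best one. I have not re-checked this against the formal definition below.

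\emph{Step 1: regret in the $\ell=0$ case.} For a solution $S$, the adversary sets $w_e=u_e$ on $S$ and $w_e=0$ elsewhere. This gives
\[
r_S=\max_{S'}\bigl(u(S)-u(S\cap S')\bigr)=\max_{S'}u(S\setminus S').
\]
Extend this to points $x$ of the LP polytope $P$ of $\Pp$: the fractional regret is $r(x)=\max_{S'}\sum_{e\notin S'}u_e x_e$. It is convex in $x$. Since integral solutions lie in $P$, its minimum over $P$ is at most $\MR$.

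\emph{Step 2: approximately minimize $r(x)$.} Write the LP
\[
\min\; r \quad\text{s.t.}\quad x\in P,\qquad \sum_{e\notin S'}u_e x_e\le r\ \ \text{for every solution } S'.
\]
The separation problem, for fixed $x$, asks for the solution $S'$ maximizing $\sum_{e\notin S'}u_ex_e$, which is a difference-type objective in $S'$. I would use the $\gamma$-difference approximation as an approximate separation oracle inside the ellipsoid method. The target outcome is a point $x^*\in P$ such that every solution $S'$ satisfies $\sum_{e\notin S'}u_ex^*_e\le\gamma\,\MR$, or more generally $u(x^*)\le w(x^*)\le \opt_w+\gamma\MR$ for each scenario $w$ in the relevant extreme family. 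Making this approximate-separation argument rigorous is the step I expect to be the main obstacle. The LP constraints must be reshaped so that exactly a $\gamma$-difference guarantee suffices, and the $\gamma$ loss must attach to $\MR$ only, not to $\opt_w$.

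\emph{Step 3: rounding.} Run the $\lambda$-approximation on weights $v_e=u_e$ restricted to the support structure given by $x^*$. Concretely, I would use the instance with weights $u$ and compare against the fractional point $x^*$. Because the LP value is at most $u(x^*)$, the integrality gap gives an integral solution of cost at most $\delta\,u(x^*)$. The $\lambda$-approximation then returns $S$ with $u(S)\le\lambda\delta\,u(x^*)$.

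\emph{Step 4: check every scenario $w\in[0,u]$.} Split $w(S)$ into the part of $S$ that is also charged in the fractional optimum for $w$ and the rest. The rest is bounded by $\lambda\delta$ times the fractional regret of $x^*$, which is at most $\gamma\MR$. The charged part is bounded by $\lambda\delta\,\opt_w$. An additive $\lambda\MR$ covers the gap between the fractional optimum and the regret-minimizing integral solution. Summing these three contributions gives $w(S)\le\lambda\delta\,\opt_w+\lambda(\delta\gamma+1)\MR$.
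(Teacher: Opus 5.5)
Your Step 1 is correct and matches the constraint the paper separates. However, the two steps that carry the theorem are missing (Step 2) and wrong (Steps 3--4).

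\textbf{Step 2 is left open, and it is the whole content of the result.} You also work with a guessed definition. With the paper's definition ($w(S\setminus\OPT_w)\le\gamma\, w(\OPT_w\setminus S)$) the oracle is concrete. Given $(x,r)$, set $w_i=u_ix_i$, run the difference approximation on $w$ to get $S$, and test only the constraint $\sum_{i\notin S}u_ix_i\le r$. The missing idea is that a difference approximation approximates the \emph{complement} objective. For every solution $S'$,
$w(\Omega\setminus S)=w(\Omega\setminus(S\cup\OPT_w))+w(\OPT_w\setminus S)\ge w(\Omega\setminus(S\cup\OPT_w))+\frac1\gamma w(S\setminus\OPT_w)\ge\frac1\gamma w(\Omega\setminus\OPT_w)\ge\frac1\gamma w(\Omega\setminus S')$.
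So if some $S'$ has $\sum_{i\notin S'}u_ix_i>\gamma r$ (equivalently, by your Step 1, some $w'\in[0,u]$ has $\sum_i w'_ix_i>\opt_{w'}+\gamma r$), the tested constraint is genuinely violated. Otherwise $x$ is $(1,\gamma)$-feasible. This is why $\gamma$ attaches only to $r$, and why an ordinary approximation for $\min w(S')$ would not suffice. Separately, your target ``$u(x^*)\le w(x^*)$'' is reversed, since $w\le u$.

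\textbf{Steps 3--4 fail as stated.} For $\ell=0$, take the scenario equal to $u$ off a solution $T$ and $0$ on $T$. This shows any $(A,B)$-robust solution must satisfy $u(S\setminus T)\le B\cdot\MR$ for every solution $T$. A multiplicative bound $u(S)\le\lambda\delta\,u(x^*)$ gives no such control when all solutions share a heavy common part.

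Concrete counterexample: let the feasible sets be those containing $C$ and at least one of $a,g$, with $u_C=100$, $u_a=1$, $u_g=99$.
\begin{itemize}
\item The LP is integral ($\delta=1$), and an exact algorithm gives $\gamma=1$.
\item $\MR=1$, attained by $\{C,a\}$.
\item The LP optimum is $x^*=(1,0.99,0.01)$, so $g$ lies in the support and restricting to it changes nothing.
\item A $2$-approximation on $u$ may return $S=\{C,g\}$, since $199\le 202$.
\item For the scenario $w=(0,0,99)$ we get $w(S)=99>\lambda\delta\opt_w+\lambda(\delta\gamma+1)\MR=4$.
\end{itemize}
Your Step 4 split has no mechanism tying $S$ to the fractional optimum of $w$.

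The paper does not reprove the rounding. Once the $(1,\gamma)$-ASO exists, \Cref{th21} (Theorem 2.1 of Ganesh, Maggs and Panigrahi) with approximation ratio $\lambda$ and integrality gap $\delta$ yields $(\lambda\delta\cdot 1,\lambda\delta\gamma+\lambda)$. That is exactly the stated guarantee.
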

	
	(2) In the general $\ell$ case, we show that if a local search-based algorithm satisfies three properties that will be defined in \Cref{subsec:prerequisites}, we can turn it into a robust-approximation algorithm using a general theorem. These properties generalize the Steiner Tree robust approximation proof of \cite{GaneshRobustAlgorithmsTSP2023} by abstracting away the specificities of the local search algorithm.
	
	(3) This will allow us to conclude with the first robust approximation for Weighted $k$-Set Cover, letting $H_k$ be the $k$-th harmonic number:
	
	\begin{restatable*}{thm}{thmSetcover}
		\label{thm:setcover}
		The Weighted $k$-Set Cover problem admits a $(O(H_k^4k^2), O(H_k^3k))$-robust approximation.
	\end{restatable*}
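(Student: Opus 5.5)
The plan is to instantiate the general local-search framework announced in contribution (2) with a concrete local search algorithm for Weighted $k$-Set Cover. The natural candidate is the non-oblivious local search of the type used for $k$-Set Cover: it works on a modified (scaled) cost function that rewards large sets, and it improves the current cover by swapping a bounded number of sets at a time. Such algorithms give $O(H_k)$-type approximation guarantees. I would then verify the three properties of \Cref{subsec:prerequisites} for this algorithm, and read off $\alpha$ and $\beta$ from the general theorem.

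The structure mirrors the Steiner Tree argument of \cite{GaneshRobustAlgorithmsTSP2023}. We run the local search on a scenario-dependent cost that interpolates between $\ell$ and $u$: sets in the current solution are charged at $u$, and sets outside it at $\ell$. This is the standard worst-case scenario for a fixed solution $S$, namely $w_S = u$ on $S$ and $\ell$ elsewhere. Call the solution $S$ a local optimum when no improving move exists. The key steps, in order, are as follows.

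First, show that the regret of any solution $S$ is $r_S = \max_w (w(S)-\opt_w)$, and that this maximum is attained, up to the approximation factor of an oracle, at $w_S$. This gives $\MR \ge u(S^*\setminus S')\,$-type lower bounds via the minimum-regret solution $S^*$.

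Second, verify the first property: every improving move decreases the potential by a polynomial factor, so the algorithm terminates in polynomial time. This uses the standard trick of only accepting moves that improve by a factor $(1-\eps/\mathrm{poly})$.

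Third, verify the second property. There must be a family of moves indexed by the sets of any competitor solution $T$ (here $T=S^*$ or an optimal cover for a scenario $w$), such that every set of $T$ participates in $O(k)$ moves and every set of $S$ is removed in $O(H_k)$ moves. This is the usual charging of the $k$-Set Cover local search analysis: each element of a set of $T$ is covered by at most $k$ sets of $S$, and the non-oblivious weights contribute the harmonic factor.

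Fourth, verify the third property. Summing the non-improvement inequalities over the family of moves must yield $u(S\setminus T)\le c_1\,\ell(T\setminus S)+c_2\,u(S\cap T)$-style bounds with $c_1,c_2 = O(H_k^{2}k)$ or similar. Feeding these into the general theorem and combining with $\opt_w\ge$ the LP bound then gives $w(S)\le O(H_k^4k^2)\opt_w+O(H_k^3k)\MR$. The powers arise as follows: one factor of $H_k$ comes from the non-oblivious potential, one from comparing $\opt_w$ with the approximate scenario optimum, and the factors of $k$ come from the swap neighborhood size.

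The main obstacle is the third property. In the Steiner tree case, moves exchange paths and the charging is clean. For set cover, a move adds a set of $T$ and removes the sets of $S$ that become redundant, and the cost change mixes $u$-charged and $\ell$-charged sets. To handle this I would first bound the effect of one added set $t\in T$ through its elements. Then I would distribute each removed set's $u$-cost fractionally over its elements via the harmonic weights, accepting the loss of an extra $H_k$ and $k$ per level. That is why the guarantee degrades to $H_k^4k^2$ rather than matching the classical $H_k$.
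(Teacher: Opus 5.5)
Your top-level plan is the paper's plan: check the prerequisites of \Cref{subsec:prerequisites} for a non-oblivious local search for $k$-Set Cover, then read $(\alpha,\beta)$ off \Cref{thm:main}. The gap is that the three properties you set out to verify are not the hypotheses of \Cref{thm:main}, so finishing your steps would not let you invoke it. What actually has to be checked is the following.
\begin{enumerate}
\item The \emph{well-behaved potential} property: $\Phi(S)=\sum_{s\in S}\phi(s)$ with $\phi(s)=H_{|s|}w_s$, and $w\le\Phi\le H_k w$, which gives $\Lambda=H_k$. It is not polynomial termination; \textsc{DoubleApprox} handles that internally through the $\rho/(10n^2)$ threshold and the rounding of $\Phi'$.
\item The $d$-MMC property with $d=k$. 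For each $s\in S\setminus T$, pick $E_s\subs T$ with $|E_s|\le k$ covering $D(s)$, and put $\alpha=1$ on the move generated by $E_s$. That move removes $s$, so the covering mass is at least $1$; this is a lower bound, not the ``$O(H_k)$ removals'' you state. Each $t\in T\setminus S$ meets at most $k$ of the disjoint sets $D(s)$, so the packing mass is at most $k$. The total mass is $|S\setminus T|\le n$. The harmonic factor plays no role here.
\item \emph{Efficient good move retrieval}: this is just brute-force enumeration of the $O(n^k)$ moves, since $k$ is constant. It is not a summed inequality of the form $u(S\setminus T)\le c_1\ell(T\setminus S)+c_2u(S\cap T)$.
\end{enumerate}
Summing non-improvement inequalities over a move family, handling mixed $u$/$\ell$ charges, and identifying the worst-case scenario are all done once and for all inside the framework. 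That is the role of \Cref{general-convergence}, \Cref{lem510sc}, and the ASO with $w_i=u_ix_i+\ell_i(1-x_i)$ and $w'_i=\ell_i(1-x_i)$. So the ``main obstacle'' you describe does not arise. Likewise, running the local search on a solution-dependent cost ($u$ on $S$, $\ell$ elsewhere) is not what the framework does.

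Your exponent bookkeeping is also off. The bound comes from plugging $\lambda=H_k$ (the approximation ratio of \cite{GLLLocalSearchSetCover}), $\delta=H_k$ (the integrality gap of the standard LP), $\Lambda=H_k$ and $d=k$ into $(17\lambda\delta d\Lambda[2\Lambda(9d+1)+1]+\lambda\delta,\ \lambda[17\delta d\Lambda+1])$. This gives $\alpha=O(\lambda\delta\Lambda^2d^2)=O(H_k^4k^2)$ and $\beta=O(\lambda\delta\Lambda d)=O(H_k^3k)$. The factor $k$ is the packing constant $d$, not the neighborhood size; the neighborhood size only matters for making retrieval polynomial. Your account (one $H_k$ from the potential, one from an approximate optimum) cannot produce $H_k^4$, omits $\delta$, and misses that $\Lambda$ and $d$ appear squared in $\alpha$.

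Finally, you miss the one genuinely problem-specific step. The local search of \cite{GLLLocalSearchSetCover} runs on the downward closure $\bar\Omega$, whose extra sets do not carry independent interval weights. Projecting a robust solution back to $\Omega$ can therefore price a set at $u$ that was assumed to be priced at $\ell$. The paper fixes this by running the search directly on $\Omega$ with solution states $(S,D)$. Here $D$ assigns each chosen set a part of a partition of $\bigcup\Omega$, and moves are generated by at most $k$ pairwise disjoint sets of $\bar\Omega$. Without this reformulation, the algorithm does not operate on a linear selection problem over independently weighted elements, and \Cref{thm:main} does not apply.
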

	
	 The Weighted $k$-Set Cover problem is defined as follows. Given a set $\Omega$ of sets of cardinality at most $k$, along with a weight function $w:\Omega\rightarrow \RR^+$, we want to compute a cover $S\subs \Omega$ of $\bigcup \Omega$ such that $w(S) = \sum_{s\in S} w(s)$ is minimized. This problem is known to be hard to approximate with a ratio better than $\ln k - O(\ln \ln k)$~\cite{GLLLocalSearchSetCover}.
	
	We believe that this new formalization of properties of local search algorithms that lead to a robust approximation algorithm could lead to new avenues of research in local search-based approximation algorithms. Furthermore, this is the first robust approximation algorithm for an NP-hard problem since Ganesh, Maggs and Panigrahi published their robust approximations for TSP and Steiner Tree in 2020.
	
	\medskip\noindent
	\textbf{Structure of the paper.} After a brief overview of related work and an introduction of the definitions in \Cref{sec:definitions}, we start, in \Cref{sec:l-zero}, with a restricted case in which the lower bound weight vector $\ell$ is known to be $0$, where we show a method to get robust approximation algorithms, that also showcases the method of the general case. Then, in \Cref{sec:general}, we handle the more general case where $\ell \neq 0$. We first formally specify the required prerequisites and then analyze the robust approximation algorithm. Finally, in \Cref{sec:appsc}, we give an application of this generalized framework, for the Weighted $k$-Set Cover problem.
	
	\paragraph*{Related Work}
	\vspace*{-1mm}
	There is a large body of work on stochastic optimization \cite{ShapiroLecturesStochasticProgramming2021}, in which some data about the distribution of the constraints is known, but as mentioned in \cite{KouvelisRobustDiscreteOptimization1997}, this data may not be available, and furthermore, having expected or probable guarantees may not be sufficient if we want to design solutions that work as well as possible no matter the scenario.
	
	Research on optimization under worst-case uncertainty constraints dates back to at least the short paper by Soyster \cite{SoysterTechnicalNote1973}, in which he shows how to convert a linear program (LP) with inexact constraints into an exact auxiliary linear program. The approaches of \cite{FalkTechnicalNoteExact1976} and \cite{SinghConvexProgrammingSetinclusive1982} solve the problem in a similar fashion.
	In \cite{Ben-TalRobustSolutionsUncertain1999}, Ben-Tal and Nemirovski follow a similar idea but allow for ellipsoidal uncertainty, and therefore, the exact auxiliary program they use is a conic quadratic program instead of simply a linear one.
	Another approach is that from \cite{BertsimasPriceRobustness2004}, in which there is a parameter quantifying how many coefficients of the constraint matrix can change inside of their respective uncertainty interval. In each of these papers, it is possible to cast the uncertain linear program into some other program that is also solvable in polynomial time.
	
	In general, it is not possible to transform a program with uncertainty into another program which may be more complex, but that can be solved exactly using standard techniques, and therefore, the idea of min-max regret optimization has surveyed in \cite{AissiMinMaxMin2009} as another approach in which we measure a solution's performance by its difference in weight with the optimal solution on every scenario, and try to minimize the largest difference. In \cite{KasperskiApproximationAlgorithm2006}, Kasperski provides, to the best of our knowledge, the first approximation results for interval uncertainty, by showing how to compute a solution that has a regret that is bounded in terms of the minimal regret a solution can obtain. In a long series of articles summarized in \cite{KasperskiRobustDiscreteOptimization2016}, several complexity results and more approximation algorithms have been found for various well-known problems. In particular, while \cite{AronComplexityRobustSpanning2004} showed that the Robust Minimal Spanning Tree Problem is NP-hard, it admits a $(1,2)$-robust approximation algorithm \cite{KasperskiApproximationAlgorithm2006}. The technique is general and very elegant, as it simply consists in computing the optimal solution on the midpoint scenario (that is, $\frac{\ell+u}2$). However, its obvious limitation is that it can only be used in practice for problems that we can solve in polynomial-time.
	
	Our work mainly builds on the paper by Ganesh, Maggs and Panigrahi \cite{GaneshRobustAlgorithmsTSP2023}, which provides the first approximation results for NP-hard problems (Traveling Salesperson and Steiner Tree) in the min-max regret setting.
	
	\section{Robust linear optimization problems}\label{sec:definitions}
	
	In this article, we consider \emph{linear selection problems}, which we define as problems that can be modeled as 0-1 integer programs (IP) of the form
	\[
	\begin{array}{ll}
		\min & w^Tx \\
		& Ax\leq b\\
		& x \in \{0,1\}^n,
	\end{array}
	\]
	
	where $A$ and $b$ do not depend on $w$. We specify the problem by $\Pp=(n, \Omega, A, b)$, where $\Omega$ is a set of size $n$ such that $x_i$ represents entry $i$ of $\Omega$. A solution $S$ to $\Pp$ is determined by a solution $x$ to the IP, with $x$ the characteristic vector of $S$. In the rest of this article, we will assume that there exists a separation oracle for the $Ax\leq b$ constraints. This will be vital in the analysis of \Cref{alg:general-aso}, as we will use an approximate separation oracle (defined below) for a robust version of this IP, which also has the $Ax\leq b$ constraints.
	
	In order to study the robust variant of this problem, we use a robust formulation of $\Pp$, called the \emph{robust linear optimization problem} $\Pp_R$ which given an instance of $\Pp$ and two vectors $\ell, u \in \RR^n$ such that $\ell_i\leq u_i$ for all $i\in \Omega$ consists in finding
	$$\argmin_{\substack{x\in\{0,1\}^\Omega\\Ax\leq b}}\max_{w\in[\ell, u]} (w^Tx - \opt_w)$$
	
	(we denote $[\ell,u] = \{w\in\RR^n\mid\forall i\in\Omega, w_i\in[\ell_i, u_i]\}$, and $\opt_w$ is the optimal value of $\Pp$ in this instance, over weight function $w$).
	
	Our goal is to transform an approximation algorithm for problem $\Pp$ into a robust approximation algorithm for problem $\Pp_R$. As mentioned in the introduction, we say that a polynomial-time algorithm $\Aa$ is an {\em $(\alpha, \beta)$-robust approximation} for problem $\Pp_R$ if on any instance and for every pair of lower- and upper-bound on the weights $(\ell, u)$, the solution $\SOL$ computed by $\Aa$ is such that: $\forall w\in[\ell,u], w(\SOL) \leq \alpha \cdot \opt_w + \beta \cdot \MR,$ where $\MR$ is the minimal regret achievable by any feasible solution, that is, the optimal value of the robust problem $\Pp_R$.
	
	The core result of \cite{GaneshRobustAlgorithmsTSP2023} is to show that one way to get a robust algorithm is to find an \textit{approximate separation oracle} for the linear programming (LP) relaxation of $\Pp_R$ that follows:
	
	\begin{alignat*}{2}
		\min\quad r \qquad \qquad \qquad &&& \\
		\forall w\in[\ell, u],\quad  w^Tx &\leq \opt_w + r &&\quad \text{(regret constraints)} \\
		Ax&\leq b &&\quad\text{(problem constraints)} \\
		r &\geq 0&& \\
		x&\in{[}0,1{]}^n.&&
	\end{alignat*}
	
	Indeed, a usual separation oracle would not be enough here, because there is an infinite number of constraints (which could be reduced to exponential by looking only at weight functions that take extremal values on every element) that each involve an NP-hard optimum. Therefore, we look at the following relaxed definition:
	
	\begin{defn}[Approximate Separation Oracle]
		A polynomial-time algorithm $\Ss$ is said to be an $(\alpha, \beta)$-\emph{approximate separation oracle (ASO)} for the IP formulation of $\Pp_R$ if, when run on some fractional assignment of the variables $(x, r)$ of the LP, either:
		\begin{itemize}
			\item returns some constraint of the LP formulation that is unsatisfied by $(x,r)$, or
			\item returns \texttt{Feasible}, in which case $x$ is a feasible solution to $\Pp$ and furthermore, the regret constraints of the LP formulation of $\Pp_R$ are approximately satisfied:
			$$\forall w\in[\ell, u], w^Tx\leq \alpha\cdot \opt_w + \beta\cdot r.$$
		\end{itemize}
	\end{defn}
	
	With such an ASO, we can use Theorem 2.1 of \cite{GaneshRobustAlgorithmsTSP2023} to get a robust approximation algorithm:
	
	\begin{thm}[Theorem 2.1 of \cite{GaneshRobustAlgorithmsTSP2023}]\label{th21}
		If:
		\begin{itemize}
			\item there is a polynomial-time $\gamma$-approximation algorithm for $\Pp$,
			\item the integrality gap of the LP formulation of $\Pp$ is at most $\delta$, and
			\item there is a $(\alpha,\beta)$-ASO for the LP formulation of $\Pp_R$,
		\end{itemize}
		then there exists a polynomial-time $(\gamma\delta\alpha, \gamma\delta\beta+\gamma)$-robust approximation algorithm for the robust problem $\Pp_R$.
	\end{thm}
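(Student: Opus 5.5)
The argument runs the ellipsoid method with the ASO in place of an exact separation oracle, then rounds the resulting fractional point through a modified instance of $\Pp$ on which the $\gamma$-approximation and the integrality gap are used.

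\textbf{Step 1: approximate LP optimum.} We binary search on the value $r$. For a guess $r$, we run the ellipsoid method on the feasibility region $\{x : Ax\le b,\ x\in[0,1]^n,\ \text{regret constraints with this } r\}$, using $\Ss$ as the oracle. Whenever $\Ss$ returns a violated constraint we use it as a cutting plane. Every such constraint is a genuine constraint of the LP, so the true feasible region is never cut. Hence, if $r\ge \MR^*$, where $\MR^*\le\MR$ is the LP optimum, the ellipsoid cannot certify infeasibility. It must therefore end at a point on which $\Ss$ answers \texttt{Feasible}. Standard ellipsoid arguments, with the usual precision and polynomial-bit caveats, give polynomially many calls. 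We obtain $x^*$ with $x^*$ feasible for the LP relaxation of $\Pp$ and $w^Tx^*\le \alpha\opt_w+\beta r$ for all $w\in[\ell,u]$, with $r\le \MR^*$ up to negligible error.

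\textbf{Step 2: rounding.} We use the scenario $w'$ defined by $w'_i=\ell_i + (u_i-\ell_i)x^*_i$, following Ganesh--Maggs--Panigrahi. On $w'$ we run the $\gamma$-approximation and output $\SOL$. For an arbitrary $w\in[\ell,u]$ we want to bound $w(\SOL)$. Writing $w(\SOL)\le u(\SOL)$ is too lossy, so instead we take the adversarial scenario $\hat w$ for $\SOL$ relative to $x^*$: $\hat w_i=u_i$ on $\SOL$ and $\ell_i$ elsewhere (or directly $w$ on $\SOL$). We then compare costs through $w'$. Pointwise, $w(\SOL)-w'(\SOL)\le\sum_{i\in \SOL}(u_i-\ell_i)(1-x^*_i)$. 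Also $w'(\SOL)\le \gamma\opt_{w'}\le\gamma\delta\, w'^Tx^*$ by the integrality gap.

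\textbf{Step 3: combining.} Consider the scenario $\tilde w$ that equals $w$ on $\SOL$ and $\ell$ off $\SOL$. Its optimum satisfies $\opt_{\tilde w}\le \tilde w(\SOL)$, and the ASO guarantee at $\tilde w$ gives $\tilde w^Tx^*\le\alpha\opt_{\tilde w}+\beta r$. The main obstacle is making the terms $\sum_{\SOL}(u_i-\ell_i)(1-x_i^*)$ and $w'^Tx^*$ telescope against this inequality. The target is to show $w'^Tx^* \le \tilde w^Tx^* + (\text{terms bounded by } \opt_w)$ and $w(\SOL)\le \gamma w'^Tx^*+\ldots$. Concretely, the plan is to split $\Omega$ into $\SOL$ and its complement. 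On the complement, $w'$ and $\ell$ differ by $(u-\ell)x^*$, and this is absorbed by applying the ASO at the scenario $u$ restricted there. On $\SOL$ we use $\gamma\ge1$ to dominate $(u_i-\ell_i)(1-x_i^*)+w'_i\le u_i$-type terms. We then trade $\opt_{\tilde w}$ for $\opt_w$, which loses at most $w(\SOL)$-independent terms bounded by $\opt_w$. I expect the exact bookkeeping to yield $w(\SOL)\le\gamma\delta\alpha\,\opt_w+(\gamma\delta\beta+\gamma)\MR$. The additive $\gamma\MR$ comes from the rounding gap on the complement, bounded by $\MR$ via the minimum-regret solution's regret constraint at $w'$. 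This case analysis is the step I would spend the most effort on, mirroring the proof of Theorem 2.1 in \cite{GaneshRobustAlgorithmsTSP2023}.
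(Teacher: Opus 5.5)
Step~1 is fine. The proof breaks at the rounding, and not only because Step~3 is left as a plan: the weights $w'_i=\ell_i+(u_i-\ell_i)x^*_i$ are the wrong objective. Those are the weights used for \emph{separation} (as in Lemma~\ref{lem:55} and Algorithm~\ref{alg:general-aso}). As a rounding cost they point the wrong way: an element that $x^*$ avoids is priced near $\ell_i$, so the $\gamma$-approximation is drawn to exactly the elements $x^*$ does not pay for, whose realized cost may be $u_i$.

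Here is a concrete counterexample. Let $\Pp$ be ``choose at least one of $a,b$'', so $\gamma=\delta=1$, an exact oracle gives $\alpha=\beta=1$, and the promised guarantee is $(1,2)$. Take $a\in[0,M]$ and $b\in[1,2]$ with $M>5$.
\begin{itemize}
\item $\MR=2$, attained by $\{b\}$.
\item $x^*=(t,1-t)$ has fractional regret $\max\{2(1-t),(M-1)t\}$. So every $t$ slightly below $2/(M+1)$ is LP-feasible with $r<\MR$, and $r$ is arbitrarily close to the LP optimum.
\item For such $t$, $w'_a=Mt<2-t=w'_b$, so your algorithm outputs $\SOL=\{a\}$.
\item In the scenario $w=(M,1)$ you get $w(\SOL)=M>5=\opt_w+2\MR$.
\end{itemize}
So no bookkeeping can rescue Step~3. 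Its individual moves also fail:
\begin{itemize}
\item The ASO at a scenario equal to $u$ off $\SOL$ is bounded by the optimum of \emph{that} scenario, not by $\opt_w$.
\item $\opt_{\tilde w}\le\tilde w(\SOL)$ bounds $x^*$ in terms of $\SOL$, which is the wrong direction.
\item $w'^Tx^*$ can be close to $u^Tx^*$, which is not comparable to $\opt_w$.
\end{itemize}

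The missing idea is to round against $\delta x^*$ and to price elements the other way around. Set $c_i=u_i-(u_i-\ell_i)\min\{1,\delta x^*_i\}$, so $c_i\ge\ell_i\ge 0$.
\begin{itemize}
\item Splitting on the sign of $1-\delta x^*_i$ gives, for every integral $S$ and every $w\in[\ell,u]$, $w(S)-\delta w^Tx^*\le c(S)-\delta\ell^Tx^*$, with equality for the worst $w$.
\item For the min-regret solution $S^\star$, the integrality gap and LP-feasibility of $x^*$ give $\opt_w\le\delta w^Tx^*$. Hence $w(S^\star)-\delta w^Tx^*\le\MR$ for all $w$, i.e.\ $c(S^\star)\le\delta\ell^Tx^*+\MR$.
\item Running the $\gamma$-approximation on $c$ then gives, for every $w$,
\[ w(\SOL)\le\delta w^Tx^*+\gamma c(S^\star)-\delta\ell^Tx^*\le\delta w^Tx^*+(\gamma-1)\delta\ell^Tx^*+\gamma\MR\le\gamma\delta w^Tx^*+\gamma\MR. \]
The last step uses $\ell\le w$ and $\gamma\ge 1$.
\item The ASO guarantee $w^Tx^*\le\alpha\opt_w+\beta\MR$ then yields exactly $(\gamma\delta\alpha,\gamma\delta\beta+\gamma)$.
\end{itemize}
This also shows where each term comes from. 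The factor $\delta$ enters only when comparing $S^\star$ with $\delta x^*$. The additive $\gamma\MR$ is $\gamma$ times the regret of $S^\star$ against $\delta x^*$. This route reproduces the stated constants of the cited theorem of Ganesh, Maggs and Panigrahi.
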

	
	In the context of regret minimization, a natural notion introduced by \cite{GaneshRobustAlgorithmsTSP2023} is that of \emph{difference approximation}, a notion defined on the original problem $\Pp$ which is useful in the analysis of approximation algorithms for $\Pp_R$. Indeed, when we compare some solution $S$ to an optimal solution $\OPT$, the intersection $S\cap\OPT$ does not play a role in the regret, because it gets canceled out since on any scenario $w\in[\ell,u]$, $w(S)-\opt_w = w(S\setminus \OPT_w) - w(\OPT_w\setminus S)$:
	
	\begin{defn}[Difference approximation]
		We say that a solution $S$ is a \emph{$\gamma$-difference approximation} if $w(S\setminus \OPT_w)\leq \gamma w(\OPT_w\setminus S).$
	\end{defn}
	
	Finally, we will also say that, given two values $x, y\geq 0$ and some precision $\eps>0$, $x$ is a \emph{$(1+\eps)$-guess} of $y$ if $x\in[y, (1+\eps)y]$. This will be used in the algorithm to guess the values of some solutions to which we do not have explicit access.
	
	Now that the setting and some important definitions have been introduced, we can start with  the special case where $\ell=0$.
	
	\section{A difference approximation gives a robust approximation in the $\ell=0$ case}\label{sec:l-zero}
	
	In this section, we show the following theorem:
	
	\thmLZero
	
	In order to apply \Cref{th21}, we will show how to get an ASO for the $\ell=0$ case (but general $u$), therefore concluding that if $\Pp$ admits a polynomial-time $\gamma$-difference approximation algorithm $\Aa$ for some $\gamma\geq 1$, then it also admits a robust approximation algorithm.
	
	\noindent
	\begin{algorithm}
		\DontPrintSemicolon
		Check the problem constraints $Ax\leq b$ and return any violated constraint\label{line:check-p-lp}\;
		$w\gets$ cost function such that $\forall i\in\Omega, w_i = u_ix_i$\label{line:weight-vec}\;
		$S\gets$ output of $\Aa$ on $w$\label{line:sprime}\;
		\eIf{$\sum_{i\notin S} u_ix_i > r$\label{line:checkviolated}}{
			\Return{$\sum_{i\notin S} u_ix_i \leq r$ \texttt{is broken}}\label{line:constr-return}
		}{
			\Return{\texttt{Feasible}}
		}
		\caption{$\Pp_R\text{-ASO}(x,r)$: $(1,\gamma)$-approximate separation oracle for the $\Pp_R$ LP.}
		\KwData{$(x,r)$ a tentative solution to the $\Pp_R$ LP.}
		\label{alg:l0-aso}
	\end{algorithm}
	
	\begin{lemma}\label{lemP-ASO}
		Suppose that problem $\Pp$ admits a $\gamma$-difference approximation $\Aa$. Then, for any instance of $\Pp_R$ with weight bounds $\ell=0$ and $u$, there exists an algorithm $\Pp_R$-ASO which is a $(1, \gamma)$-ASO for the LP formulation of $\Pp_R$.
	\end{lemma}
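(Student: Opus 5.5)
We show that \Cref{alg:l0-aso} is a $(1,\gamma)$-ASO. There are two things to check. First, every constraint it returns is a genuine constraint of the LP that $(x,r)$ violates. Second, whenever it returns \texttt{Feasible}, we have $w^Tx \le \opt_w + \gamma r$ for every $w\in[0,u]$. Polynomial running time is immediate: checking $Ax\le b$ uses the assumed separation oracle, and $\Aa$ is called once.

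\textbf{Soundness of returned constraints.} \Cref{line:check-p-lp} only returns violated problem constraints. For \Cref{line:constr-return}, the inequality $\sum_{i\notin S}u_ix_i\le r$ has to be exhibited as a regret constraint. Take the scenario $w^S\in[0,u]$ with $w^S_i=u_i$ for $i\notin S$ and $w^S_i=0$ for $i\in S$; this lies in $[\ell,u]$ because $\ell=0$. Since $S$ is feasible, $\opt_{w^S}\le w^S(S)=0$, and weights are nonnegative, so $\opt_{w^S}=0$. The regret constraint for $w^S$ then reads $\sum_{i\notin S}u_ix_i\le r$, so it is a valid constraint of the LP, and it is violated in this branch.

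\textbf{Approximate feasibility.} Assume \texttt{Feasible} is returned, so $Ax\le b$ holds and $\sum_{i\notin S}u_ix_i\le r$. Fix $w\in[0,u]$ and let $O$ be an optimal solution for $w$. Splitting over $S$,
\[
w^Tx=\sum_{i\in S}w_ix_i+\sum_{i\notin S}w_ix_i\le \sum_{i\in S}u_ix_i + r .
\]
The task is to bound $\sum_{i\in S}u_ix_i$ by $\opt_w+(\gamma-1)r$ or better. Let $\bar w=(u_ix_i)_i$ be the weight vector from \Cref{line:weight-vec}. Since $S$ is a $\gamma$-difference approximation for $\bar w$ against any optimum, and hence against any feasible solution $O$ (the bound passes to every $O$ because $\bar w(S)\le\gamma\,\bar w(O)$-type comparisons still hold after cancelling $S\cap O$), we get
\[
\sum_{i\in S\setminus O}u_ix_i\le\gamma\sum_{i\in O\setminus S}u_ix_i\le \gamma\sum_{i\notin S}u_ix_i\le\gamma r.
\]
For $i\in S\cap O$ we use $u_ix_i\le u_i$, but this only helps if $w_i$ is close to $u_i$, which need not be true. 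I would handle this term differently. Since $w^Tx$ is linear in $w$ and $\opt_w$ is concave, it is enough to check the inequality for $w\in[0,u]$. For each $w$, also choose $O$ optimal for $w$, and bound $\sum_{i\in S\cap O}w_ix_i\le w(O)=\opt_w$ using $x_i\le1$. This means splitting $w^Tx$ three ways, over $S\cap O$, $S\setminus O$ and $\bar S$, and then:
\[
w^Tx\le \opt_w + \sum_{i\in S\setminus O}u_ix_i + \sum_{i\notin S}u_ix_i \le \opt_w+\gamma r + r .
\]
This gives $(1,\gamma+1)$.

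\textbf{Main obstacle.} The main obstacle is sharpening the constant to $\gamma$. Two sharper bounds are available: $\sum_{i\in O\setminus S}u_ix_i \le r - \sum_{i\notin S\cup O}u_ix_i$, and the combination of the $\bar S$ term with the difference-approximation term. The plan is to use these to show that the $\bar S$ and $S\setminus O$ contributions together are at most $\gamma r$ when $\gamma\ge1$. This works by charging $\sum_{i\notin S, i\notin O}w_ix_i$ together with $\sum_{i\in O\setminus S}$, the latter of which is absorbed into $\opt_w$. Concretely, the elements of $O\setminus S$ are counted in $\opt_w$ via $x_i\le 1$, so
\[
w^Tx\le w(O)+\sum_{i\in S\setminus O}u_ix_i+\sum_{i\notin S\cup O}u_ix_i\le \opt_w+\gamma\sum_{i\in O\setminus S}u_ix_i+\sum_{i\notin S\cup O}u_ix_i\le\opt_w+\gamma r .
\]
The last step uses $\gamma\ge1$ together with $\sum_{i\notin S}u_ix_i\le r$. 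This yields the $(1,\gamma)$ guarantee.

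One point needs care. The difference approximation is stated against $\OPT_{\bar w}$, but here it is applied with $O=\OPT_w$. So I must either check that the definition, as used, holds against any feasible solution, or replace $O$ by $\OPT_{\bar w}$ and redo the charging. The latter is the step I expect to require the most attention.
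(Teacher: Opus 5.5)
Your check that the returned inequality is a genuine regret constraint is correct, and it makes explicit something the paper leaves implicit. The gap is in the feasibility direction. Every chain you write, including the final one, applies the difference guarantee of $S$ for the weight $\bar w=(u_ix_i)_i$ against $O=\OPT_w$. Your justification for this, that a $\gamma$-difference approximation against $\OPT_{\bar w}$ is also one against every feasible $O$, is false.

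Here is a counterexample. Take the constraints ``at least one of $a_1,a_2$ and at least one of $b_1,b_2$'', $x_i=1$ for all $i$, and $u=\bar w$ with $\bar w(a_1)=\bar w(b_1)=1$, $\bar w(a_2)=5/2$, $\bar w(b_2)=3/2$. Then $\OPT_{\bar w}=\{a_1,b_1\}$, and $S=\{a_2,b_2\}$ is a valid $2$-difference approximation. Now take the scenario $w$ that lowers $b_2$ to $0$. Then $\OPT_w=\{a_1,b_2\}$, and $\bar w(S\setminus\OPT_w)=5/2>2=2\,\bar w(\OPT_w\setminus S)$.

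So both your $(1,\gamma+1)$ chain and your sharpened chain rest on a step that fails. The repair you defer (``replace $O$ by $\OPT_{\bar w}$ and redo the charging'') is precisely the step that lets a guarantee at the single weight $\bar w$ control every scenario. Separately, your remark about concavity of $\opt_w$ reduces nothing as written, since $[0,u]$ is the whole domain; it is not needed.

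The missing idea is to remove the scenario before invoking the difference guarantee. For any $w\in[0,u]$ and $O=\OPT_w$,
\[
w^Tx-\opt_w\;\le\;\sum_{i\notin O}u_ix_i\;=\;\bar w(\Omega\setminus O)\;\le\;\bar w(\Omega\setminus\OPT_{\bar w}).
\]
The last step holds because $\OPT_{\bar w}$ minimizes $\bar w$, and hence maximizes $\bar w(\Omega\setminus\cdot)$, over feasible solutions. Your charging then goes through verbatim with $\OPT_{\bar w}$ in place of $O$, using the definition exactly as stated together with $\gamma\ge1$:
\begin{align*}
\bar w(\Omega\setminus\OPT_{\bar w})&=\bar w(S\setminus\OPT_{\bar w})+\bar w(\Omega\setminus(S\cup\OPT_{\bar w}))\\
&\le\gamma\,\bar w(\OPT_{\bar w}\setminus S)+\bar w(\Omega\setminus(S\cup\OPT_{\bar w}))\le\gamma\,\bar w(\Omega\setminus S)\le\gamma r.
\end{align*}
This is the paper's proof, which runs the same chain contrapositively. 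A violated approximate constraint at some $(w',S')$ gives $\bar w(\Omega\setminus S')>\gamma r$. Hence $\bar w(\Omega\setminus\OPT_{\bar w})>\gamma r$, hence $\bar w(\Omega\setminus S)>r$, so the algorithm's test fires.
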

	\begin{proof}
		If there is a violation of a problem constraint, this is directly detected by Line~\ref{line:check-p-lp} of Algorithm~\ref{alg:l0-aso}. If no violation is found, let $S$ be the solution computed on Line~\ref{line:sprime}, and $w$ be the weight vector defined on Line~\ref{line:weight-vec} such that for all $i\in\Omega$, $w_i = u_ix_i$.
		
		The only other constraint violation that the algorithm can output is $\sum_{i\notin S'} u_ix_i \leq r$ on Line~\ref{line:constr-return}. Therefore, we show that if there is some weight function $w'$ and solution $S'$ such that $\sum_{i\in\Omega}w'_ix_i>w'(S')+\gamma r$, the algorithm sees that $\sum_{i\notin S} u_ix_i > r$ on Line~\ref{line:checkviolated}. So, we suppose that such $w', S'$ exist.
		
		We define $w^*$ as $w^*_i=\begin{cases}
			u_i & \text{if } i\notin S' \\
			0 & \text{otherwise}
		\end{cases}$ for all $i\in\Omega$. Then:
		$$w(\Omega\setminus S') = \sum_{i\in \Omega} w^*_ix_i - w^*(S') \geq  \sum_{i\in \Omega} w'_ix_i - w'(S') > \gamma r.$$
		
		Now, remark that $w(\Omega\setminus \OPT_w)\geq w(\Omega\setminus S')>\gamma r$. We can now compare $w(\Omega\setminus \OPT_w)$ to $w(\Omega\setminus S)$ using the fact that $S$ is a $\gamma$-difference approximation:
		\begin{align*}
			w(\Omega\setminus S) &= w(\Omega\setminus (S\cup \OPT_w))+w(\OPT_w\setminus S) \\
			&\geq w(\Omega\setminus (S\cup \OPT_w)) + \frac1\gamma w(S\setminus \OPT_w)\text{ ($\gamma$-difference approximation)} \\
			&\geq \frac{w(\Omega\setminus (S\cup \OPT_w)) + w(S\setminus \OPT_w)}\gamma
			\geq \frac1\gamma w(\Omega\setminus \OPT_w) > r.
		\end{align*}
	\end{proof}
	
	Since \Cref{lemP-ASO} gives an ASO, using \Cref{th21} directly implies \Cref{thm:l-zero}. This result is a simple example of how to use \Cref{th21} and ASOs in order to get robust approximation algorithms in the natural $\ell=0$ case.
	
	\section{The $\ell\neq 0$ case}\label{sec:general}
	
	In this section, we will provide a way to transform a local search approximation algorithm for some problem $\Pp$ to a robust approximation algorithm for the robust version of $\Pp$. The proofs are similar to the ones from the Steiner Tree proof of \cite{GaneshRobustAlgorithmsTSP2023}, but they have been generalized in a way that allows their use on any local search algorithm satisfying three conditions, that will be specified in the next subsection, which will lead to the first robust approximation algorithm for Weighted $k$-Set Cover presented in \Cref{sec:appsc}.
	
	We will assume that there is a local search approximation algorithm $\Ll$ for $\Pp$. We assume that algorithm $\Ll$ works on a potential function $\Phi$ that substitutes the weight function $w$ of the instance. Therefore, each element $i\in\Omega$ gets a potential $\phi(i)$ such that for any solution $S\subseteq \Omega$, $\Phi(S)=\sum_{i\in S} \phi(i)$. If $w=\Phi$, we say that the local search algorithm $\Ll$ is \emph{oblivious}.
	
	Starting from some solution $S\subseteq \Omega$, the algorithm has a set of moves it can make, which is a set $\Pi_S\subseteq 2^\Omega$ of solutions that can be reached in one local move from $S$, and selects the solution $S'\in\Pi_S$ that has the minimal potential. Algorithm $\Ll$ repeats this step until the potential change in one iteration becomes lower than some fixed threshold (this is what usually guarantees a polynomial running time). When we say that $\Ll$ has some $\gamma$-approximation guarantee, we mean that there is a proof that any solution $S$ that is approximately local minimal on $\Phi$ (that is, for all $S'\in\Pi_S$, $\Phi(S)-\Phi(S')\leq\eps$) is a $\gamma$-approximation of the optimal solution on $w$.
	
	\subsection{Prerequisites}\label{subsec:prerequisites}
	We start by specifying formally the three properties \Cref{ppt:bounded-potential}, \Cref{ppt:mmc} and \Cref{efficient-gm} that will be sufficient to get a robust approximation, together with some of their implications.
	
	The {\em well-behaved potential} property asks that the potential can be decomposed as a sum over the elements of $\Omega$, and allows us to transform potentials into weights by losing only a constant factor. This is probably the most constraining property, since it requires a potential that is very closely linked to the weight; figuring out how to relax this property would allow for a wider applicability of this analysis in the $\ell\neq 0$ case.
	
	The {\em minimal move covering} property provides a weight function $\alpha$ over the solutions that can be reached from a particular solution in one move. What this allows us to do is to link the potential of an optimal solution to the potential of the current solution, only through the moves available from the current solution that go ``toward'' the optimal solution (that is, they have some elements present in the optimal solution but not in the current solution). One way to think of this weight function is to take a $0$--$1$ function, which selects a subset of solutions that can be reached from the current solution in one move, in a way that covers the current solution completely and does not overlap too much (to avoid losing some constant by counting some elements too many times). This property implies \Cref{lem510sc}, which guarantees the existence of ``good'' moves relative to two different potential functions at the same time. This will be the fundamental building block of the robust approximation algorithm.
	
	Finally, the {\em efficient good move retrieval} property ensures that a ``good'' enough move can be found in polynomial time, thereby allowing an algorithm to use the result of \Cref{lem510sc}.
	
	\subsubsection{Well-Behaved Potential}
	
	The potential function we use to guide the local search toward a local minimum is a function of the weights of the elements of $\Omega$, and possibly of other properties of the instance.
	
	\begin{ppt}[Well-behaved potential]\label{ppt:bounded-potential}
		The potential function $\Phi\geq 0$ is said to be \emph{well-behaved} if it has the following properties:
		\begin{enumerate}
			\item There exists a function $\phi$ over $\Omega$ such that $\Phi(S)=\sum_{i\in S}\phi(i)$ for all $S\subseteq\Omega$.
			\item There exist two constants $\Lambda_1, \Lambda_2>0$ such that we have
			$\Lambda_1w\leq \Phi\leq \Lambda_2w.$
		\end{enumerate}
	\end{ppt}
	
	Since we will manipulate two potential functions $\Phi$ and $\Phi'$ in the algorithm analysis, we take the convention that they are always associated with functions $\phi, \phi'$ over $\Omega$ such that $\Phi(S)=\sum_{i\in S}\phi(i)$ and $\Phi'(S)=\sum_{i\in S}\phi'(i)$. In the rest of the paper, we set $\Lambda=\Lambda_2/\Lambda_1\geq 1$.
	
	The first item of this property implies the following lemma, proven in \cite{GaneshRobustAlgorithmsTSP2023}:
	
	\begin{lemma}[Lemma 5.12 from \cite{GaneshRobustAlgorithmsTSP2023}]\label{lem512}
		Let $A,B,C$ be three solutions to some instance. Then:
		$$\Phi(A)-\Phi(B) = [\Phi(A\setminus C)-\Phi(C\setminus A)]-[\Phi(B\setminus C)-\Phi(C\setminus B)].$$
	\end{lemma}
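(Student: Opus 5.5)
The identity follows from additivity of $\Phi$ (the first item of \Cref{ppt:bounded-potential}). The plan is to show that for any two solutions $X$ and $C$,
\[
\Phi(X\setminus C)-\Phi(C\setminus X) = \Phi(X)-\Phi(C),
\]
and then apply this with $X=A$ and with $X=B$. Subtracting the two resulting equalities makes the $\Phi(C)$ terms cancel, which gives exactly the claimed formula.

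To prove the auxiliary identity, I will write $X$ as the disjoint union $(X\setminus C)\cup(X\cap C)$. Since $\Phi(S)=\sum_{i\in S}\phi(i)$, this gives $\Phi(X)=\Phi(X\setminus C)+\Phi(X\cap C)$. In the same way, splitting $C=(C\setminus X)\cup(X\cap C)$ gives $\Phi(C)=\Phi(C\setminus X)+\Phi(X\cap C)$. Subtracting the second equation from the first removes the common term $\Phi(X\cap C)$. This is the same cancellation of the intersection that motivated the notion of difference approximation in \Cref{sec:definitions}.

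Finally I substitute: $[\Phi(A\setminus C)-\Phi(C\setminus A)]-[\Phi(B\setminus C)-\Phi(C\setminus B)] = (\Phi(A)-\Phi(C))-(\Phi(B)-\Phi(C)) = \Phi(A)-\Phi(B)$. There is no real obstacle here. The only point needing care is that every manipulation relies solely on the sum decomposition $\Phi(S)=\sum_{i\in S}\phi(i)$, so that all the quantities involved are finite sums over elements of $\Omega$. The argument is purely set-theoretic and does not need feasibility of $A$, $B$ or $C$, nor the bounds $\Lambda_1 w\le\Phi\le\Lambda_2 w$.
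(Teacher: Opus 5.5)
Your proof is correct and takes essentially the paper's approach: the paper gives no written proof, only noting that the identity follows from the first item of \Cref{ppt:bounded-potential} (additivity of $\Phi$) and citing \cite{GaneshRobustAlgorithmsTSP2023}. Cancelling $\Phi(X\cap C)$ to get $\Phi(X\setminus C)-\Phi(C\setminus X)=\Phi(X)-\Phi(C)$, then cancelling $\Phi(C)$, is exactly that argument written out.
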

	
	We can also use the well-behaved potential property to show the following lemma:
	\begin{lem}\label{lem-ls-dapx}
		Fix any $\gamma\geq 1$, and assuming that the local search algorithm $\Ll$ guarantees a $\gamma$-approximation. Then, any local minimum is also a $\gamma$-difference approximation.
	\end{lem}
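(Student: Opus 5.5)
The plan is to reduce the difference-approximation claim to the ordinary approximation guarantee of $\Ll$, applied to a modified instance in which the common part $S\cap\OPT_w$ costs nothing. Let $S$ be a local minimum of $\Ll$ for the weight function $w$, and let $\OPT_w$ be an optimal solution on $w$. Define $w'$ by $w'_i=0$ for $i\in S\cap\OPT_w$ and $w'_i=w_i$ otherwise. The goal is to show three things: (i) $\OPT_w$ is still optimal for $w'$; (ii) $S$ is still an (approximate) local minimum for the potential associated with $w'$; (iii) the $\gamma$-approximation guarantee on $w'$ is exactly the desired inequality.

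\emph{Step (i).} For any feasible solution $T$,
\[
w'(T)=w(T)-w(T\cap S\cap\OPT_w)\geq \opt_w-w(S\cap\OPT_w)=w'(\OPT_w),
\]
using $w(T)\geq\opt_w$ and $w(T\cap S\cap\OPT_w)\leq w(S\cap\OPT_w)$ (weights are nonnegative). Hence $\opt_{w'}=w'(\OPT_w)=w(\OPT_w\setminus S)$. Also $w'(S)=w(S\setminus\OPT_w)$.

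\emph{Step (iii), assuming (ii).} The $\gamma$-approximation guarantee on $w'$ then reads $w(S\setminus\OPT_w)\leq\gamma\, w(\OPT_w\setminus S)$. This is precisely the definition of a $\gamma$-difference approximation for $S$.

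\emph{Step (ii) is the main obstacle.} Here I use the first item of \Cref{ppt:bounded-potential}: the potential decomposes as $\Phi=\sum_i\phi(i)$. Let $\Phi'$ be the potential for $w'$, and assume it is obtained by lowering $\phi$ only on elements of $S\cap\OPT_w$ and leaving it unchanged elsewhere. Write $\Delta(i)=\phi(i)-\phi'(i)\geq 0$, supported on $S\cap\OPT_w$. Then for every move $S'\in\Pi_S$,
\[
\Phi'(S)-\Phi'(S')=\Phi(S)-\Phi(S')-\Delta\bigl((S\cap\OPT_w)\setminus S'\bigr)\leq\Phi(S)-\Phi(S')\leq\eps .
\]
So $S$ remains approximately locally minimal, since the move set $\Pi_S$ does not depend on the weights.

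The delicate point is justifying that lowering $w$ on elements of $S$ changes $\phi$ only on those elements, and only downward. This holds trivially in the oblivious case $\phi=w$, and more generally whenever $\phi(i)$ is a nondecreasing function of $w_i$ alone. If $\Phi$ depends on the instance more globally, I would instead apply the analysis underlying the approximation guarantee directly to the potential $\phi'$ defined by $\phi'(i)=0$ on $S\cap\OPT_w$ and $\phi'(i)=\phi(i)$ elsewhere. I would then check that this proof uses only local optimality with respect to the potential and the bounds $\Lambda_1w'\leq\Phi'\leq\Lambda_2w'$, both of which are preserved by this modification.
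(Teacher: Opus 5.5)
Your proof is correct and is essentially the paper's argument. You zero out $w$ on $S\cap\OPT_w$, check that $S$ remains $\eps$-locally minimal for the resulting potential $\Phi'$, and read $w(S\setminus\OPT_w)\le\gamma\, w(\OPT_w\setminus S)$ off the $\gamma$-approximation guarantee for $w'$. Your step (ii) is more explicit than the paper's, which only records $\Phi'(S\cap\OPT_w)\le\Lambda_2 w'(S\cap\OPT_w)=0$ and tacitly assumes, as you state openly, that $\phi'=\phi$ off $S\cap\OPT_w$ and that $\Pi_S$ does not depend on the weights. Two small simplifications are available. The bound $\Phi'\le\Lambda_2 w'$ already forces $\phi'=0$ on $S\cap\OPT_w$, so you do not need monotonicity of $\phi$ in $w_i$. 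And step (i) is stronger than necessary, since $\opt_{w'}\le w'(\OPT_w)$ already follows from the feasibility of $\OPT_w$.
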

	\begin{proof}
		Let $w$ be some weight function over $\Omega$, that is associated with a potential $\Phi$, and $\OPT_w$ an optimal solution on $w$ for problem $\Pp$. Let $S$ be a locally optimal solution for $\Ll$, that is: for all moves $S'\in\Pi_S$, $\Phi(S)-\Phi(S')\leq \eps$. We can create a new weight function $w'$ which is equal to $w$ except on $S\cap\OPT_w$, where it is set to $0$. It is also associated to a potential $\Phi'$. Remark that:
		\[\Phi'(S\cap\OPT_w)\leq \Lambda_2w'(S\cap\OPT_w) = 0.\]
		
		Thus, for all moves $S'\in\Pi_S$, we have $\Phi'(S)-\Phi'(S') \leq\eps$. Therefore, $S$ is also a local minimum on $\Phi'$, and is therefore a $\gamma$-approximation there, so: $w'(S)\leq\gamma w'(\OPT_w)$. By definition of $w'$, this is exactly: $w(S\setminus\OPT_w)\leq\gamma w(\OPT_w\setminus S)$.
	\end{proof}
	
	\subsubsection{Minimal Move Covering}
	
	We now need a property that links the cost of a current solution $S$ and a ``target'' solution $T$, which can be thought of as the optimal solution itself. Recall that we call $\Pi_S$ the set of solutions that we can reach from $S$ in one step of the local search. We look at $\Pi_S^T\subs \Pi_S$ that is defined as the set of moves $S'$ such that $(S'\setminus S)\cap T\neq\es$ and $(S'\setminus S)\subs T$, meaning that performing move $S'$ adds new elements of $T$ that were not before in $S$, and only elements from $T$. The following, somewhat technical property, really makes sense in light of its consequence \Cref{general-convergence}. What this property allows is to compare the potentials $\Phi(S)$ and $\Phi(T)$ through the moves in $\Pi_S^T$. In order for this to be possible, we ask that these moves can put into play every element in $S$ (otherwise, we could not count them; this is the covering property), and that they can introduce each element of $T$ in the next solution (otherwise, we could similarly not account for the weight of elements which are not used by the moves; this is the packing property):
	
	\begin{ppt}[Minimal Move Covering]\label{ppt:mmc}
		Let $d\in\NN$. A local search algorithm $\Ll$ has the {\em $d$-minimal move covering ($d$-MMC) property} if for any two solutions $S$ and $T$ there is a function $\alpha: \Pi_S^T \rightarrow \RR^+$ such that:
		\begin{enumerate}
			\item For all $i\in S\setminus T$, $\sum_{\substack{S'\in\Pi_S^T\\i\in S\setminus S'}} \alpha(S')\geq 1$ (covering of $S$).
			\item For all $i\in T\setminus S$, $\sum_{\substack{S'\in\Pi_S^T\\i\in S'\setminus S}} \alpha(S')\leq d$ (packing on $T$).
			\item $\sum_{S'\in\Pi_S^T} \alpha(S')\leq n$.
		\end{enumerate}
	\end{ppt}
	
	The goal here is to have $d$ as small as possible, since it will play the role of a sort of approximation ratio (apparent in \Cref{general-convergence}). However, for technical reasons that will become apparent in the analysis of the \textsc{GreedySwap} algorithm, we will have to take $d\geq \lambda$, where $\lambda$ is an approximation ratio for problem $\Pp$. This will be used when we compute the initial solution through a $\lambda$-approximation to ensure that it is also a $d$-approximation, thereby initializing the rest of the algorithm.
	
	Surprisingly, the $d$-MMC property implies the following lemma which informally claims that as soon as there is a large proportion $(1/\theta)$ of moves that do not significantly ($\leq\lambda$) improve the solution, we have a difference approximation. We set $\rho(S,S')=\frac{\Phi(S\setminus S')}{\Phi(S'\setminus S)}$ for concision. The reason we look at $\Phi(S\setminus S')$ here and not $\Phi(S'\setminus S)$ is that for a move $S'\in\Pi_S^T$, $\Phi(S\setminus S')$ is the amount of potential of $S$ that will be removed by performing the move $S'$. Therefore, this allows us to talk about the potential of $S$, rather than that of $S'$ (as would be the case with $\Phi(S'\setminus S)$); the comparison with $T$ will come from the $d$-MMC property.
	
	\begin{restatable}{lemma}{lemGenConv}\label{general-convergence}
		Let $S, T$ be solutions to some instance with potential $\Phi$ such that for some $\lambda,\theta\geq 1$, we have:
		$$\sum_{\substack{S'\in\Pi_S^T\\\rho(S,S')\leq\lambda}} \alpha(S')\Phi(S\setminus S') \geq \frac{1}\theta\sum_{S'\in\Pi_S^T}\alpha(S')\Phi(S\setminus S').$$
		
		If the $d$-MMC property is satisfied, then we have: $\Phi(S\setminus T)\leq d\theta\lambda \Phi(T\setminus S)$.
	\end{restatable}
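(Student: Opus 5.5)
The plan is to chain three inequalities: a lower bound on $\Phi(S\setminus T)$ from the covering property, the hypothesis of the lemma, and an upper bound from the packing property. Throughout, $\alpha$ is the function given by the $d$-MMC property for the pair $(S,T)$.

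\textbf{Lower bound via covering.} For any $S'\in\Pi_S^T$ we have $S'\setminus S\subseteq T$. Since $\Phi$ is a sum of nonnegative $\phi(i)$, the move removes $\Phi(S\setminus S')\ge\Phi((S\setminus T)\setminus S')$, so dropping elements of $S\cap T$ only decreases the sum. Exchanging the order of summation and using item 1 of \Cref{ppt:mmc} gives
\[
\sum_{S'\in\Pi_S^T}\alpha(S')\Phi(S\setminus S')\;\ge\;\sum_{i\in S\setminus T}\phi(i)\sum_{\substack{S'\in\Pi_S^T\\ i\in S\setminus S'}}\alpha(S')\;\ge\;\Phi(S\setminus T).
\]

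\textbf{Applying the hypothesis.} By the hypothesis, $\Phi(S\setminus T)\le\theta\sum_{S'\in\Pi_S^T,\,\rho(S,S')\le\lambda}\alpha(S')\Phi(S\setminus S')$. For each term with $\rho(S,S')\le\lambda$ we have $\Phi(S\setminus S')\le\lambda\,\Phi(S'\setminus S)$. (If $\Phi(S'\setminus S)=0$, the convention must be that $\rho\le\lambda$ forces $\Phi(S\setminus S')=0$; I would state this convention explicitly.) Hence
\[
\Phi(S\setminus T)\le\theta\lambda\sum_{S'\in\Pi_S^T}\alpha(S')\Phi(S'\setminus S),
\]
where extending the sum to all moves is valid because every term is nonnegative.

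\textbf{Upper bound via packing.} Since $S'\setminus S\subseteq T\setminus S$ for every $S'\in\Pi_S^T$, the same exchange of summation gives
\[
\sum_{S'}\alpha(S')\Phi(S'\setminus S)=\sum_{i\in T\setminus S}\phi(i)\sum_{\substack{S'\in\Pi_S^T\\ i\in S'\setminus S}}\alpha(S')\le d\,\Phi(T\setminus S),
\]
using item 2 of \Cref{ppt:mmc}. Combining the three steps yields $\Phi(S\setminus T)\le d\theta\lambda\,\Phi(T\setminus S)$.

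Item 3 of \Cref{ppt:mmc} is not needed here. The only delicate points are the two containment facts, which come directly from the definition of $\Pi_S^T$, and the handling of $\rho$ when the denominator is zero. Everything else is a routine double-counting argument.
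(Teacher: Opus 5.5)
Your proof is correct and follows essentially the same route as the paper: the paper introduces an auxiliary quantity $\sigma$ and shows $\sigma\ge 0$, which amounts to your step of applying the hypothesis together with the per-move bound $\Phi(S\setminus S')\le\lambda\,\Phi(S'\setminus S)$ and then extending the sum to all moves. Both proofs finish the same way, with the packing upper bound $d\,\Phi(T\setminus S)$ and the covering lower bound $\Phi(S\setminus T)$. The paper relies on the same zero-denominator convention for $\rho$ without stating it, so making it explicit, as you propose, is a small improvement.
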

		\begin{proof}
		Assume that $\sum_{\substack{S'\in\Pi_S^T\\\rho(S,S')\leq\lambda}} \alpha(S')\Phi(S\setminus S') \geq \frac{1}\theta\sum_{S'\in\Pi_S^T}\alpha(S')\Phi(S\setminus S')$.
		
		We start by forming the following expression, summing over every move $S'\in\Pi_S^T$ such that $\rho(S,S')\leq\lambda$, and show that this sum is nonnegative:
		
		$$\sigma=\sum_{\substack{S'\in\Pi_S^T\\\rho(S,S')\leq\lambda}}\alpha(S')\Phi(S'\setminus S)-\frac1\lambda\sum_{\substack{S'\in\Pi_S^T\\\rho(S,S')\leq\lambda}}\alpha(S')\Phi(S\setminus S').$$
		
		From the hypothesis, we have:
		\begin{align*}
			\sigma&\leq \sum_{\substack{S'\in\Pi_S^T\\\rho(S,S')\leq\lambda}}\alpha(S')\Phi(S'\setminus S)-\frac1{\theta\lambda}\sum_{\substack{S'\in\Pi_S^T}}\alpha(S')\Phi(S\setminus S') \\
			&\leq \sum_{\substack{S'\in\Pi_S^T}}\alpha(S')\Phi(S'\setminus S)-\frac1{\theta\lambda}\sum_{\substack{S'\in\Pi_S^T}}\alpha(S')\Phi(S\setminus S')\,.
		\end{align*}
		
		And also, since $\sum_{\substack{S'\in\Pi_S^T\\\rho(S,S')\leq\lambda}}\alpha(S')\Phi(S'\setminus S)\geq\frac1\lambda\sum_{\substack{S'\in\Pi_S^T\\\rho(S,S')\leq\lambda}}\alpha(S')\Phi(S\setminus S')$:
		$$\sigma \geq \frac1\lambda\sum_{\substack{S'\in\Pi_S^T\\\rho(S,S')\leq\lambda}}\alpha(S')\Phi(S\setminus S')-\frac1\lambda\sum_{\substack{S'\in\Pi_S^T\\\rho(S,S')\leq\lambda}}\alpha(S')\Phi(S\setminus S') = 0.$$
		
		Therefore, we have:
		$$\sum_{\substack{S'\in\Pi_S^T}}\alpha(S')\Phi(S'\setminus S)
		\geq \frac1{\theta\lambda}\sum_{\substack{S'\in\Pi_S^T}}\alpha(S')\Phi(S\setminus S').$$
		
		Now, using the $d$-MMC property, we can transform the right- and left-hand sides to get the potentials we are interested in. We start with the left-hand side (the first inequality comes from the definition of $\Pi_S^T$):
		\begin{align*}
			\sum_{\substack{S'\in\Pi_S^T}}\alpha(S')\Phi(S'\setminus S) &= \sum_{i\in T\setminus S}\sum_{\substack{S'\in\Pi_S^T\\i\in S'\setminus S}}\alpha(S')\phi(i) \\
			&= \sum_{i\in T\setminus S}\phi(i)\sum_{\substack{S'\in\Pi_S^T\\i\in S'\setminus S}}\alpha(S') \leq d\Phi(T\setminus S) \text{ (by (2) of $d$-MMC)}.
		\end{align*}
		
		Now, with the right-hand side, the argument is similar:
		\begin{align*}
			\frac1{\theta\lambda}\sum_{\substack{S'\in\Pi_S^T}}\alpha(S')\Phi(S\setminus S') &= \frac1{\theta\lambda}\sum_{i\in S}\sum_{\substack{S'\in\Pi_S^T\\i\in S\setminus S'}}\alpha(S')\phi(i) 
			\geq \frac1{\theta\lambda}\sum_{i\in S\setminus T}\phi(i)\sum_{\substack{S'\in\Pi_S^T\\i\in S\setminus S'}}\alpha(S')\\
			&\geq \frac1{\theta\lambda}\Phi(S\setminus T)\quad\text{(by (1) of the $d$-MMC property)}.
		\end{align*}
		
		This concludes: $d\theta\lambda \Phi(T\setminus S)\geq \Phi(S\setminus T)$.
	\end{proof}
	
	This will be used as follows: as long as we do not have a satisfying difference-approximation, there will be many moves that add some parts of the optimal solution which are ``good'' in the sense that they significantly decrease the potential.
	
	This implies the following lemma that will be the version used in the analysis of the algorithm to show that the algorithm makes progress. In the following, we assume that $\Ll$ has the $d$-MMC property.
	
	\begin{restatable}[generalizes Lemma 5.10 from \cite{GaneshRobustAlgorithmsTSP2023}]{lemma}{lemFiveTen}\label{lem510sc}
		Let $S, T$ be solutions to an instance of problem $\Pp$ with two potentials $\Phi$ and $\Phi'$, such that for some $\Gamma>1$, we have $\Phi(S\setminus T)>d\Gamma \Phi(T\setminus S)$. Fix any $\eps\in(0,\sqrt\Gamma-1)$. Then, there exists some move $S'\in \Pi_S^T$ such that (setting $r=S\setminus S'$ and $a=S'\setminus S$):
		$$\frac{(1+\eps)\Phi'(a)-\Phi'(r)}{\Phi(r) - (1+\eps)\Phi(a)} \leq \frac{d(1+\eps)\Gamma}{(\sqrt\Gamma-1)(\sqrt\Gamma-1-\eps)} \frac{\Phi'(T\setminus S)}{\Phi(S\setminus T)}$$
		and $\Phi(r)-(1+\eps)\Phi(a)\geq\frac1{n^2} \Phi(S\setminus T)$.
	\end{restatable}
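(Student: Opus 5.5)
We argue by contradiction against the contrapositive of \Cref{general-convergence}, combined with an averaging argument over the weights $\alpha(S')$ given by the $d$-MMC property (\Cref{ppt:mmc}). Write $r_{S'}=S\setminus S'$ and $a_{S'}=S'\setminus S$ for $S'\in\Pi_S^T$. Set $\lambda=\sqrt\Gamma/(1+\eps)$ and $\theta=\sqrt\Gamma$. Since $\eps<\sqrt\Gamma-1$, we have $\lambda>1$. Also $d\theta\lambda=d\Gamma/(1+\eps)<d\Gamma$, so the hypothesis $\Phi(S\setminus T)>d\Gamma\Phi(T\setminus S)$ contradicts the conclusion of \Cref{general-convergence}. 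Hence its hypothesis fails:
$$\sum_{S'\in\Pi_S^T,\ \rho(S,S')\le\lambda}\alpha(S')\Phi(r_{S'}) < \frac1\theta\sum_{S'\in\Pi_S^T}\alpha(S')\Phi(r_{S'}).$$
Let $G$ be the set of \emph{good} moves $S'\in\Pi_S^T$ with $\rho(S,S')>\lambda$. Then
$$\sum_{S'\in G}\alpha(S')\Phi(r_{S'})\ge\Bigl(1-\tfrac1{\sqrt\Gamma}\Bigr)\sum_{S'}\alpha(S')\Phi(r_{S'})\ge\Bigl(1-\tfrac1{\sqrt\Gamma}\Bigr)\Phi(S\setminus T),$$
where the last step uses covering property (1), exactly as in the proof of \Cref{general-convergence}.

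For $S'\in G$ we have $(1+\eps)\Phi(a_{S'})<\frac{1+\eps}{\lambda}\Phi(r_{S'})$, so the denominator satisfies $D(S'):=\Phi(r_{S'})-(1+\eps)\Phi(a_{S'})\ge\bigl(1-\frac{(1+\eps)}{\sqrt\Gamma}\bigr)\Phi(r_{S'})>0$. Summing with weights,
$$\sum_{G}\alpha D\ge\frac{(\sqrt\Gamma-1)(\sqrt\Gamma-1-\eps)}{\Gamma}\Phi(S\setminus T).$$
For the numerator $N(S')=(1+\eps)\Phi'(a_{S'})-\Phi'(r_{S'})\le(1+\eps)\Phi'(a_{S'})$, packing property (2) applied to $\Phi'$ gives $\sum_G\alpha N\le(1+\eps)\sum_{S'}\alpha(S')\Phi'(a_{S'})\le d(1+\eps)\Phi'(T\setminus S)$. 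Thus
$$\frac{\sum_G\alpha N}{\sum_G\alpha D}\le\frac{d(1+\eps)\Gamma}{(\sqrt\Gamma-1)(\sqrt\Gamma-1-\eps)}\cdot\frac{\Phi'(T\setminus S)}{\Phi(S\setminus T)}=:R.$$
The main obstacle is that we need a single move that satisfies this ratio bound and also has a large denominator. A plain averaging argument only gives some $S'\in G$ with $N/D\le R$, since all $D>0$. To secure the second condition, restrict to $G'=\{S'\in G: D(S')\ge\frac1{n^2}\Phi(S\setminus T)\}$.

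Property (3) gives $\sum\alpha\le n$, so the discarded moves contribute $\sum_{G\setminus G'}\alpha D<\frac1n\Phi(S\setminus T)$. This loss is at most a small fraction of $\sum_G\alpha D$, though only with slack: it requires $n$ large relative to $\Gamma/((\sqrt\Gamma-1)(\sqrt\Gamma-1-\eps))$. Alternatively, the $1/n^2$ can be absorbed by slightly shrinking $\eps$ in the choice of $\lambda$. Since $N$ may be negative, dropping moves from $G$ can lower the numerator sum, so I would bound $\sum_{G'}\alpha N$ above directly by the packing bound, which still holds for subsets because we dropped $-\Phi'(r)\le0$.

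Averaging over $G'$ then yields $S'\in G'$ with $N(S')/D(S')\le\sum_{G'}\alpha N/\sum_{G'}\alpha D$. This move satisfies both claims, up to the constant adjustment just described. I expect that adjustment, rather than the averaging itself, to be the delicate bookkeeping step.
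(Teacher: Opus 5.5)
Your skeleton matches the paper's: restrict to moves with large $\rho(S,S')$ and $D(S')\ge\frac1{n^2}\Phi(S\setminus T)$, bound the numerator by packing, lower-bound the denominator through the contrapositive of \Cref{general-convergence} plus covering, pay $\frac1n\Phi(S\setminus T)$ via property (3), and average.

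\textbf{The failing step.} Your per-move bound on $D(S')$ is wrong for the choice $\lambda=\sqrt\Gamma/(1+\eps)$. Membership in $G$ only gives $\Phi(a_{S'})<\frac1\lambda\Phi(r_{S'})=\frac{1+\eps}{\sqrt\Gamma}\Phi(r_{S'})$. Hence $D(S')>\bigl(1-\frac{(1+\eps)^2}{\sqrt\Gamma}\bigr)\Phi(r_{S'})$, not $\bigl(1-\frac{1+\eps}{\sqrt\Gamma}\bigr)\Phi(r_{S'})$. When $(1+\eps)^2<\sqrt\Gamma$, this only yields the larger constant $\frac{d(1+\eps)\Gamma}{(\sqrt\Gamma-1)(\sqrt\Gamma-(1+\eps)^2)}$. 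When $(1+\eps)^2\ge\sqrt\Gamma$, which the hypothesis $\eps<\sqrt\Gamma-1$ allows, $G$ may contain moves with $D(S')\le0$. Then neither your lower bound on $\sum_G\alpha D$ nor the nonemptiness of $G'$ follows.

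The slack $d\theta\lambda<d\Gamma$ was never needed. The hypothesis $\Phi(S\setminus T)>d\Gamma\Phi(T\setminus S)$ is strict, so it already contradicts the conclusion of \Cref{general-convergence} with $\lambda=\theta=\sqrt\Gamma$. That is the paper's choice, and with it all the bounds you display are correct.

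\textbf{The $1/n$ loss.} This loss is shared with the paper. The paper ends with the constant $\frac{(1+\eps)d}{(\sqrt\Gamma-1)(\sqrt\Gamma-1-\eps)/\Gamma-1/n}$ and simply says the result follows for $n$ large enough. Absorbing it through the $(1+\eps)$ factor in $\lambda$ cannot work. \Cref{general-convergence} is used only qualitatively, so pushing $\lambda$ below $\sqrt\Gamma$ does not improve the bound $\sum_G\alpha(S')\Phi(r_{S'})\ge(1-\frac1{\sqrt\Gamma})\Phi(S\setminus T)$; it only weakens the bound on $D$.

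A real margin comes from applying packing quantitatively with $\lambda$ slightly \emph{above} $\sqrt\Gamma$. For every $\lambda$, the moves with $\Phi(r_{S'})\le\lambda\Phi(a_{S'})$ satisfy $\sum\alpha(S')\Phi(r_{S'})\le\lambda d\Phi(T\setminus S)<\frac\lambda\Gamma\Phi(S\setminus T)$. Taking $\lambda=\sqrt{(1+\eps)\Gamma}$ then gives $\sum_G\alpha D>\frac{(\sqrt\Gamma-\sqrt{1+\eps})^2}{\Gamma}\Phi(S\setminus T)$. This exceeds the target $\frac{(\sqrt\Gamma-1)(\sqrt\Gamma-1-\eps)}{\Gamma}\Phi(S\setminus T)$ by $\frac{(\sqrt{1+\eps}-1)^2}{\sqrt\Gamma}\Phi(S\setminus T)$. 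That margin covers the $\frac1n\Phi(S\setminus T)$ loss as soon as $n\ge\sqrt\Gamma/(\sqrt{1+\eps}-1)^2$, giving the stated constant exactly.
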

	\begin{proof}
		With $S$ fixed, we set $r_{S'}=S\setminus S'$ and $a_{S'}=S'\setminus S$.
		
		We follow the averaging argument of \cite{GaneshRobustAlgorithmsTSP2023}. Consider the quantity:
		$$R=\frac{\sum_{\substack{S'\in \Pi_S^T\\ \Phi'(r_{S'})>\sqrt\Gamma \Phi(a_{S'})\\ \Phi(r_{S'})-(1+\eps)\Phi(a_{S'})\geq\frac1{n^2} \Phi(S\setminus T)}} \alpha(S')[(1+\eps)\Phi'(a_{S'})-\Phi'(r_{S'})]}{\sum_{\substack{S'\in \Pi_S^T\\ \Phi(r_{S'})>\sqrt\Gamma \Phi(a_{S'})\\\Phi(r_{S'})-(1+\eps)\Phi(a_{S'})\geq\frac1{n^2} \Phi(S\setminus T)}} \alpha(S')[\Phi(r_{S'})-(1+\eps)\Phi(a_{S'})]},$$
		
		which we want to upper bound. We start by bounding the numerator by $d(1+\eps)\Phi'(T\setminus S)$. Remark that:
		\begin{align*}
			&\sum_{\substack{S'\in \Pi_S^T\\ \Phi'(r_{S'})>\sqrt\Gamma \Phi(a_{S'})\\ \Phi(r_{S'})-(1+\eps)\Phi(a_{S'})\geq\frac1{n^2} \Phi(S\setminus T)}} \alpha(S')[(1+\eps)\Phi'(a_{S'})-\Phi'(r_{S'})]\\
			\qquad&\leq (1+\eps)\sum_{\substack{S'\in \Pi_S^T\\ \Phi(r_{S'})>\sqrt\Gamma \Phi(a_{S'})\\ \Phi(r_{S'})-(1+\eps)\Phi(a_{S'})\geq\frac1{n^2} \Phi(S\setminus T)}} \alpha(S')\Phi'(a_{S'}) \\
			&= (1+\eps)\sum_{i\in T}\phi'(i) \sum_{\substack{S'\in \Pi_S^T\\ \Phi(r_{S'})>\sqrt\Gamma \Phi(a_{S'})\\ \Phi(r_{S'})-(1+\eps)\Phi(a_{S'})\geq\frac1{n^2} \Phi'(S\setminus T)\\i\in a_{S'}}} \alpha(S') \\
			&= (1+\eps)\sum_{i\in T\setminus S}\phi'(i) \sum_{\substack{S'\in \Pi_S^T\\ \Phi(r_{S'})>\sqrt\Gamma \Phi(a_{S'})\\ \Phi(r_{S'})-(1+\eps)\Phi(a_{S'})\geq\frac1{n^2} \Phi'(S\setminus T)\\i\in a_{S'}}} \alpha(S') \\
			&\leq (1+\eps)d\Phi'(T\setminus S).
		\end{align*}
		
		The second to last equality comes from the fact that by definition, for every move $S'$, we have $a_{S'}\cap S=\es$. This proves the claimed upper bound.
		
		Now, we lower bound the denominator. Observe that:
		
		\begin{ceqn}
			\begin{multline*}
				\sum_{\substack{S'\in \Pi_S^T\\ \Phi(r_{S'})>\sqrt\Gamma \Phi(a_{S'})\\\Phi(r_{S'})-(1+\eps)\Phi(a_{S'})\geq\frac1{n^2} \Phi(S\setminus T)}} \alpha(S')[\Phi(r_{S'})-(1+\eps)\Phi(a_{S'})]\\
				\geq \sum_{\substack{S'\in \Pi_S^T\\ \Phi(r_{S'})>\sqrt\Gamma \Phi(a_{S'})}} \alpha(S')[\Phi(r_{S'})-(1+\eps)\Phi(a_{S'})] \\
				- \sum_{\substack{S'\in \Pi_S^T\\\Phi(r_{S'})-(1+\eps)\Phi(a_{S'})<\frac1{n^2} \Phi(S\setminus T)}} \alpha(S')[\Phi(r_{S'})-(1+\eps)\Phi(a_{S'})].
			\end{multline*}
		\end{ceqn}

		We start by upper bounding the second term of the right-hand side of the equation above:
		
		\begin{align*}
			&\sum_{\substack{S'\in \Pi_S^T\\\Phi(r_{S'})-(1+\eps)\Phi(a_{S'})<\frac1{n^2} \Phi(S\setminus T)}} \alpha(S')[\Phi(r_{S'})-(1+\eps)\Phi(a_{S'})]\\
			\qquad &\leq \frac1{n^2}\Phi(S\setminus T)\sum_{\substack{S'\in \Pi_S^T\\\Phi(r_{S'})-(1+\eps)\Phi(a_{S'})<\frac1{n^2} \Phi(S\setminus T)}} \alpha(S')\\
			\qquad &\leq \frac{\Phi(S\setminus T)}{n}\text{ by condition (3) of the $d$-MMC property}.
		\end{align*}

		We will now use \Cref{general-convergence} (with $\lambda=\theta=\sqrt\Gamma$) to lower bound the first term:
		
		\begin{align*}
			 &\sum_{\substack{S'\in \Pi_S^T\\ \Phi(r_{S'})>\sqrt\Gamma \Phi(a_{S'})}} \alpha(S')[\Phi(r_{S'})-(1+\eps)\Phi(a_{S'})]\\
			\qquad&\geq \sum_{\substack{S'\in \Pi_S^T\\ \Phi(r_{S'})>\sqrt\Gamma \Phi(a_{S'})}} \alpha(S')[\Phi(r_{S'})-\frac{(1+\eps)}{\sqrt\Gamma}\Phi(r_{S'})]\\
			&= \frac{\sqrt\Gamma - 1 - \eps}{\sqrt\Gamma}\sum_{\substack{S'\in \Pi_S^T\\ \Phi(r_{S'})>\sqrt\Gamma \Phi(a_{S'})}} \alpha(S')\Phi(r_{S'}) \\
			&\geq \frac{(\sqrt \Gamma-1)(\sqrt\Gamma - 1 - \eps)}{\Gamma}\sum_{\substack{S'\in \Pi_S^T}} \alpha(S')\Phi(r_{S'}).
		\end{align*}
		
		We show that $\sum_{S'\in\Pi_S^T} \alpha(S')\Phi(r_{S'})\geq\Phi(S\setminus T)$. To do so, remark that:
		\[
		\sum_{S'\in \Pi_S^T} \alpha(S')\Phi(r_{S'}) = \sum_{i\in S} \phi(i) \sum_{\substack{S' \in \Pi_S^T\\ i\in r_{S'}}} \alpha(S')
		\geq \sum_{i\in S} \phi(i) 
		= \Phi(S) \geq \Phi(S\setminus T).
		\]
		
		Therefore, we conclude that:
		
		$$R\leq \frac{(1+\eps)d}{\frac{(\sqrt \Gamma-1)(\sqrt\Gamma - 1 - \eps)}{\Gamma}-\frac1n}\frac{\Phi'(T\setminus S)}{\Phi(S\setminus T)},$$
		
		and the result follows for $n$ large enough.
	\end{proof}
	
	\subsubsection{Efficient Good Move Retrieval}
	
	We want to be able to find a move that is good enough, {\em i.e.}, that is close to being as good (for both potentials $\Phi$ and $\Phi'$) as the one provided by the previous lemma.
	
	\begin{ppt}[generalizes Lemma 5.11 of \cite{GaneshRobustAlgorithmsTSP2023}]\label{efficient-gm}
		Let $\eps>0$. Given two potentials $\Phi$ and $\Phi'$, a solution $S$ and some bounds $W,W' > 0$, assume there is a move $S'\in\Pi_S$ such that $\Phi(S'\setminus S)-\Phi(S\setminus S')\leq W$ and $\Phi'(S'\setminus S)-\Phi'(S\setminus S')\leq W'$. Then, there is a polynomial-time algorithm that finds a move $S''$ such that:
		\begin{itemize}
			\item $\Phi(S''\setminus S) - \Phi(S\setminus S'')\leq(1+\eps)W$ and
			\item $\Phi'(S''\setminus S)-\Phi'(S\setminus S'')\leq (1+\eps)W'$.
		\end{itemize}
	\end{ppt}
	
	\subsection{The robust approximation algorithm}
	
	We show that a local search that satisfies the three required properties \Cref{ppt:bounded-potential}, \Cref{ppt:mmc} and \Cref{efficient-gm} of the previous subsection can be used to get a robust approximation algorithm by providing an ASO for the LP formulation of $\Pp_R$ and concluding using \Cref{th21}.
	
	We follow the idea of alternate minimization from \cite{GaneshRobustAlgorithmsTSP2023}, which optimizes a solution over two different weight functions by alternating between a forward phase and a backward phase, both concentrated on optimizing one weight function while ensuring that the other does not get out of control. For both phases, the \textsc{GreedySwap} algorithm~\ref{fig:greedyswap} extracts a ``good move'' using the efficient good move retrieval property of the local search. The \textsc{DoubleApprox} algorithm~\ref{fig:doubleapprox} is essentially the same as in \cite{GaneshRobustAlgorithmsTSP2023}, with potentials instead of weight functions, and some constants changed to $d$ and $\Lambda$. However, the \textsc{GreedySwap} algorithm~\ref{fig:greedyswap} is changed to be more general, and relies on \Cref{efficient-gm} to actually deal with problem $\Pp$.
	
	\begin{figure}[t]
		\begin{algorithm}[H]
			\SetNoFillComment
			\DontPrintSemicolon
			\KwData{A set $\Omega$ of items and potentials $\Phi, \Phi'$ for which all $\phi'(i)$ are a multiple of $\frac{\eps}{n}\chi'$, $\chi>0$ such that $\chi \in [\Phi(\sol), \sqrt{1+\eps}\cdot \Phi(\sol)]$, $\chi'>0$ such that $\chi' \in [\Phi'(\sol), \sqrt{1+\eps}\cdot \Phi'(\sol)]$, constants $\Gamma, \Gamma', \kappa,d,\eps$.}
			$i \leftarrow 0$\;
			\tcc{The following constants are used to guarantee polynomial-time termination}
			$\zeta' = \frac{d(1+\eps)\Gamma'}{(\sqrt{\Gamma'}-1)(\sqrt{\Gamma'}-1-\eps)(d\Gamma'-1)(d\Gamma-1)}$\;
			$\eta=\left[\min\left\{\frac{d\Gamma - 1}{2d\Gamma} ,\frac{(d\Gamma - 1)(\sqrt{\Gamma}-1)(\sqrt{\Gamma}-1-\eps)\kappa}{d^2(1+\eps)\Gamma^2}\right\} - (e^{\zeta' (d\Gamma'+\kappa+1+\eps)} - 1)\right]/\left[1 + \frac{d\Gamma}{d\Gamma - 1} (e^{\zeta' (d\Gamma'+\kappa+1+\eps)} - 1)\right]$\;
			$I = 2(\lceil  \log \frac{n \max_{i\in\Omega} \phi(i)}{\min_{\substack{i\in\Omega\\\phi(i)>0}} \phi(i)} / \log (1+\eta) \rceil+1)$\;
			$\alg^{(0)} \leftarrow $ $\lambda$-approximation of optimal solution with respect to $\Phi'$ for $\lambda< d\Gamma'$ \label{line:approx}\;
			\While{$i\leq I$}{\label{line:mainwhile} \tcc{Iterate over all guesses $\rho$ for $\Phi(\alg^{(i)} \setminus \sol)$}
				$m\leftarrow \min_{\substack{i\in\Omega\\\phi(i)>0}} \phi(i)$\;
				\For{$\rho \in \{m, (1+\eps)m, \ldots (1+\eps)^{\lceil{\log_{1+\eps} \frac{n}{m}\max_{i\in\Omega}\phi(i)\rceil}} m\}$}{
					$\alg^{(i+1)}_\rho \leftarrow \alg^{(i)}$\;
					\tcc{Forward phase}
					\While{$\Phi'(\alg^{(i+1)}_\rho) \leq (d\Gamma'+\kappa)\chi'$ and $\Phi(\alg^{(i+1)}_\rho) > \Phi(\alg^{(i)}) - \rho/2$}{\label{line:firstwhile} 
						$(\alg^{(i+1)}_\rho, stop)\leftarrow \textsc{GreedySwap}(\alg^{(i+1)}_\rho, \Phi, \Phi', \chi, \chi', \frac{1}{10n^2}\rho)$\;
						\If{$stop = 1$}{\label{line:stopcheckfirst}\
							\textbf{break} while loop starting on Line~\ref{line:firstwhile}\;
						}\label{line:stopchecklast}
					}
					$\alg^{(i+2)}_\rho \leftarrow \alg^{(i+1)}_\rho$\;
					\tcc{Backward phase}
					\While{$\Phi'(\alg^{(i+2)}_\rho) \geq d\Gamma'\chi'$}{\label{line:secondwhile} 
						\label{line:secondswap} $(\alg^{(i+2)}_\rho, \sim) \leftarrow \textsc{GreedySwap}(\alg^{(i+2)}_\rho, \Phi', \Phi, \chi', \chi, \frac{\eps}{n}\chi')$\;
					}
				}
				\tcc{If $\Phi(\alg^{(i)}\setminus\sol)=0$, then $\alg^{(i)}$ satisfies \Cref{lem:55}}
				$\alg_0^{i+2}\leftarrow \alg^{(i)}$\;
				$\alg^{(i+2)} \leftarrow \argmin_{\rho} \Phi(\alg^{(i+2)}_\rho)$\;
				$i \leftarrow i + 2$\;
			}
			\textbf{return} all values of $\alg^{(i)}_\rho$ stored for any value of $i, \rho$\;
			\caption{\textsc{DoubleApprox}$(\Omega, \Phi, \Phi', \chi, \chi', \Gamma, \Gamma', \kappa)$ from \cite{GaneshRobustAlgorithmsTSP2023} adapted to our notation, uses the local search algorithm $\Ll$ to compute a solution that performs well on both $\Phi$ and $\Phi'$.}
			\label{fig:doubleapprox}
		\end{algorithm}
	\end{figure}
	
	\begin{figure}[t]
		\begin{algorithm}[H]
			\DontPrintSemicolon
			\KwData{Solution $\alg$, potential functions $\Phi, \Phi'$, $\chi \in [\Phi(\sol), \sqrt{1+\eps}\Phi(\sol)]$ with $\chi>0$, $\chi' \in [\Phi'(\sol), \sqrt{1+\eps}\Phi'(\sol)]$ with $\chi'>0$, minimum improvement per move $\rho$}
			$\Tt \leftarrow \emptyset$\;
			$m\leftarrow \min_{\substack{i\in\Omega\\\phi(i)>0}}\phi(i)$\;
			$m'\leftarrow \min_{\substack{i\in\Omega\\\phi'(i)>0}}\phi'(i)$\;
			\For{$W \in \{m, \sqrt{1+\eps}m, \ldots, \sqrt{1+\eps}^{\lceil \log_{1 + \eps} \chi \rceil}m\}$}{
				\For{$W' \in \{m', \sqrt{1+\eps}m', \ldots, \sqrt{1+\eps}^{\lceil \log_{\sqrt{1 + \eps}} \chi' \rceil}m'\}$}{
					Using \Cref{efficient-gm}, try to find a move $S'\in\Pi_\alg$ such that $\Phi(S'\setminus \alg)-\Phi(\alg\setminus S')\leq \sqrt{1+\eps}W$ and $\Phi'(S'\setminus \alg)-\Phi'(\alg\setminus S')\leq \sqrt{1+\eps}W'$\label{line:increaseW}, if it exists\;
					\If{move $S'$ exists and $\Phi(\alg\setminus S') - \Phi(S'\setminus\alg)\geq \rho$\label{line:move-improves-enough}}{
						$\Tt\leftarrow \Tt\cup \{S'\}$\;
					}
				}
			}
			\If{$\Tt = \emptyset$}{
				\textbf{return} $(\alg, 1)$\;
			}
			$S^* \leftarrow \argmin_{S'\in\Tt} \frac{\Phi'(S\setminus\alg)-\Phi'(\alg\setminus S)}{\Phi(\alg\setminus S)-\Phi(S\setminus\alg)}$\;\label{line:argmin-da}
			\textbf{return} $(S^*, 0)$\;
			\caption{\textsc{GreedySwap}$(\alg, \Phi, \Phi', \chi, \chi', \rho)$, which finds a move which approximates the properties described in Lemma~\ref{lem510sc} using \Cref{efficient-gm}.}
			\label{fig:greedyswap}
		\end{algorithm}
	\end{figure}
	
	\subsection{Analysis of the algorithm}
	
	We can now turn to proving the main theorem of this paper:
	\begin{thm}[generalizes Theorem 5.1 of \cite{GaneshRobustAlgorithmsTSP2023}]\label{thm:main}
		Let $\Pp$ be a linear selection problem having a local search-based approximation algorithm $\Ll$ that has a well-behaved potential with parameter $\Lambda$, the $d$-minimal covering property, and efficient good move retrieval. Suppose that $\Pp$ also admits a $\lambda$-approximation algorithm, possibly not local-search based, and that the integrality gap of the LP formulation of $\Pp$ is at most $\delta$.
		
		Then, the robust version of problem $\Pp$ admits a ${(17\lambda\delta d\Lambda[2\Lambda(9d+1) + 1] + \lambda\delta, \lambda[17\delta d\Lambda+1])}$-robust approximation.
	\end{thm}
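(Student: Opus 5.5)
The plan is to build an $(\alpha,\beta)$-ASO for the LP relaxation of $\Pp_R$ with $\alpha = 17d\Lambda[2\Lambda(9d+1)+1]+1$ and $\beta = 17 d\Lambda$, and then invoke \Cref{th21} with the $\lambda$-approximation and integrality gap $\delta$. This gives exactly $(\lambda\delta\alpha, \lambda\delta\beta+\lambda)$. The ASO mirrors the Steiner Tree oracle of \cite{GaneshRobustAlgorithmsTSP2023}. Given $(x,r)$, it first checks $Ax\le b$ with the separation oracle. It then reduces the regret constraints to the worst-case scenarios: for a candidate solution $S$, the maximizing $w$ sets $w_i=u_i$ for $i\notin S$ and $w_i=\ell_i$ for $i\in S$. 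The violated constraint then reads $\sum_{i\notin S}u_ix_i + \sum_{i\in S}\ell_ix_i - \ell(S) > r$.

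To search for such an $S$, define two potentials associated through \Cref{ppt:bounded-potential} with the weight functions
\[ w_i := u_ix_i - \ell_i x_i \quad\text{and}\quad w'_i := \ell_i(1-x_i). \]
A bad $S$ is one for which $\sum_{i\notin S}(u_i-\ell_i)x_i$ is large while $\ell(S)-\sum_{i\in S}\ell_ix_i$ is small. Because $u_ix_i\ge\ell_ix_i$, the comparison reduces to a bicriteria problem: find $S$ with $w$-mass outside $S$ small (or large) and $w'(S)$ small. As in GMP, I would complement the first criterion, so the target is a solution that is simultaneously good on $\Phi$ and on $\Phi'$. The oracle guesses $(1+\eps)$-approximations $\chi,\chi'$ of $\Phi(\SOL)$ and $\Phi'(\SOL)$, where $\SOL$ is the hypothetical violating solution, and runs \textsc{DoubleApprox}. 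For each returned candidate it tests the corresponding regret constraint and returns it if violated; otherwise it declares \texttt{Feasible}.

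The core is the bicriteria guarantee (the analogue of GMP's Lemma 5.5). If a solution $\SOL$ exists, some returned $\alg^{(i)}_\rho$ satisfies both
\[ \Phi(\alg\setminus\SOL) \lesssim 17 d\,\Phi(\SOL\setminus\alg) \quad\text{and}\quad \Phi'(\alg) \le (9d+1)\,\Phi'(\SOL). \]
The second bound uses $d\Gamma'$ with $\Gamma'$ near $9$ plus the slack $\kappa$, and the first uses $\Gamma$ near $17$. I would prove this guarantee in four steps.
\begin{enumerate}
\item Each \textsc{GreedySwap} call either makes a move satisfying the ratio bound of \Cref{lem510sc}, up to $(1+\eps)$ factors, or certifies that no good move exists. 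This holds because the grid over $W,W'$ combined with \Cref{efficient-gm} recovers the move that \Cref{lem510sc} guarantees. A stop signal then means $\Phi(\alg\setminus\SOL)\le d\Gamma\,\Phi(\SOL\setminus\alg)$, which is already the desired difference approximation.
\item In the forward phase, as long as $\Phi(\alg\setminus\SOL)>d\Gamma\,\Phi(\SOL\setminus\alg)$, each move decreases $\Phi$ substantially while increasing $\Phi'$ at a rate bounded by the \Cref{lem510sc} ratio times $\Phi'(\SOL\setminus\alg)/\Phi(\alg\setminus\SOL)$. Integrating this rate gives the $e^{\zeta'(\cdot)}$ term in $\eta$.
\item The backward phase symmetrically restores $\Phi'\le d\Gamma'\chi'$ while losing only a controlled amount of $\Phi$.
\item Consequently $\Phi(\alg^{(i)}\setminus\SOL)$ shrinks by a factor $(1+\eta)$ every two rounds for the correct guess $\rho$. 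After $I$ rounds it either reaches $0$ or some stop occurred, which bounds the running time.
\end{enumerate}

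Finally, I would translate the potential guarantees back to weights. Using $\Lambda_1w\le\Phi\le\Lambda_2w$ costs a factor of $\Lambda$ on each inequality, which is the source of the $\Lambda$ and $\Lambda^2$ terms. Assembling the two guarantees then shows that if no returned candidate violates its constraint, every $w\in[\ell,u]$ satisfies $w^Tx\le \alpha\,\opt_w+\beta r$. The main obstacle is step 2 of the bicriteria lemma: checking that the choice of $\eta$ truly yields a multiplicative decrease of $\Phi(\alg\setminus\SOL)$ per round while $\Phi'$ stays below $(d\Gamma'+\kappa)\chi'$. I would follow GMP's potential-integration argument verbatim, with $d$ replacing the Steiner constant, and use \Cref{lem512} to relate $\Phi(\alg^{(i)})-\Phi(\alg^{(i+2)})$ to the change in $\Phi(\cdot\setminus\SOL)-\Phi(\SOL\setminus\cdot)$.
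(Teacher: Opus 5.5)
Your architecture matches the paper's: an ASO built on \textsc{DoubleApprox}, a bicriteria lemma playing the role of \Cref{lem:55}, then \Cref{th21}. However, the first weight function you feed it is wrong, and the final ``assembling'' step fails as a result. The regret of $x$ against a solution $S$ is
\[
\sum_{i\notin S}u_ix_i-\sum_{i\in S}\ell_i(1-x_i)\;=\;\sum_{i\notin S}(u_i-\ell_i)x_i\;+\;\sum_{i\notin S}\ell_ix_i\;-\;w'(S).
\]
The middle term depends on $S$, yet neither of your criteria $w_i=(u_i-\ell_i)x_i$ and $w'_i=\ell_i(1-x_i)$ sees it; your description of a bad $S$ silently drops it.

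Concretely, in set cover take an element $e\notin\sol$ with $x_e=1$ and $\ell_e=u_e=M$. It has weight $0$ under both $w$ and $w'$, so $\sol\cup\{e\}$ meets both of your guarantees, with any constants, whenever $\sol$ does. Yet the regret of $x$ against $\sol\cup\{e\}$ is exactly $M$ smaller than against $\sol$. Now $M$ appears on the left of $\sum_{i\notin\sol}u_ix_i+\sum_{i\in\sol}\ell_ix_i>\alpha\,\ell(\sol)+\beta r$ but not on the right. So for large $M$ this approximate violation holds while the candidate violates no constraint, and no choice of $(\alpha,\beta)$ rescues the argument.

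The paper instead uses $w_i=u_ix_i+\ell_i(1-x_i)$. With this choice the regret against $S$ equals $w(\Omega\setminus S)-\sum_{i\in\Omega}\ell_i(1-x_i)$, a constant shift of $w(\Omega\setminus S)$. The difference approximation $w(\alg_j\setminus\sol)\le d\Gamma\Lambda\,w(\sol\setminus\alg_j)$ then transfers the excess regret from $\sol$ to $\alg_j$, and the bound on $w'(\alg_j)$ absorbs the remaining $\ell$-terms. This is the computation in \Cref{lemFiveSix}.

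Two smaller points. First, guessing $\chi,\chi'$ within a $(1+\eps)$ factor breaks down when $\Phi(\sol)=0$ or $\Phi'(\sol)=0$. The paper treats these cases separately in the proof of \Cref{lem:55}, using $w\ge w'$, which holds for its weights but not for yours.

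Second, your constant bookkeeping does not produce the stated pair. \textsc{DoubleApprox} only guarantees $\Phi'(\alg)\le(d\Gamma'+\kappa+1+\eps)\Phi'(\sol)$, and this factor exceeds $d\Gamma'+1$ for any $\kappa>0$, so ``$\Gamma'$ near $9$ giving $9d+1$'' is not available. The paper takes $\Gamma=\Gamma'=17$, $\kappa=d$, $\eps=1/5$, checked numerically to make $\eta>0$, so that $d\Gamma'+\kappa+1+\eps=18d+\tfrac65\le 2(9d+1)$. Inserting this into $\alpha=d\Gamma\Lambda(\Lambda[d\Gamma'+\kappa+1+\eps]+1)+1$ and $\beta=d\Gamma\Lambda$ from \Cref{lemFiveSix}, and applying \Cref{th21}, gives exactly the claimed $(17\lambda\delta d\Lambda[2\Lambda(9d+1)+1]+\lambda\delta,\ \lambda[17\delta d\Lambda+1])$.
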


    This is the most technical section of the paper. Its goal is to rewrite the proofs used in \cite{GaneshRobustAlgorithmsTSP2023} in a way that makes them more general. We therefore also use their idea of a double-approximation algorithm \textsc{DoubleApprox} (\Cref{fig:doubleapprox}). The analysis of this algorithm necessitates a number of very technical lemmas, so we will provide an overview of the way they work together. Some proofs are deferred to the appendix when they are too long and very similar to theones from \cite{GaneshRobustAlgorithmsTSP2023}.
    
    In this section, we will consider that we are in the $(\alpha,\beta)$-approximate separation oracle (the values of $\alpha$ and $\beta$ will be set later), and that we need to approximately separate some valuation of the variables $(x,r)$. To do so, we assume that there exists some solution $\sol$ that breaks an approximate constraint, that is, there exists some weight function $w_\sol\in[\ell,u]$ such that $w_\sol(x)>\alpha w_\sol(\sol)+\beta r$. This solution $\sol$ and the weight function $w_\sol$ are not known; we want to show that if they do indeed exist, then we can find some constraint violation. \Cref{lem:55} is the key to this approach: if such a solution $\sol$ exists, it ensures that we can find in polynomial time another solution $\alg_j$ which is close enough to $\sol$ to break a constraint, that is: there exists some $w_{\alg_j}\in[\ell,u]$ such that $w_{\alg_j}(x)>w_{\alg_j}(\alg_j)+r$.

    Our goal is to get to \Cref{lem:55}, which is stated below, and shows that algorithm \textsc{DoubleApprox} returns a polynomial-size set of solutions, one of which satisfies two approximation constraints on two weight vectors $w$ and $w'$. This will give a solution that is controlled enough to imply \Cref{lemFiveSix}, which shows that if there is a broken constraint, there will necessarily be a witness of this among this polynomial-size set of solutions.
    
    Now, let's give some intuition about the way \textsc{DoubleApprox} works. A detailed overview can be found in \cite{GaneshRobustAlgorithmsTSP2023} (Section 5.2); we provide a summary of the idea in the following paragraph.
    	
	What the \textsc{DoubleApprox} algorithm does is that assuming two $(1+\eps)$-precise guesses $\chi$ of $\Phi(\sol)$ and $\chi'$ of $\Phi'(\sol)$, the algorithm improves a current solution $\ALG^{(i)}$ by first guessing $\Phi(\ALG^{(i)}\setminus \sol)$ up to $(1+\eps)$ precision, and then improving it over the potential $\Phi$ in a forward phase and subsequently improving it over the potential $\Phi'$ in a backward phase, each phase ending when the solution is good enough over the potential currently being optimized or when the other potential gets too large (in the forward phase). These bounds are expressed as functions of $\chi$ and $\chi'$. Both phases call the \textsc{GreedySwap} algorithm (\Cref{fig:greedyswap}) which uses the Efficient Good Move Retrieval property (\Cref{efficient-gm}) to find local search moves that improve the solution on the current potential while ensuring a bounded loss on the other potential.
	
	\begin{req}
		The case where $\Phi(\sol)=0$ or $\Phi'(\sol)=0$ cannot be handled satisfyingly by \textsc{DoubleApprox}. We analyze this case separately in the proof of \Cref{lem:55}. Therefore, in the algorithm itself, we can always assume that $\max_{i\in\Omega}\phi(i)>0$.
	\end{req}
	
	\begin{restatable}[generalizes Lemma 5.5 of \cite{GaneshRobustAlgorithmsTSP2023}]{lemma}{lemFiveFive}\label{lem:55}
		Let $\sol$ be a solution to problem $\Pp$, and $(x,r)$ some fractional solution of the robust LP, such that $x$ is a feasible solution for the non-robust LP relaxation. Let $w$ and $w'$ be two weight functions such that $w_i=u_ix_i+\ell_i(1 - x_i)$ and $w'_i = \ell_i(1-x_i)$. Let $\Gamma'>1, \kappa>0, 0<\eps<1/4$. Then, there exists a constant $\Gamma$ and a polynomial-time algorithm that obtains a collection of solutions $\ALG$, one of which (say $\ALG_j$) satisfies:
		\begin{itemize}
			\item $w(\ALG_j\setminus \sol)\leq d\Gamma\Lambda\cdot w(\sol\setminus \ALG_j)$ and
			\item $w'(\ALG_j)\leq (d\Gamma'+\kappa+1+\eps)\Lambda w'(\sol)$.
		\end{itemize}
	\end{restatable}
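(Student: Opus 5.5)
The algorithm will be \textsc{DoubleApprox} (\Cref{fig:doubleapprox}), run with the potentials $\Phi,\Phi'$ that $\Ll$ associates with $w$ and $w'$, and called once for every pair of $\sqrt{1+\eps}$-guesses $(\chi,\chi')$ of $(\Phi(\sol),\Phi'(\sol))$. The guesses range over a geometric grid between the smallest positive potential and $n$ times the largest. We first round every $\phi'(i)$ up to a multiple of $\frac{\eps}{n}\chi'$; this costs at most an additive $\eps\chi'$ on any solution and is what produces the $+\eps$ in the second bound. The collection $\ALG$ is the union of all returned $\alg^{(i)}_\rho$. Since there are polynomially many guesses and polynomially many rounds $I$, the algorithm is polynomial provided each while loop terminates in polynomially many iterations. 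This holds because each \textsc{GreedySwap} call decreases the current potential by at least $\frac{\rho}{10n^2}$ (forward phase) or $\frac{\eps}{n}\chi'$ (backward phase).

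We first dispose of the degenerate cases flagged in the remark. If $\Phi'(\sol)=0$, then $w'(\sol)=0$ and we need a solution with $w'(\ALG_j)=0$ and $w(\ALG_j\setminus\sol)\le d\Gamma\Lambda\, w(\sol\setminus\ALG_j)$. To get it, we contract (set to zero weight / force) the elements of zero $w'$-weight and run the local search $\Ll$ on $w$ restricted to elements with $w'_i=0$; \Cref{lem-ls-dapx} then gives a difference approximation. Symmetrically, if $\Phi(\sol)=0$ we run the $\lambda$-approximation on $\Phi'$ among zero-$\phi$ elements. Both reductions rely on \Cref{ppt:bounded-potential}, since $\phi(i)=0 \iff w_i=0$. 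From now on, fix the correct guesses $\chi,\chi'$. By \Cref{ppt:bounded-potential} we have $\Lambda_1 w\le\Phi\le\Lambda_2 w$ on every set, so it suffices to produce $\ALG_j$ with
\[
\Phi(\ALG_j\setminus\sol)\le d\Gamma\,\Phi(\sol\setminus\ALG_j)\quad\text{and}\quad \Phi'(\ALG_j)\le(d\Gamma'+\kappa+1+\eps)\Phi'(\sol).
\]
Converting each side then loses only $\Lambda$.

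The core is an invariant argument following Lemmas 5.6--5.9 of \cite{GaneshRobustAlgorithmsTSP2023}. The invariant is that every $\alg^{(i)}$ satisfies $\Phi'(\alg^{(i)})\le d\Gamma'\chi'$. At $i=0$ this holds because $\lambda<d\Gamma'$, and afterwards the backward phase restores it. When it terminates, the backward phase exits once $\Phi'<d\Gamma'\chi'$. While running, \Cref{lem510sc} applied with roles swapped (potential $\Phi'$, target $T=\sol$) guarantees a good move exists. \Cref{efficient-gm} together with the $W,W'$ grids in \textsc{GreedySwap} finds a move within $(1+\eps)$ of it. Hence the backward phase increases $\Phi$ by at most $\zeta'$ times the $\Phi'$-decrease, and summing over the phase shows that $\Phi$ grows by at most a factor $e^{\zeta'(d\Gamma'+\kappa+1+\eps)}$. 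The forward phase is analysed for the correct guess $\rho\approx\Phi(\alg^{(i)}\setminus\sol)$ and splits into three cases.
\begin{itemize}
\item If \textsc{GreedySwap} returns $stop=1$, no move of \Cref{lem510sc} type exists. By the contrapositive of that lemma, $\Phi(\alg\setminus\sol)\le d\Gamma\,\Phi(\sol\setminus\alg)$, while the first while condition still gives $\Phi'\le(d\Gamma'+\kappa)\chi'$. This $\alg$ is the desired $\ALG_j$.
\item If the loop stops because $\Phi'$ exceeds $(d\Gamma'+\kappa)\chi'$, then the ratio guarantee of \Cref{lem510sc} means $\Phi$ must have dropped by at least roughly $\frac{(d\Gamma-1)(\sqrt\Gamma-1)(\sqrt\Gamma-1-\eps)\kappa}{d^2(1+\eps)\Gamma^2}\,\Phi(\alg^{(i)}\setminus\sol)$, which is a constant fraction.
\item If $\Phi$ dropped by $\rho/2$, we again get a constant-fraction decrease of $\Phi(\alg^{(i)}\setminus\sol)$, and hence of $\Phi(\alg^{(i)})-\Phi(\sol)$.
\end{itemize}
In the last two cases, combining the drop with the backward-phase blowup factor shows $\Phi(\alg^{(i+2)})-\chi$ shrinks by a factor $(1+\eta)$, where $\eta>0$ is exactly the constant chosen in the algorithm once $\Gamma$ is large enough relative to $\zeta'$. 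After $I$ rounds, either some round hit the first case or $\Phi(\alg^{(i)}\setminus\sol)$ fell below the minimum positive $\phi$, i.e.\ it is $0$, and the comment in the algorithm shows $\alg^{(i)}$ itself qualifies.

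I expect the main obstacle to be the potential-function bookkeeping in the last step. Specifically, one must show that the backward phase's $\Phi$-increase, bounded through $\zeta'$ and the additive $\eps\chi'$ rounding, cannot cancel the forward-phase progress, so that $\eta>0$ for a suitable constant $\Gamma$ depending on $d,\Gamma',\kappa,\eps$. I would try first to choose $\Gamma$ large so that $\zeta'(d\Gamma'+\kappa+1+\eps)$ is small compared with the two terms inside the minimum, exactly mirroring the constant choice in \cite{GaneshRobustAlgorithmsTSP2023}.
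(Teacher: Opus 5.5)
Your overall plan is the paper's: treat $\Phi'(\sol)=0$ and $\Phi(\sol)=0$ separately (your variant for $\Phi(\sol)=0$, running the $\lambda$-approximation on the zero-$\phi$ items, works), run \textsc{DoubleApprox} over all pairs of guesses, show geometric progress through a forward/backward analysis, take $\Gamma$ large so that $\zeta'\to 0$, and lose $\Lambda$ via \Cref{ppt:bounded-potential}. There is a gap: your case analysis for the convergence step is not exhaustive.

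The per-step ratio argument in the forward phase (via \Cref{lem510sc}) and the $\zeta'$ estimate in the backward phase both require $\Phi(\alg_j\setminus\sol)>d\Gamma\,\Phi(\sol\setminus\alg_j)$ at the iterates where they are applied. The factor $d\Gamma-1$ in $\zeta'$ comes from exactly this. Yet you only harvest a good solution when \textsc{GreedySwap} returns $stop=1$, or when $\Phi(\alg^{(i)}\setminus\sol)$ reaches $0$. Suppose instead the difference approximation holds at the last forward iterate, after $\Phi'$ has overshot $(d\Gamma'+\kappa)\chi'$, or at some backward iterate. Then your argument gives neither a certified output nor progress. The backward increase of $\Phi$ is then bounded only by a multiple of $\Phi(\sol)$, not of $\Delta$.

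The missing ingredient is the overshoot bound. One \textsc{GreedySwap} move raises $\Phi'$ by at most about $(1+\eps)\Phi'(\sol)$, because of the cap on $W'$ in Line~\ref{line:increaseW}. So the forward phase ends with $\Phi'\le(d\Gamma'+\kappa+1+\eps)\Phi'(\sol)$, and backward moves only decrease $\Phi'$. Hence every stored iterate already satisfies the second bullet, and any iterate satisfying the difference approximation is a valid $\ALG_j$. You can then argue by contradiction as in \Cref{lemma:algconverges}: if no stored iterate qualifies, all of them violate the difference approximation, and your progress argument applies at every step.

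This bound is where the $+1+\eps$ in the statement comes from. You credit the $+\eps$ to the rounding of $\phi'$ and never account for the $+1$. The same bound is also what makes the backward phase polynomial: it starts below $(d\Gamma'+\kappa+1+\eps)\chi'$, stops at $d\Gamma'\chi'$, and loses at least $\frac{\eps}{n}\chi'$ per move.

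Three smaller points.

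First, the quantity that must shrink by $(1+\eta)$ per round is $\Delta(i)=\Phi(\alg^{(i)})-\Phi(\sol)=\Phi(\alg^{(i)}\setminus\sol)-\Phi(\sol\setminus\alg^{(i)})$ (\Cref{lem512}), not $\Phi(\alg^{(i)})-\chi$. Since $\chi$ may exceed $\Phi(\sol)$, smallness of $\Phi(\alg^{(i)})-\chi$ says nothing about $\Phi(\alg^{(i)}\setminus\sol)$.

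Second, the backward blow-up is multiplicative in $\Delta$, not in $\Phi$. Let $t_j$ be the $\Phi'$-decrease at step $j$ divided by $\Phi'(\sol)$. The per-step bound $\Delta_{j+1}\le(1+\zeta' t_j)\Delta_j$ gives $\Delta_J\le e^{\zeta' T}\Delta_0$. That is exactly the paper's backward-phase bound, by a more direct route than the integral argument the paper defers. A bound of the form $\Phi(\alg^{(i+2)}_\rho)\le e^{\zeta' T}\Phi(\alg^{(i+1)}_\rho)$ would be useless, since the forward gain is only a fraction of $\Phi(\alg\setminus\sol)$.

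Third, your contrapositive for $stop=1$ needs $\Phi(\alg^{(i+1)}_\rho\setminus\sol)\ge\rho/10$ throughout the forward phase. This follows from the loop condition, the correct guess of $\rho$, and $\eps<\frac23-\frac5{3d\Gamma}$. It is needed because \textsc{GreedySwap} only accepts moves that improve $\Phi$ by $\rho/(10n^2)$, while \Cref{lem510sc} only guarantees an improvement of $\Phi(S\setminus T)/n^2$.
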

	\vspace{0.3cm}
	
	In the rest of this section, we define $w, w'$ as in \Cref{lem:55}, and let $\Phi$ (resp. $\Phi'$) be the potential associated to $w$ (resp. $w'$), associated with the function $\phi$ (resp. $\phi'$) over $\Omega$ representing the potential of each element of $\Omega$ according to $\Phi$ (resp. $\Phi'$).
	
	First, \Cref{lemma:forwardphase} and \Cref{lemma:backwardphase} provide bounds on the progress of the forward and backward phase respectively. The values of the parameters $\Gamma, \Gamma',\kappa$ are set in \Cref{thm:main}.
	
	\begin{restatable}[Forward Phase Analysis, generalizes Lemma 5.14 of \cite{GaneshRobustAlgorithmsTSP2023}]{lemma}{lemFiveFourteen}\label{lemma:forwardphase}
		For any even $i$ in algorithm \textsc{DoubleApprox}, let:
		$$\rho=(1+\eps)^{\left\lceil \log_{1+\eps}\left(\frac{\Phi(\alg^{(i)}\setminus\sol)}{\min_{\substack{i\in\Omega\\\phi(i)>0}}\phi(i)}\right)\right\rceil}\min_{\substack{i\in\Omega\\\phi(i)>0}}\phi(i)$$
		be such that $\rho \in [\Phi(\alg^{(i)} \setminus \sol),(1+\eps)\Phi(\alg^{(i)} \setminus \sol) ]$.
		Suppose all values of $\alg^{(i+1)}_\rho$ and the final value of $\alg^{(i)}$ in \textsc{DoubleApprox} satisfy $\Phi(\alg^{(i+1)}_\rho \setminus \sol) > d\,\Gamma \cdot \Phi(\sol \setminus \alg^{(i+1)}_{\rho})$ and $\Phi(\alg^{(i)} \setminus \sol) > d\,\Gamma \cdot \Phi(\sol \setminus \alg^{(i)})$.
		Then for $0 < \eps < 2/3 - \frac5{3d\Gamma}$, the final values of $\alg^{(i)}, \alg^{(i+1)}_\rho$ satisfy
		$$\Phi(\alg^{(i)}) - \Phi(\alg^{(i+1)}_\rho) \geq \min\left\{\frac{d\Gamma - 1}{2d\Gamma} ,\frac{(d\Gamma - 1)(\sqrt{\Gamma}-1)(\sqrt{\Gamma}-1-\eps)\kappa}{d^2(1+\eps)\Gamma^2}\right\} \cdot \Phi(\alg^{(i+1)}_\rho \setminus \sol).$$
	\end{restatable}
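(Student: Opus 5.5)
The proof splits according to which of the exit conditions ends the forward-phase while loop on Line~\ref{line:firstwhile}. Throughout, we exploit that $\rho$ is a $(1+\eps)$-guess of $\Phi(\alg^{(i)}\setminus\sol)$. The potential is linear over elements (first item of \Cref{ppt:bounded-potential}), so by \Cref{lem512} with $C=\sol$ we have
\[\Phi(\alg^{(i)})-\Phi(\alg^{(i+1)}_\rho)=[\Phi(\alg^{(i)}\setminus\sol)-\Phi(\sol\setminus\alg^{(i)})]-[\Phi(\alg^{(i+1)}_\rho\setminus\sol)-\Phi(\sol\setminus\alg^{(i+1)}_\rho)].\]
The hypotheses $\Phi(\cdot\setminus\sol)>d\Gamma\,\Phi(\sol\setminus\cdot)$ let us replace each bracket by $\Phi(\cdot\setminus\sol)$, up to a factor between $\frac{d\Gamma-1}{d\Gamma}$ and $1$. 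This reduces everything to comparing $\Phi(\alg^{(i)}\setminus\sol)\approx\rho$ with $\Phi(\alg^{(i+1)}_\rho\setminus\sol)$.

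\emph{Case 1: the loop stops because $\Phi(\alg^{(i+1)}_\rho)\le\Phi(\alg^{(i)})-\rho/2$.} Then the decrease is at least $\rho/2\ge\frac12\Phi(\alg^{(i)}\setminus\sol)$. Since potential only decreased during the phase, the identity above gives $\Phi(\alg^{(i+1)}_\rho\setminus\sol)\le\frac{d\Gamma}{d\Gamma-1}\Phi(\alg^{(i)}\setminus\sol)$. Hence the decrease is at least $\frac{d\Gamma-1}{2d\Gamma}\Phi(\alg^{(i+1)}_\rho\setminus\sol)$, which is the first term of the minimum.

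\emph{Case 2: the loop stops because \textsc{GreedySwap} returns $stop=1$.} We show this cannot happen under the hypotheses. Apply \Cref{lem510sc} with $S=\alg^{(i+1)}_\rho$ and $T=\sol$. It yields a move $S'$ whose $\Phi$-improvement is at least $\frac1{n^2}\Phi(S\setminus\sol)$. Since $\Phi(S)>\Phi(\alg^{(i)})-\rho/2$, the identity shows $\Phi(S\setminus\sol)$ is at least a constant fraction of $\rho$, roughly $(\frac12-\frac1{d\Gamma})\rho/(1+\eps)$. This exceeds the threshold $\frac{1}{10n^2}\rho$, and with the $(1+\eps)$ slack of \Cref{efficient-gm} and the grid over $W,W'$ it survives Line~\ref{line:move-improves-enough}. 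So $\Tt\neq\es$, a contradiction. The constraint $\eps<2/3-\frac5{3d\Gamma}$ should be exactly what makes this arithmetic go through.

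\emph{Case 3: the loop stops because $\Phi'(\alg^{(i+1)}_\rho)>(d\Gamma'+\kappa)\chi'$.} This is the main obstacle. The starting solution has $\Phi'(\alg^{(i)})\le d\Gamma'\chi'$, either from the backward phase or from Line~\ref{line:approx}. So $\Phi'$ grew by at least $\kappa\chi'\ge\kappa\Phi'(\sol)$ during the phase. At each step, \textsc{GreedySwap} picks the argmin ratio, which by \Cref{lem510sc} and \Cref{efficient-gm} is at most $(1+O(\eps))\frac{d(1+\eps)\Gamma}{(\sqrt\Gamma-1)(\sqrt\Gamma-1-\eps)}\frac{\Phi'(\sol\setminus S)}{\Phi(S\setminus\sol)}$. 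We bound $\Phi'(\sol\setminus S)\le\Phi'(\sol)$ and $\Phi(S\setminus\sol)\le\frac{d\Gamma}{d\Gamma-1}\cdot\bigl(\Phi(S)-\Phi(\sol)\bigr)$-type quantities, which in turn are at most $\frac{d\Gamma}{d\Gamma-1}\Phi(\alg^{(i+1)}_\rho\setminus\sol)$ up to the monotone decrease. Summing over steps gives
\[\kappa\Phi'(\sol)\le\sum(\Delta\Phi')\le\frac{d^2(1+\eps)\Gamma^2}{(d\Gamma-1)(\sqrt\Gamma-1)(\sqrt\Gamma-1-\eps)}\cdot\frac{\Phi'(\sol)}{\Phi(\alg^{(i+1)}_\rho\setminus\sol)}\sum(\Delta\Phi).\]
Rearranging gives the second term of the minimum. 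The delicate part is that the denominator $\Phi(S\setminus\sol)$ changes along the phase. I would lower bound it uniformly by the final value $\Phi(\alg^{(i+1)}_\rho\setminus\sol)$ (possibly up to the $\frac{d\Gamma-1}{d\Gamma}$ factor), using the monotone decrease of $\Phi$ together with \Cref{lem512}. The $\sqrt{1+\eps}$ losses from the guesses and the grid are to be absorbed into the stated $(1+\eps)$.
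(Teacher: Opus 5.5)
Your proposal is correct and follows the paper's proof essentially step for step: you rule out $stop=1$ via $\Phi(\alg^{(i+1)}_\rho\setminus\sol)\geq\frac{1}{10}\rho$ (exactly where $\eps<\frac23-\frac{5}{3d\Gamma}$ is used), you get the first term of the minimum from the $\rho/2$ exit via \Cref{lem512}, and you get the second term from the $\Phi'$ exit by summing the per-step ratio bounds of \Cref{lem510sc} and \Cref{efficient-gm} against $\Phi'(\alg^{(i+1)}_\rho)-\Phi'(\alg^{(i)})\geq\kappa\chi'\geq\kappa\Phi'(\sol)$. The only slip is in Case~3: the denominator must be \emph{lower} bounded, $\Phi(\alg^{(i+1)}_{\rho,j}\setminus\sol)\geq\frac{d\Gamma-1}{d\Gamma}\Phi(\alg^{(i+1)}_{\rho,J}\setminus\sol)$, as you correctly say at the end, whereas your middle sentence states an upper bound, which points the wrong way.
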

	
	\begin{restatable}[Backward Phase Analysis, generalizes Lemma 5.15 of \cite{GaneshRobustAlgorithmsTSP2023}]{lemma}{lemFiveFifteen}\label{lemma:backwardphase}
		Fix any even $i+2$ in algorithm \textsc{DoubleApprox} and any value of $\rho$. Suppose all values of $\alg^{(i+2)}_{\rho}$ satisfy $\Phi(\alg^{(i+2)}_{\rho} \setminus \sol) > d\,\Gamma \cdot \Phi(\sol \setminus \alg^{(i+2)}_{\rho})$. Let $T = \frac{\Phi'(\alg^{(i+1)}_{\rho}) - \Phi'(\alg^{(i+2)}_{\rho})}{\Phi'(\sol)}$. Then for 
		$\zeta' = \frac{d(1+\eps)\Gamma'}{(\sqrt{\Gamma'}-1)(\sqrt{\Gamma'}-1-\eps)(d\Gamma'-1)(d\Gamma-1)}, $
		the final values of $\alg^{(i+1)}_\rho, \alg^{(i+2)}_\rho$ satisfy
		$$\Phi(\alg^{(i+2)}_\rho) - \Phi(\alg^{(i+1)}_\rho) \leq (e^{\zeta' T} - 1) \cdot \Phi(\alg^{(i+1)}_\rho \setminus \sol).$$
	\end{restatable}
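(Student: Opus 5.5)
We track the backward phase move by move. Index the successive solutions produced by the while loop on Line~\ref{line:secondwhile} as $S_0=\alg^{(i+1)}_\rho, S_1,\dots,S_m=\alg^{(i+2)}_\rho$. In this phase \textsc{GreedySwap} is called with the roles of the potentials swapped: it decreases $\Phi'$ and controls the increase of $\Phi$. For each step $t$ we write $\Delta'_t=\Phi'(S_t)-\Phi'(S_{t+1})\ge 0$ and $\Delta_t=\Phi(S_{t+1})-\Phi(S_t)$. Then $T\Phi'(\sol)=\sum_t\Delta'_t$, and the quantity to bound is $\sum_t\Delta_t$. The target is a per-step multiplicative bound
\[
\Phi(S_{t+1}\setminus\sol)\le\Bigl(1+\zeta'\tfrac{\Delta'_t}{\Phi'(\sol)}\Bigr)\Phi(S_t\setminus\sol),
\]
or equivalently $\Delta_t\le \zeta'\frac{\Delta'_t}{\Phi'(\sol)}\Phi(S_t\setminus\sol)$ together with a way of passing from $\Phi$ to $\Phi(\cdot\setminus\sol)$. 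Telescoping and using $1+x\le e^x$ then gives $\Phi(S_m\setminus\sol)\le e^{\zeta'T}\Phi(S_0\setminus\sol)$. By \Cref{lem512} with $C=\sol$ and $\Phi(\sol\setminus S)<\Phi(S\setminus\sol)/(d\Gamma)$, the $\Phi$-differences are controlled by the differences of $\Phi(\cdot\setminus\sol)$, which yields $\Phi(S_m)-\Phi(S_0)\le(e^{\zeta'T}-1)\Phi(S_0\setminus\sol)$.

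To get the per-step bound, first note that while the loop runs we have $\Phi'(S_t)\ge d\Gamma'\chi'\ge d\Gamma'\Phi'(\sol)$. Together with $\Phi'(\sol\setminus S_t)\le\Phi'(\sol)$, this gives $\Phi'(S_t\setminus\sol)\ge \Phi'(S_t)-\Phi'(\sol)$, so $\Phi'(S_t\setminus\sol)>d\Gamma''\Phi'(\sol\setminus S_t)$ for a suitable constant; I would aim to argue $\Phi'(S_t\setminus \sol)\ge (d\Gamma'-1)\Phi'(\sol)\ge \frac{d\Gamma'-1}{1}\Phi'(\sol\setminus S_t)$. Then apply \Cref{lem510sc} with the roles of $\Phi$ and $\Phi'$ exchanged, with $T=\sol$ and $\Gamma'$ in place of $\Gamma$. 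This yields a move $S'\in\Pi_{S_t}^{\sol}$ whose ratio of $\Phi$-increase to $\Phi'$-decrease is at most
\[
\frac{d(1+\eps)\Gamma'}{(\sqrt{\Gamma'}-1)(\sqrt{\Gamma'}-1-\eps)}\cdot\frac{\Phi(\sol\setminus S_t)}{\Phi'(S_t\setminus\sol)},
\]
and whose $\Phi'$-decrease is at least $\Phi'(S_t\setminus\sol)/n^2\ge\frac{\eps}{n}\chi'$-scale, so it passes the threshold on Line~\ref{line:move-improves-enough} (this uses $\chi'\le\sqrt{1+\eps}\Phi'(\sol)$ and $n$ large). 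By \Cref{efficient-gm} and the geometric grids over $W,W'$ in \textsc{GreedySwap}, some move with both differences within a $(1+\eps)$ factor is placed in $\Tt$. The argmin on Line~\ref{line:argmin-da} therefore picks a move $S^*$ with ratio at most a $(1+\eps)$-factor times the bound above; I would absorb this factor into the $1+\eps$ of $\zeta'$, as in \cite{GaneshRobustAlgorithmsTSP2023}.

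Finally, substitute $\Phi(\sol\setminus S_t)<\Phi(S_t\setminus\sol)/(d\Gamma)$ from the hypothesis and $\Phi'(S_t\setminus\sol)\ge(d\Gamma'-1)\Phi'(\sol)$ to get
\[
\Delta_t\le \frac{d(1+\eps)\Gamma'}{(\sqrt{\Gamma'}-1)(\sqrt{\Gamma'}-1-\eps)(d\Gamma'-1)d\Gamma}\cdot\frac{\Delta'_t}{\Phi'(\sol)}\,\Phi(S_t\setminus\sol).
\]
Converting $\Delta_t$ into the increase of $\Phi(\cdot\setminus\sol)$ costs the factor $1+\frac{1}{d\Gamma-1}$, which is where $(d\Gamma-1)$ in $\zeta'$ comes from. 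I expect this conversion to be the main obstacle. By \Cref{lem512}, $\Phi(S_{t+1}\setminus\sol)-\Phi(S_t\setminus\sol)=\Delta_t+[\Phi(\sol\setminus S_{t+1})-\Phi(\sol\setminus S_t)]$, and the bracket is nonpositive because moves in $\Pi^{\sol}_{S_t}$ only add elements of $\sol$. This gives the needed inequality directly, even without that loss.

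The last piece is the final telescoping from $\Phi(\cdot\setminus\sol)$ back to $\Phi$, which must be checked carefully to land exactly on $\Phi(\alg^{(i+1)}_\rho\setminus\sol)$ as the multiplier. This holds because $\sum_t\Delta_t\le\sum_t[\Phi(S_{t+1}\setminus\sol)-\Phi(S_t\setminus\sol)]+\Phi(\sol\setminus S_0)-\Phi(\sol\setminus S_m)$, combined with the bound on the grown $\Phi(S_m\setminus\sol)$.
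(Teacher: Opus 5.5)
There is a genuine gap: you track the wrong quantity. Your per-step growth bound for $\Phi(\cdot\setminus\sol)$ rests on $\Phi(\sol\setminus S_{t+1})\le\Phi(\sol\setminus S_t)$, and that inequality is not available.

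\textsc{GreedySwap} has no access to $\sol$. The move $S^*$ it returns is just the argmin over $\Tt\subseteq\Pi_{S_t}$, so it may add elements outside $\sol$. \Cref{lem510sc} only certifies that \emph{some} good move lies in $\Pi^{\sol}_{S_t}$; the chosen one need not. Even a move of $\Pi^{\sol}_{S_t}$ is constrained only in what it adds, so it may delete elements of $S_t\cap\sol$.

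For instance, take a move that adds $g\notin\sol$ and deletes $e\in S_t\cap\sol$, with $\phi(e)\approx\phi(g)$ and $\phi'(e)>\phi'(g)$. It has $\Delta_t\approx 0$ and an excellent ratio, so the argmin may well pick it. Yet it raises $\Phi(\cdot\setminus\sol)$ by $\phi(g)$, which $\Delta_t$ does not control.

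Moreover, even granting your monotonicity claim, the last step does not close. By \Cref{lem512} your final relation is an identity. After inserting $\Phi(S_m\setminus\sol)\le e^{\zeta'T}\Phi(S_0\setminus\sol)$ you are left with the term $\Phi(\sol\setminus S_0)-\Phi(\sol\setminus S_m)$. This term is \emph{nonnegative} exactly when $\Phi(\sol\setminus\cdot)$ is nonincreasing, as you claimed. So you only obtain the bound up to an additive $\Phi(\sol\setminus S_0)$, which $(e^{\zeta'T}-1)$ does not absorb for small $T$.

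The fix is the conversion you identified and then set aside, applied to the right quantity. Let $D_t:=\Phi(S_t\setminus\sol)-\Phi(\sol\setminus S_t)$. By \Cref{lem512} with $C=\sol$, $D_{t+1}-D_t=\Delta_t$ for \emph{every} move, whatever it adds or deletes. The hypothesis gives $\Phi(S_t\setminus\sol)<\frac{d\Gamma}{d\Gamma-1}D_t$, and in particular $D_t>0$. Your per-step ratio bound then becomes $D_{t+1}\le(1+\zeta'\Delta'_t/\Phi'(\sol))D_t$. Hence $D_m\le e^{\zeta'T}D_0$, and
$\Phi(S_m)-\Phi(S_0)=D_m-D_0\le(e^{\zeta'T}-1)D_0\le(e^{\zeta'T}-1)\Phi(S_0\setminus\sol)$.
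This is the paper's route: it reduces to bounding $\sum_j D_j\,\Delta'_j/\Phi'(\sol)$ and quotes the integral argument of Ganesh, Maggs and Panigrahi, of which this product is the discrete form.

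Three smaller points in your per-step bound:
\begin{itemize}
\item To invoke \Cref{lem510sc} with $\Gamma'$ you need the factor $d\Gamma'$, not $d\Gamma'-1$. It follows from the identity $\Phi'(S_t\setminus\sol)-d\Gamma'\Phi'(\sol\setminus S_t)=[\Phi'(S_t)-d\Gamma'\Phi'(\sol)]+(d\Gamma'-1)\Phi'(S_t\cap\sol)$.
\item Your reason that the good move passes Line~\ref{line:move-improves-enough} is backwards. A decrease of order $\Phi'(\sol)/n^2$ is below $\frac{\eps}{n}\chi'$ for large $n$. The correct reason is that $\phi'$ is quantized to multiples of $\frac{\eps}{n}\chi'$, so any strict decrease of $\Phi'$ clears the threshold. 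This matters: otherwise the argmin carries no ratio guarantee.
\item There is no extra $(1+\eps)$ factor to absorb, and a fixed $\zeta'$ could not absorb one anyway. \Cref{lem510sc} is already stated for the perturbed quantities $(1+\eps)\Phi(a)-\Phi(r)$ and $\Phi'(r)-(1+\eps)\Phi'(a)$. That is exactly what \Cref{efficient-gm} at precision $\sqrt{1+\eps}$ on $\sqrt{1+\eps}$-guesses delivers.
\end{itemize}
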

	
	The following \Cref{cor:swapgain}, which analyzes one full forward-backward iteration, follows from the two previous lemmas and shows that one iteration improves the potential of the solution currently being developed.
	
	\begin{restatable}[generalizes Corollary 5.16 of \cite{GaneshRobustAlgorithmsTSP2023}]{cor}{corFiveSixteen}\label{cor:swapgain}
		Fix any positive even value of $i+2$ in algorithm \textsc{DoubleApprox}, and let $\rho$ be as in \Cref{lemma:backwardphase}, such that $\rho \in [\Phi(\alg^{(i)} \setminus \sol),(1+\eps)\Phi(\alg^{(i)} \setminus \sol) ]$.
		Suppose all values of $\alg^{(i+1)}_\rho$ and the final value of $\alg^{(i)}$ in \textsc{DoubleApprox} satisfy $\Phi(\alg^{(i+1)}_\rho \setminus \sol) > d\Gamma \cdot \Phi(\sol \setminus \alg^{(i+1)}_{\rho})$ and $\Phi(\alg^{(i)} \setminus \sol) > d\Gamma \cdot \Phi(\sol \setminus \alg^{(i)})$. Then for $0 < \eps < \frac23 - \frac5{3d\Gamma}$ and $\zeta'$ as defined in Lemma~\ref{lemma:backwardphase}, the final values of $\alg^{(i+2)}, \alg^{(i)}$ satisfy
		$$\Phi(\alg^{(i)}) - \Phi(\alg^{(i+2)}) \geq$$
		$$\left[\min\left\{\frac{d\Gamma - 1}{2d\Gamma} ,\frac{(d\Gamma - 1)(\sqrt{\Gamma}-1)(\sqrt{\Gamma}-1-\eps)\kappa}{d^2(1+\eps)\Gamma^2}\right\} - (e^{\zeta' (d\Gamma'+\kappa+1+\eps)} - 1)\right] \cdot \,\Phi(\alg^{(i+1)}_\rho \setminus \sol)\,.$$
	\end{restatable}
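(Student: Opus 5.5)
The plan is to split the quantity $\Phi(\alg^{(i)}) - \Phi(\alg^{(i+2)})$ into a forward part and a backward part, and to bound each with the two preceding lemmas. We begin with the telescoping identity
\[
\Phi(\alg^{(i)}) - \Phi(\alg^{(i+2)}_\rho) = \bigl[\Phi(\alg^{(i)}) - \Phi(\alg^{(i+1)}_\rho)\bigr] - \bigl[\Phi(\alg^{(i+2)}_\rho) - \Phi(\alg^{(i+1)}_\rho)\bigr],
\]
where $\rho$ is the specific guess of \Cref{lemma:forwardphase}, so that $\rho\in[\Phi(\alg^{(i)}\setminus\sol),(1+\eps)\Phi(\alg^{(i)}\setminus\sol)]$. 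Since \textsc{DoubleApprox} sets $\alg^{(i+2)}$ to the $\argmin_\rho \Phi(\alg^{(i+2)}_\rho)$, we have $\Phi(\alg^{(i+2)})\leq\Phi(\alg^{(i+2)}_\rho)$ for this particular $\rho$. Hence it suffices to lower bound the right-hand side of the identity for that guess.

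The first bracket is handled directly by \Cref{lemma:forwardphase}. The hypotheses of the corollary are exactly those of that lemma: the $d\Gamma$ inequalities for all values of $\alg^{(i+1)}_\rho$ and for $\alg^{(i)}$, and the range of $\eps$. The lemma therefore gives the $\min\{\cdot,\cdot\}\cdot\Phi(\alg^{(i+1)}_\rho\setminus\sol)$ lower bound.

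For the second bracket we apply \Cref{lemma:backwardphase}, which gives an upper bound of $(e^{\zeta' T}-1)\,\Phi(\alg^{(i+1)}_\rho\setminus\sol)$ with $T = (\Phi'(\alg^{(i+1)}_\rho)-\Phi'(\alg^{(i+2)}_\rho))/\Phi'(\sol)$. Its hypothesis requires the $d\Gamma$ inequality for the values of $\alg^{(i+2)}_\rho$. Apart from the initial value, which equals the final $\alg^{(i+1)}_\rho$, these are not literally covered by the corollary's assumptions. I expect this to be the main subtle point. I would argue it as in \cite{GaneshRobustAlgorithmsTSP2023}: either the hypothesis is read as applying to all intermediate solutions that are considered, or, if some $\alg^{(i+2)}_\rho$ already satisfies the difference bound, that case is handled separately by \Cref{lem:55}. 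Since $e^{\zeta' T}-1$ is increasing in $T$, it then remains to bound $T\le d\Gamma'+\kappa+1+\eps$.

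To bound $T$, note that $\Phi'(\alg^{(i+2)}_\rho)\ge 0$, so $T\le \Phi'(\alg^{(i+1)}_\rho)/\Phi'(\sol)$. The forward loop runs only while $\Phi'(\alg^{(i+1)}_\rho)\le(d\Gamma'+\kappa)\chi'$, so before the last \textsc{GreedySwap} call the bound $(d\Gamma'+\kappa)\chi'$ held. The last call may overshoot it, and we must control by how much. By the selection in \textsc{GreedySwap}, together with \Cref{lem510sc} and \Cref{efficient-gm}, the chosen move increases $\Phi'$ by at most roughly $(1+\eps)\Phi'(\sol)$, or at most $\chi'$ up to the $\sqrt{1+\eps}$ rounding, because the enumerated $W'$ values go up to about $\chi'$. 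We then use $\chi'\le\sqrt{1+\eps}\,\Phi'(\sol)$ and absorb the constants, which gives $\Phi'(\alg^{(i+1)}_\rho)\le (d\Gamma'+\kappa+1+\eps)\Phi'(\sol)$. The constant bookkeeping here (whether $(d\Gamma'+\kappa)\sqrt{1+\eps}$ fits within $d\Gamma'+\kappa+1+\eps$) may need the slack to be shuffled slightly, but this is routine. Combining the two brackets yields the stated bound.
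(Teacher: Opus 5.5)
Your plan is essentially the paper's proof: it likewise telescopes through $\alg^{(i+1)}_\rho$ and passes from $\alg^{(i+2)}_\rho$ to $\alg^{(i+2)}$ via the minimization over $\rho$. It then applies Lemma~\ref{lemma:forwardphase} to the forward part and Lemma~\ref{lemma:backwardphase} to the backward part, getting $T\le d\Gamma'+\kappa+1+\eps$ from the forward-loop guard plus the cap that Line~\ref{line:increaseW} of \textsc{GreedySwap} puts on a single move's increase of $\Phi'$. The two caveats you flag are real, but the paper does not address them either: it applies Lemma~\ref{lemma:backwardphase} without checking the $d\Gamma$ condition on the later backward-phase values (its only use, Lemma~\ref{lemma:algconverges}, supplies this for all intermediate values), and it silently replaces $\chi'$ by $\Phi'(\sol)$ in the forward-loop threshold, so these are looseness inherited from the paper rather than gaps in your argument.
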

	\begin{proof}
		We can lower-bound $\Phi(\alg^{(i+1)})-\Phi(\alg_\rho^{(i+2)})$, which is not greater than $\Phi(\alg^{(i+1)})-\Phi(\alg^{(i+2)})$ because of Line~\ref{line:argmin-da} of \textsc{DoubleApprox}. First, we have:
		$$\Phi'(\alg^{(i+1)}_\rho) - \Phi'(\alg^{(i+2)}_\rho) \leq \Phi'(\alg^{(i+1)}_\rho) \leq (d\Gamma' + \kappa + 1 + \eps)\Phi'(\sol),$$
		
		because when the loop in Line~\ref{line:firstwhile} concludes, we may have a violation of the condition $\Phi'(\alg_\rho^{(i+1)})\leq (d\Gamma'+\kappa)\Phi'(\sol)$, however no move can increase $\Phi'(\alg_\rho^{(i+1)})$ by more than $(1+\eps)\Phi'(\sol)$ because of Line~\ref{line:increaseW}, and therefore the violation is at most $(1+\eps)\Phi'(\sol)$.
		
		Then we apply Lemma~\ref{lemma:forwardphase} to $\Phi(\alg^{(i)}) - \Phi(\alg^{(i+1)}_\rho)$ and Lemma~\ref{lemma:backwardphase} to $\Phi(\alg^{(i+1)}_\rho) - \Phi(\alg^{(i+2)}_\rho)$ (using the bound $T \leq d\Gamma' + \kappa + 1 +\eps$ that we have just shown) to get:
		
		\begin{eqnarray*}
			&&\hspace*{-8mm} \Phi(\alg^{(i)}) - \Phi(\alg^{(i+2)}_\rho) = [\Phi(\alg^{(i)}) - \Phi(\alg^{(i+1)}_\rho)] + [\Phi(\alg^{(i+1)}_\rho) - \Phi(\alg^{(i+2)}_\rho)]\\
			&\geq& \left[\min\left\{\frac{d\Gamma - 1}{2d\Gamma} ,\frac{(d\Gamma - 1)(\sqrt{\Gamma}-1)(\sqrt{\Gamma}-1-\eps)\kappa}{d^2(1+\eps)\Gamma^2}\right\} - (e^{\zeta' (d\Gamma'+\kappa+1+\eps)} - 1)\right]\\
			&\qquad& \quad \cdot \,\Phi(\alg^{(i+1)}_\rho \setminus \sol).
		\end{eqnarray*}
	\end{proof}
	
	Finally, the following lemma, whose proof is left to the appendix, concludes about the fact that the \textsc{DoubleApprox} algorithm finds a solution $\alg^*$  that has two approximation properties on $\Phi$ and $\Phi'$.
	
	\begin{restatable}[generalizes Lemma 5.17 from \cite{GaneshRobustAlgorithmsTSP2023}]{lemma}{lemFiveSeventeen}\label{lemma:algconverges}
		Suppose $\Gamma, \Gamma', \kappa$, and $\eps$ are chosen such that for $\zeta'$ as defined in Lemma~\ref{lemma:backwardphase}, 
		$$\min\left\{\frac{d\Gamma - 1}{2d\Gamma} ,\frac{(d\Gamma - 1)(\sqrt{\Gamma}-1)(\sqrt{\Gamma}-1-\eps)\kappa}{d^2(1+\eps)\Gamma^2}\right\} - (e^{\zeta' (d\Gamma'+\kappa+1+\eps)} - 1) > 0\,,$$
		and $0 < \eps < \frac23 -\frac5{3d\Gamma}$. Let $\eta$ equal
		$$\frac{\min\left\{\frac{d\Gamma - 1}{2d\Gamma} ,\frac{(d\Gamma - 1)(\sqrt{\Gamma}-1)(\sqrt{\Gamma}-1-\eps)\kappa}{d^2(1+\eps)\Gamma^2}\right\} - (e^{\zeta' (d\Gamma'+\kappa+1+\eps)} - 1)}{1 + \frac{d\Gamma}{d\Gamma - 1} (e^{\zeta' (d\Gamma'+\kappa+1+\eps)} - 1)}\,.$$
		Assume $\eta > 0$ and let $I = 2(\lceil  \log \frac{n \max_{i\in\Omega} \phi(i)}{\min_{\substack{i\in\Omega\\\phi(i)>0}} \phi(i)} / \log (1+\eta) \rceil+1)$. Then there exists some intermediate value $\alg^*$ assigned to $\alg^{(i)}_\rho$ by the \textsc{DoubleApprox} algorithm for some $i \leq I$ and $\rho$ such that $\Phi(\alg^* \setminus \sol) \leq d\Gamma \Phi(\sol \setminus \alg^*)$ and $\Phi'(\alg^*) \leq (d\Gamma' +\kappa+ 1 + \eps) \Phi'(\sol)$.
	\end{restatable}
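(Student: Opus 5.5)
We argue by contradiction, following the structure of Lemma 5.17 of \cite{GaneshRobustAlgorithmsTSP2023}. Suppose that no intermediate value $\alg^{(i)}_\rho$ (for $i\le I$ and any $\rho$) satisfies both $\Phi(\alg^*\setminus\sol)\le d\Gamma\,\Phi(\sol\setminus\alg^*)$ and $\Phi'(\alg^*)\le(d\Gamma'+\kappa+1+\eps)\Phi'(\sol)$. The first observation is that the second condition holds automatically for every value we care about. The initial solution $\alg^{(0)}$ is a $\lambda$-approximation on $\Phi'$ with $\lambda<d\Gamma'$. Every forward phase keeps $\Phi'$ below $(d\Gamma'+\kappa)\chi'$ plus at most one move's overshoot of $(1+\eps)\Phi'(\sol)$, exactly as in the proof of \Cref{cor:swapgain}. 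Every backward phase ends with $\Phi'<d\Gamma'\chi'$. Since $\chi'\le\sqrt{1+\eps}\,\Phi'(\sol)$, these bounds fit inside $(d\Gamma'+\kappa+1+\eps)\Phi'(\sol)$ after absorbing the $\sqrt{1+\eps}$ factor into the slack; this is a constant-fiddling step I would check carefully. Under the contradiction hypothesis, it follows that every $\alg^{(i)}$ and every $\alg^{(i+1)}_\rho$ violates the difference-approximation condition. Hence the hypotheses of \Cref{lemma:forwardphase}, \Cref{lemma:backwardphase} and \Cref{cor:swapgain} hold at every iteration, for the particular guess $\rho\in[\Phi(\alg^{(i)}\setminus\sol),(1+\eps)\Phi(\alg^{(i)}\setminus\sol)]$, which the enumeration over $\rho$ always contains. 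The case $\Phi(\alg^{(i)}\setminus\sol)=0$ is trivial, since then $\alg^{(i)}$ itself is the desired solution (this is what the $\alg_0^{(i+2)}$ line handles).

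Next we turn the per-iteration gain of \Cref{cor:swapgain}, which is stated in terms of $\Phi(\alg^{(i+1)}_\rho\setminus\sol)$, into multiplicative progress on the quantity $D_i:=\Phi(\alg^{(i)}\setminus\sol)-\Phi(\sol\setminus\alg^{(i)})$. This quantity equals $\Phi(\alg^{(i)})-\Phi(\sol)$ by \Cref{lem512} with $C=\sol$. Write $\mu$ for the bracketed numerator of $\eta$ and $E=e^{\zeta'(d\Gamma'+\kappa+1+\eps)}-1$. \Cref{cor:swapgain} together with the argmin choice gives $D_i-D_{i+2}\ge\mu\,\Phi(\alg^{(i+1)}_\rho\setminus\sol)$. \Cref{lemma:backwardphase} gives $D_{i+2}\le D_{i+1}+E\,\Phi(\alg^{(i+1)}_\rho\setminus\sol)$. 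Because $\alg^{(i+1)}_\rho$ violates the difference condition, $\Phi(\alg^{(i+1)}_\rho\setminus\sol)\ge D_{i+1}\ge\frac{d\Gamma-1}{d\Gamma}\Phi(\alg^{(i+1)}_\rho\setminus\sol)$. Combining these yields $D_{i+2}\le(1+\frac{d\Gamma}{d\Gamma-1}E)\,\Phi(\alg^{(i+1)}_\rho\setminus\sol)$, and therefore $D_i-D_{i+2}\ge\eta D_{i+2}$, that is, $D_{i+2}\le D_i/(1+\eta)$. This is exactly where the denominator of $\eta$ comes from.

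To finish, note that $D_0\le\Phi(\alg^{(0)})\le n\max_i\phi(i)$. The violated condition forces $D_i>\frac{d\Gamma-1}{d\Gamma}\Phi(\alg^{(i)}\setminus\sol)$, and this is nonzero, so it is at least a constant times $m=\min_{\phi(i)>0}\phi(i)$. After $I/2$ iterations we get $D_I\le n\max_i\phi(i)/(1+\eta)^{I/2}<m\cdot\frac{d\Gamma-1}{d\Gamma}$ by the choice of $I$ (the $+1$ in $I$ absorbs the constant). This is a contradiction, so some intermediate $\alg^*$ satisfies both properties. The main obstacle I expect is the inequality chain in the middle step, in particular turning gains measured in $\Phi(\alg^{(i+1)}_\rho\setminus\sol)$ into a geometric decrease of $D_i$ with exactly the stated $\eta$. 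A secondary obstacle is making the lower-bound/termination argument rigorous when $\Phi(\sol\setminus\alg^{(i)})$ is positive.
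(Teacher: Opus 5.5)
Your proposal follows the paper's proof essentially step for step. The shared steps are the contradiction hypothesis, the $\Phi'$ bound inherited from the proof of \Cref{cor:swapgain} (which forces every value to violate the difference condition), the decrease $\Delta(i+2)\le\Delta(i)/(1+\eta)$ obtained from \Cref{cor:swapgain} and \Cref{lemma:backwardphase}, and the contradiction at $i=I$. Your bookkeeping, which keeps both bounds in terms of $\Phi(\alg^{(i+1)}_\rho\setminus\sol)$, is a harmless rearrangement of the paper's.

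\textbf{The final step does not hold as written.} The paper is equally loose at this point. The choice of $I$ only gives $(1+\eta)^{I/2}\ge(1+\eta)\,n\max_i\phi(i)/m$, hence $D_I<m/(1+\eta)$. This is below $\frac{d\Gamma-1}{d\Gamma}m$ only when $\eta\ge\frac{1}{d\Gamma-1}$, and the hypotheses do not ensure that. So the ``$+1$ absorbs the constant'' remark is unjustified.

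\textbf{The repair is exactly your ``secondary obstacle''.} Since $\phi\ge 0$, every $\Phi(X)$ is either $0$ or at least $m$.
\begin{itemize}
\item If $\Phi(\sol\setminus\alg^{(I)})=0$, then $D_I=\Phi(\alg^{(I)}\setminus\sol)>0$, so $D_I\ge m$.
\item Otherwise $\Phi(\sol\setminus\alg^{(I)})\ge m$, and $D_I>(d\Gamma-1)\Phi(\sol\setminus\alg^{(I)})\ge m$. This uses $d\Gamma>5/2$, which follows from $0<\eps<\frac23-\frac{5}{3d\Gamma}$.
\end{itemize}
Either way $D_I\ge m>m/(1+\eta)$, which is the desired contradiction.

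\textbf{The $\sqrt{1+\eps}$ factor in $\chi'$ has a similar issue.} It cannot literally be absorbed into the $1+\eps$ slack, because that slack is used up by the one-move overshoot. The paper sidesteps this by writing the forward-loop threshold as $(d\Gamma'+\kappa)\Phi'(\sol)$ instead of $(d\Gamma'+\kappa)\chi'$. The discrepancy only affects constants.
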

	\vspace{0.3cm}
	
	\begin{algorithm}[t]
		\DontPrintSemicolon
		\KwData{$(x,r)$ a tentative solution to the $\Pp_R$ LP.}
		Check the problem constraints $Ax\leq b$ and return any violated constraint\label{line:aso-check-p-lp}\;
		$w\gets$ weight function such that $\forall i\in\Omega, w_i = u_ix_i-\ell_i(x_i-1)$\;
		$w'\gets$ weight function such that $\forall i\in\Omega, w'_i = \ell_i(1-x_i)$\;
		$\alg\gets$ output of \Cref{lem:55} on $w, w'$\label{line:aso-sprime}\;
		\For{$\alg_j\in\alg$}{
			\If{$\sum_{e\notin \alg_j}u_ex_e+\sum_{e\in \alg_j}\ell_e(x_e-1)>r$\label{line:aso-checkviolated}}{
				$\bar w\gets$ weight function such that $\forall i\in\Omega$, $w_i=\begin{cases}
					\ell_i & \text{if } i\in \alg_j \\
					u_i & \text{otherwise}
				\end{cases}$\;
				\Return{$\bar w(x) \leq \bar w(\alg_j)+r$\texttt{ is broken}}\label{line:aso-constr-return}
			}
		}
		\Return{\texttt{Feasible}}
		\caption{\textsc{General}-$\Pp_R$-\textsc{aso}$(x, r)$: Approximate separation oracle for the $\Pp_R$ LP.}
		\label{alg:general-aso}
	\end{algorithm}
	
	These results imply \Cref{lem:55}, which provides a very controlled solution on $w$ and $w'$.
	
	Now, we get to the last lemma before \Cref{thm:main}, which provides an ASO described in Algorithm~\ref{alg:general-aso}, and implies \Cref{thm:main} through \Cref{th21}.
	
	\begin{restatable}[generalizes Lemma 5.6 of \cite{GaneshRobustAlgorithmsTSP2023}]{lemma}{lemFiveSix}\label{lemFiveSix}
		Fix any $\Gamma'>1, \kappa>0, \eps\in(0,1/4)$, and let $\Gamma$ be as in \Cref{lem:55}. Let $\alpha=d\Gamma\Lambda(\Lambda[d\Gamma'+\kappa+1+\eps]+1)+1$ and $\beta=d\Gamma\Lambda$. Then, there exists an $(\alpha,\beta)$-ASO for the LP formulation of the robust version of $\Pp$.
	\end{restatable}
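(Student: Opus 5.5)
The plan is to show that Algorithm~\ref{alg:general-aso} is the required $(\alpha,\beta)$-ASO. It runs in polynomial time because \Cref{lem:55} returns polynomially many solutions. Two things must be checked: every constraint the algorithm returns really is violated, and if it returns \texttt{Feasible} then the approximate regret constraints hold. Line~\ref{line:aso-check-p-lp} handles the problem constraints $Ax\le b$, so from then on $x$ is feasible for the non-robust relaxation and \Cref{lem:55} applies.

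\emph{Soundness.} For any solution $A$ set $R(A)=\sum_{e\notin A}u_ex_e+\sum_{e\in A}\ell_e(x_e-1)$. With $\bar w$ defined as in the algorithm ($\ell$ on $A$, $u$ elsewhere), we have $\bar w(x)-\bar w(A)=R(A)$. Since $\opt_{\bar w}\le \bar w(A)$, the test $R(\alg_j)>r$ on Line~\ref{line:aso-checkviolated} implies $\bar w(x)>\opt_{\bar w}+r$. So the returned regret constraint for $\bar w\in[\ell,u]$ is genuinely violated.

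\emph{Completeness.} Suppose some $\sol$ and $w_\sol\in[\ell,u]$ satisfy $w_\sol(x)>\alpha w_\sol(\sol)+\beta r$. For fixed $\sol$, the expression $w(x)-\alpha w(\sol)$ is maximised at $w_i=\ell_i$ on $\sol$ (the coefficient there is $x_i-\alpha\le 0$) and $w_i=u_i$ off $\sol$. Writing $ux(B)=\sum_{i\in B}u_ix_i$, the hypothesis therefore becomes
\[ ux(\Omega\setminus\sol) > \beta r+\textstyle\sum_{i\in\sol}\ell_i(\alpha-x_i)\ \ge\ \beta r+(\alpha-1)\,\ell(\sol). \]
Now take $A=\alg_j$ from \Cref{lem:55}, with $w,w'$ as in that lemma, and set $c=d\Gamma'+\kappa+1+\eps$. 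Note that $w'\le\ell$ and $ux\le w\le ux+\ell$ pointwise. I will lower-bound $R(A)=ux(\Omega\setminus A)-w'(A)$ in four steps.
\begin{itemize}
\item Split the first term: $ux(\Omega\setminus A)\ge ux(\Omega\setminus(A\cup\sol))+ux(\sol\setminus A)$.
\item Convert the second piece to $w$: $ux(\sol\setminus A)\ge w(\sol\setminus A)-\ell(\sol)$.
\item Apply the first bullet of \Cref{lem:55}: $w(\sol\setminus A)\ge \frac{1}{d\Gamma\Lambda}w(A\setminus\sol)\ge\frac1\beta ux(A\setminus\sol)$.
\item Apply the second bullet of \Cref{lem:55}: $w'(A)\le c\Lambda\,w'(\sol)\le c\Lambda\,\ell(\sol)$.
\end{itemize}
Since $\beta\ge1$, combining these gives
\[ R(A)\ \ge\ \tfrac1\beta\,ux(\Omega\setminus\sol)-(1+c\Lambda)\,\ell(\sol)\ >\ r+\Big(\tfrac{\alpha-1}{\beta}-1-c\Lambda\Big)\ell(\sol). \]
The choice $\alpha-1=d\Gamma\Lambda(\Lambda c+1)=\beta(c\Lambda+1)$ makes the bracket exactly zero. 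Hence $R(\alg_j)>r$, so the algorithm detects a violated constraint and does not return \texttt{Feasible}.

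The main obstacle is this completeness chain. The difference-approximation guarantee of \Cref{lem:55} is stated for $w$, but the quantity being tested, $R$, involves $ux$ and $w'$ separately. The gap between them is controlled only by $\ell(\sol)$, and that term must be absorbed into the $(\alpha-1)\ell(\sol)$ slack; this is exactly what fixes the value of $\alpha$. The degenerate cases $\Phi(\sol)=0$ or $\Phi'(\sol)=0$ are already covered inside \Cref{lem:55}, so they need no separate treatment here.
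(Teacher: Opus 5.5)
Your proposal is correct and follows essentially the same route as the paper. Both use Algorithm~\ref{alg:general-aso}, reduce the violating scenario to $\ell$ on $\sol$ and $u$ elsewhere, and apply both guarantees of \Cref{lem:55} to show that the regret of $x$ against $\alg_j$ exceeds $r$, with the $\ell(\sol)$ loss absorbed exactly by $\alpha-1=\beta(c\Lambda+1)$. Your bookkeeping with $ux$ and $w'$ is slightly cleaner than the paper's, and you also check soundness of the returned constraint and justify the extremal-scenario step, both of which the paper leaves implicit.
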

	\begin{proof}
		We show that Algorithm~\ref{alg:general-aso} is an $(\alpha,\beta)$-ASO for the LP formulation of $\Pp_R$. Since the algorithm checks whether the problem constraints are violated first, we only have to show that if there is a weight function $w^*$ and a solution $\sol$ such that
		$$\sum_{e\in\Omega}w^*_ex_e>\alpha\sum_{e\in \sol}w^*_e+\beta r,$$
		then if $\alg_i$ is the solution whose existence is guaranteed by \Cref{lem:55}, we show that Line~\ref{line:aso-constr-return} returns a constraint violation during the iteration on $\alg_i$.
		
		Suppose that such $w^*$ and $\sol$ exist. Now, let $\bar w$ be a weight function such that for all $e\in\Omega$, $\bar w_e = \ell_e$ if $e\in S$ and $u_e$ otherwise. Then:
		$$\sum_{e\in\Omega}\bar w_ex_e=\sum_{e\in \sol}\ell_ex_e + \sum_{e\notin \sol} u_ex_e >  \alpha\sum_{e\in \sol}\ell_e+\beta r.$$
		
		This implies that the regret of $x$ against $\sol$ is at least $(\alpha-1)\sum_{e\in \sol}\ell_e+\beta r$.
		
		Now, let $\alg_j$ be the solution whose existence is guaranteed by \Cref{lem:55} on $w$ such that $w_e=u_ex_e + \ell_e(1-x_e)$ and $w'$ such that $w'_e = \ell_e(1-x_e)$. The regret of the fractional solution $x$ against $\sol$ is:
		$$\sum_{e\notin \sol} u_ex_e - \sum_{e\in \sol}\ell_e(1-x_e)=w(\Omega\setminus \sol)-\sum_{e\in \Omega} \ell_e(1-x_e).$$
		
		Therefore, with the earlier lower bound on the regret of $x$ against $\sol$, we get:
		$$w(\Omega\setminus \sol)-\sum_{e\in \Omega} \ell_e(1-x_e)>(\alpha-1)\sum_{e\in \sol}\ell_e+\beta r.$$
		
		Now, we decompose $w(\Omega\setminus \sol)=w(\Omega\setminus (\sol\cup \alg_j))+w(\alg_j\setminus \sol)$ and isolate the second term:
		\begin{align*}
			w(\alg_j\setminus S)&>(\alpha-1)\sum_{e\in \sol}\ell_e+\beta r-w(\Omega\setminus (\sol\cup\alg_j))+\sum_{e\in \Omega} \ell_e(1-x_e) \\
			&= (\alpha-1)\sum_{e\in \sol}\ell_e+\beta r-\sum_{e\in\Omega\setminus (\sol\cup\alg_j)}u_ex_e + \sum_{e\in \sol\cup\alg_j} \ell_e(1-x_e).
		\end{align*}
		
		Now, the fact that from \Cref{lem:55}, $w(\alg_j\setminus \sol)\leq \beta w(\sol\setminus \alg_j)$ means that:
		$$w(\sol\setminus \alg_j)>\frac1\beta\left[(\alpha-1)\sum_{e\in \sol}\ell_e+\beta r-\sum_{e\in\Omega\setminus (\sol\cup\alg_j)}u_ex_e + \sum_{e\in \sol\cup\alg_j} \ell_e(1-x_e)\right].$$
		
		We are interested in the regret of the fractional solution $x$ against $\alg_j$, and we can now lower-bound it:
		\begin{align*}
			&w(\Omega\setminus \alg_j) - \sum_{e\in\Omega}\ell_e(1-x_e) \\
			&=w(\Omega\setminus(\sol\cup\alg_j)) + w(\sol\setminus \alg_j) - \sum_{e\in\Omega}\ell_e(1-x_e) \\
			&>\sum_{e\in\Omega\setminus(\sol\cup\alg_j)}(u_ex_e+\ell_e(1-x_e))\\
			&\qquad+\frac1\beta\left[(\alpha-1)\sum_{e\in \sol}\ell_e+\beta r-\sum_{e\in\Omega\setminus (\sol\cup\alg_j)}u_ex_e + \sum_{e\in \sol\cup\alg_j} \ell_e(1-x_e)\right] - \sum_{e\in\Omega}\ell_e(1-x_e) \\
			&=r + \frac{\beta-1}\beta\sum_{e\in\Omega\setminus(\sol\cup\alg_j)}u_ex_e + \frac{\alpha-1}{\beta}\sum_{e\in \sol}\ell_e +\frac1\beta\sum_{e\in \sol\cup\alg_j} \ell_e(1-x_e)-\sum_{e\in \sol\cup\alg_j} \ell_e(1-x_e)\\
			&> r + \frac{\beta-1}\beta\sum_{e\in\Omega\setminus(\sol\cup\alg_j)}u_ex_e + \frac{\alpha-1-\beta}{\beta}\sum_{e\in \sol}\ell_e +\frac1\beta\sum_{e\in \sol\cup\alg_j} \ell_e(1-x_e)-\sum_{e\in \alg_j\setminus \sol} \ell_e(1-x_e).
		\end{align*}
		
		Finally, since $\frac{\alpha-1-\beta}{\beta}=\Lambda(d\Gamma'+\kappa+1+\eps)$, the difference-approximation guarantee of \Cref{lem:55} ensures that:
		$$\sum_{e\in \alg_j\setminus \sol} \ell_e(1-x_e)\leq \frac{\alpha-1-\beta}\beta\sum_{e\in S}\ell_e(1-x_e)\leq \frac{\alpha-1-\beta}\beta\sum_{e\in \sol}\ell_e.$$
		
		Therefore, the regret of the fractional solution $x$ against $\alg_j$ on $w$ is lower-bounded as follows:
		\begin{align*}
			&w(\Omega\setminus \alg_j) - \sum_{e_in\Omega}\ell_e(1-x_e) \\
			&\qquad > r + \frac{\beta-1}\beta\sum_{e\in\Omega\setminus(\sol\cup\alg_j)}u_ex_e + \frac1\beta\sum_{e\in \sol\cup\alg_j} \ell_e(1-x_e) \\
			&\qquad > r.
		\end{align*}
		
		This concludes, and shows that the algorithm \textsc{General}-$\Pp_R$-\textsc{aso} returns a violated constraint.
	\end{proof}
	
	We show in the following \Cref{sec:apx-params} that we can always take $\Gamma=\Gamma'=17$ and $\kappa=d$ in this statement; these are the values used in \Cref{thm:main}.
	
	\subsection{Simple bounds for $\Gamma,\Gamma',\kappa$}\label{sec:apx-params}
	In order to make the results more practical and comparable, we want to provide some fixed values of $\Gamma,\Gamma'$ and $\kappa$ which satisfy the necessary conditions, which are as follows. We want:
	$$\min\left\{\frac{d\Gamma - 1}{2d\Gamma} ,\frac{(d\Gamma - 1)(\sqrt{\Gamma}-1)(\sqrt{\Gamma}-1-\eps)\kappa}{d^2(1+\eps)\Gamma^2}\right\} - (e^{\zeta' (d\Gamma'+\kappa+1+\eps)} - 1) > 0\,,$$
	where $\zeta' = \frac{d(1+\eps)\Gamma'}{(\sqrt{\Gamma'}-1)(\sqrt{\Gamma'}-1-\eps)(d\Gamma'-1)(d\Gamma-1)}.$ We also know from \Cref{lemma:algconverges} that $0 < \eps < \frac23 -\frac5{3d\Gamma}$. First, to simplify the expression, we will take $\eps=\frac15$, which is valid as soon as $\Gamma\geq 4$, and we get:
	$$\min\left\{\frac{d\Gamma - 1}{2d\Gamma},\frac{(d\Gamma - 1)(\sqrt{\Gamma}-1)(\sqrt{\Gamma}-\frac65)\kappa}{d^2\frac65\Gamma^2}\right\} - (e^{\zeta' (d\Gamma'+\kappa+\frac65)} - 1) > 0,$$
	with:
	$$\zeta' = \frac{d\frac65\Gamma'}{(\sqrt{\Gamma'}-1)(\sqrt{\Gamma'}-\frac65)(d\Gamma'-1)(d\Gamma-1)}.$$
	
	In order to minimize $\max(\Gamma,\Gamma')$, we can simply take $\Gamma=\Gamma'$, and furthermore, we set $\kappa=d$ (at least one of $\Gamma,\Gamma',\kappa$ needs to be a function of $d$). Computationally, we can see that the resulting condition that we need is an non-decreasing function of $d$. Therefore, we simply need to verify that it is positive for $d=1$, and we are done. The minimal value of $\Gamma=\Gamma'$ such that this is the case is $17$ (obtained computationally). Thus, we can write our approximation results directly using $\Gamma=\Gamma'=17$, and $\kappa=d$.
	
	\section{Application to Weighted $k$-Set Cover}\label{sec:appsc}
	
	In this section, we will show that the local search-based $H_k$-approximation algorithm for Weighted $k$-Set Cover introduced in \cite{GLLLocalSearchSetCover} can be modified slightly to satisfy the prerequisites, and therefore that the Weighted $k$-Set Cover problems admits a robust approximation algorithm. Indeed, the way that this algorithm was originally written uses the fact that approximately solving an instance of the Set Cover problem on a set of sets $\Omega$ can be done by approximately solving the problem on the downward-closure $\bar\Omega=\{A|B\in\Omega, A\subs B\}$, where a set $s\in\bar\Omega\setminus\Omega$ has weight the minimal weight of a superset of it in $\Omega$. Then, one can simply project a solution back to $\Omega$ by replacing any set of $\bar\Omega\setminus \Omega$ in the solution with its minimal-weight superset. However, this is not true in the robust setting which we are studying, since we need all the elements that we can select to have independent weights. This means that if we were to consider $\bar\Omega$ instead of $\Omega$ in the instance where the weight are in $[\ell,u]$, we could have a set $s\in\bar\Omega\setminus\Omega$ that is valued according to $\ell$ while it's minimal-cost superset could be valued according to $u$. In such a case, projecting a solution of $\bar\Omega$ that has a robust approximation guarantee to $\Omega$ like in the standard, non-robust case, can mean that some elements will be priced according to $u$ when we thought they were priced according to $\ell$. This can of course cause an arbitrary increase in the cost of the solution.
	
	In order to avoid this problem, we will rewrite the local search algorithm slightly by always working on $\Omega$ while implicitly maintaining a solution in $\bar\Omega$ which we will use to get the approximation guarantee. The algorithm works on a solution state $(S,D)$ where $S\subs \Omega$ is a valid set cover, and $D$ is a function mapping every set of $S$ to a set of $\bar\Omega$, such that $\{D_s\mid s\in S\}$ is a partition of $\bigcup\Omega$.
	
	An instance is a set $\Omega$ of sets, along with a weight function $w:\Omega\rightarrow\RR_+$. The local search algorithm works on a potential $\Phi$ which is defined for some solution state $(S,D)$ as:
	$$\Phi(S)=\sum_{s\in S} H_{|D(s)|}w_s,$$
	where we recall that $H_k$ is the $k$-th harmonic number $\sum_{i=1}^k\frac1k$. The local search starts from some solution state $(S_0,D_0)$. As soon as we have a solution $S_0$, it is easy to find some suitable $D_0$ by assigning every element $x\in\bigcup\Omega$ to be covered to some $s_x\in S_0$ such that $x\in s_x$. Then, it can be verified easily that $D_0: s\in S_0\mapsto \{x|s_x=s\}$ satisfies the required properties.
	
	Now, we are ready to describe the moves of the algorithm. When the algorithm is at a solution state $(S_i, D_i)$, each subset $Z\subs\bar\Omega$ of cardinality at most $k$ and containing only pairwise disjoint subsets corresponds to one move. For each such $Z$, there is a move which transforms $(S_i, D_i)$ to $(S'_i, D'_i)$ as follows. For each $s'\in Z$, let $t_{s'}\in\Omega$ be the minimal-weight superset of $s'$ in $\Omega$. We can then set:
	$$S'_i = \left\{t_{s'}\mid s'\in Z\right\}\cup \left\{s\mid s\in S_i, D_s\subsetneq \bigcup_{s'\in Z} s'\right\}.$$
	
	Then, for each $s'\in Z$, set $D'_i(t_{s'})=s'$ and otherwise, let $D'_i(s)=D_i(s)$ for any other set $s\in S'_i$. For shortened notations, we write that $S'_i=A_{Z}(S_i, D_i)$ and $D'_i=B_{Z}(S_i,D_i)$, and we say that $(S'_i, D'_i)$ is the move \emph{generated by} $Z$.
	
	Therefore, the set of moves available in a solution state $(S_i, D_i)$ is:
	$$\Pi_{(S_i,D_i)} = \left\{(A_{Z}(S_i, D_i), B_{Z}(S_i, D_i))\mid Z\subs\bar\Omega, |Z|\leq k,  \forall a,b\in Z, a\cap b=\es\right\}.$$

	Remark that this set has size at most ${|\bar\Omega|\choose k}\leq (2^kn)^k=O(n^k)$. Finally, on a solution state $(S_i, D_i)$, the algorithm chooses a move $(S_{i+1}, D_{i+1})\in\Pi_{(S_i,D_i)}$ such that $\Phi(S_{i+1},D_{i+1})\leq \Phi(S_i,D_i)$, if such a move exists. In order to guarantee polynomial-time termination, it suffices to stop as soon as the improvement in $\Phi$ is not better than some constant fraction of the potential.
	
	The proof that every local optimum is an $H_k$-approximation from \cite{GLLLocalSearchSetCover} still stands here, since we only add possible moves. Furthermore, we have the added benefit that the algorithm maintains at all times a solution on $\Omega$ which is controlled because it has the same weight and potential as $\{D_i(s)\mid s\in S_i\}\subs\bar\Omega$, which is the solution that \cite{GLLLocalSearchSetCover} shows has the approximation guarantee.
	
	\paragraph*{Well-behaved potential.}
	Given a weight function $w$ on $\Omega$, the local search algorithm uses the following potential function for some solution state $(S,D)$: $\Phi(S,D) = \sum_{s\in S} w(s)H_{|s|}$, therefore we set for all $s\in \Omega$, $\phi(s)=w(s)H_{|s|}$ and the potential can be expressed as $\Phi(S,D)=\sum_{s\in S} \phi(s)$, so it is linear. Furthermore, remark that for any solution $S$, we have: $w(S)\leq \Phi(S) \leq H_k w(S)$, and thus \Cref{ppt:bounded-potential} is satisfied for $\Lambda_1=1, \Lambda_2=H_k$.
	
	\paragraph*{Minimal Move Covering.}
	Let us fix two solution states $(S, D)$ and $(T,E)$. The set
	$$\Pi_{(S,D)}^{(T,E)}=\left\{(S',D')\mid (S',D')\in\Pi_{(S,D)}, \es\subsetneq S'\setminus S\subs T\right\}$$
	contains only moves $(S',D')$ such that $S'\setminus S$ contains between $1$ and $k$ sets of $T\setminus S$. Indeed, by definition of the moves, $|S'\setminus S|\leq k$. Therefore, these moves correspond to sets $S'\subs\bar\Omega$ of size $|S'|\leq k$ consisting of pairwise disjoint sets whose minimal-weight supersets are all in $T\setminus S$. 
	
	We define $\alpha$ over $\Pi_{(S,D)}^{(T,E)}$ as follows. For each set $s\in S\setminus T$, there exists a set $E_s\subs T$ which covers the elements of $D(s)$. Furthermore, since $|D(s)|\leq k$, we can find a set $E_s$ of cardinality at most $k$. The move $(S',D')$ generated by $E_s$ is indeed in $\Pi_{(S,D)}^{(T,E)}$ since $S'\setminus S=E_S\subs T$; we set $\alpha(S',D')=1$. Remark in particular that $s\notin S'$. Once we have set the $\alpha$-value of these $|S\setminus T|$ moves to $1$, we set all the others to $1$. We will show that this function $\alpha$ satisfies the minimal move covering property.
	
	First, let $s\in S\setminus T$. Then the sum
	$$\sum_{\substack{(S',D')\in \Pi_{(S,D)}^{(T,E)}\\ s\notin S'}}\alpha(S',D')$$
	contains at least the move generated by $E_s$ as explained above, and since its $\alpha$-value is $1$, this satisfies the first property.
	
	Now, let $s'\in T\setminus S$. We want to upper-bound the sum:
	$$\sum_{\substack{(S',D')\in \Pi_{(S,D)}^{(T,E)}\\ s'\in S'\setminus S}} \alpha(S',D').$$
	
	To do so, remark that this is equivalent to counting how many sets $s\in S\setminus T$ have $s'\in E_s$. Since $\{D(s)\mid s\in S\setminus T\}$ is a partition, a specific set $s'\in S\setminus T$ can only be used at most $k$ times to cover items of some $D(s)$, since it only contains $k$ different items. Thus:
	$$\sum_{\substack{(S',D')\in \Pi_{(S,D)}^{(T,E)}\\ s'\in S'\setminus S}} \alpha(S',D')\leq k$$
	
	Finally, remark that the total mass of $\alpha$ is at most $|S\setminus T|\leq n$ as remarked above. Therefore, this function $\alpha$ satisfies the $k$-MMC property.
	
	\paragraph*{Efficient Good Move Retrieval.}
	Finally, we show that the efficient good move retrieval property is satisfied (in the following, we omit the second half a solution state for legibility, and keep only the set cover in $\Omega$):

	\begin{lemma}
		Given two potentials $\Phi$ and $\Phi'$, a solution $S$ and some bound $W' > 0$, assume there is a move $S'\in\Pi_S$ such that $\Phi(S'\setminus S)-\Phi(S\setminus S')\leq W$ and has $\Phi'(S'\setminus S)-\Phi'(S\setminus S')\leq W'$. Then, there is a polynomial-time algorithm that finds a move $S''$ such that:
		\begin{itemize}
			\item $\Phi(S''\setminus S)-\Phi(S\setminus S'')\leq W$ and
			\item $\Phi'(S''\setminus S)-\Phi'(S\setminus S'')\leq W'$.
		\end{itemize}
	\end{lemma}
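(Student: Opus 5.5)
The plan is exhaustive enumeration: for Weighted $k$-Set Cover, the number of moves from a fixed solution state $(S,D)$ is polynomial. Each move is generated by a set $Z\subs\bar\Omega$ of at most $k$ pairwise disjoint sets. Since every set in $\Omega$ has cardinality at most $k$, each $s\in\Omega$ has at most $2^k$ subsets. Hence $|\bar\Omega|\leq 2^k n$, and the number of candidate $Z$ is at most $\sum_{j\leq k}{|\bar\Omega|\choose j}\leq (2^kn)^k+1$, as already remarked when $\Pi_{(S_i,D_i)}$ was introduced. For fixed $k$ this is $O(n^k)$ and hence polynomial.

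The algorithm enumerates all subsets $Z\subs\bar\Omega$ with $|Z|\leq k$ and checks pairwise disjointness in $O(k^3)$ time. For each valid $Z$ it computes the generated move $S'' = A_Z(S,D)$ together with $B_Z(S,D)$. This requires finding, for each $s'\in Z$, the minimal-weight superset $t_{s'}\in\Omega$ by a linear scan. It also requires determining which $s\in S$ have $D(s)\subsetneq\bigcup_{s'\in Z}s'$ and are therefore removed. Next, it evaluates the two quantities $\Phi(S''\setminus S)-\Phi(S\setminus S'')$ and $\Phi'(S''\setminus S)-\Phi'(S\setminus S'')$. Both are linear sums over at most $|S|+k$ elements, because both potentials are of the form $\sum\phi(i)$ by \Cref{ppt:bounded-potential}. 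Finally, it returns any move for which both quantities are at most $W$ and $W'$ respectively. By hypothesis the move $S'$ appears in this enumeration, since $\Pi_S$ is exactly the set of moves generated by such $Z$. So some move passes both tests, and taking $S''=S'$ already meets the bounds with no $(1+\eps)$ loss. This is why the lemma is stated without the $(1+\eps)$ slack, and it immediately implies \Cref{efficient-gm}.

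The step needing the most care is not the search but a bookkeeping issue. The potentials must be functions of the set cover component alone, so that the comparison of the $S$-parts is meaningful. Here $\phi(s)=w(s)H_{|s|}$ depends only on $s\in\Omega$, which suffices. The move's effect on $D$ does not enter the two inequalities, so omitting $D$ as the statement does is harmless. A second point is that "polynomial" relies on $k$ being a constant; I would state this explicitly. With $k$ constant, the total running time is $O((2^kn)^k\cdot\mathrm{poly}(n))$.
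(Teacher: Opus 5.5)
Your proposal is correct and follows the same route as the paper. Since $k$ is a constant, $|\Pi_S|=O((2^kn)^k)$ is polynomial, so exhaustively enumerating every move and testing the two linear inequalities is guaranteed to find $S'$ itself or another move satisfying both bounds, with no $(1+\eps)$ loss. Your count $\sum_{j\le k}\binom{|\bar\Omega|}{j}$ is slightly more careful than the paper's $\binom{n2^k}{k}$.
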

	\begin{proof}
		Since $|\Pi_S|\leq {n2^k\choose k} \leq 2^{k^2}n^k=O(n^k)$ (recall that $k$ is a constant), we can simply iterate in polynomial-time over every set $S''\in\Pi_S$ and we will find either $S'$ or a set with the same properties.
	\end{proof}
	
	\paragraph*{Robust approximation algorithm.}
	
	From the prerequisites, \Cref{thm:main} implies the following result as the standard LP relaxation has integrality gap at most $H_k$ and there exists an $H_k$-approximation algorithm \cite{GLLLocalSearchSetCover}:
	
	\thmSetcover
	
	\section{Conclusion}
	
	In this paper, we have developed some proofs found in the first paper about robust approximation algorithms for NP-hard problems \cite{GaneshRobustAlgorithmsTSP2023}. We believe that there are still many questions left to be answered. We currently do not know of NP-hard problems that admit an approximation algorithm but no robust approximation algorithm. Showing either that they exist, or that there is a way to transform any approximation algorithm to a robust approximation algorithm would be a big step toward understanding how robust approximation behaves.
	
	A major obstacle to this is that the robust approximation setting is less general that it seems at first glance. Indeed, it only works on problems that admit an approximation algorithm in the non-metric case (since the weights vary independently) and such that the weight is not involved in the feasibility constraints (this disqualifies some scheduling problems, for instance). Similarly, our more general robust approximation theorem (\Cref{thm:main}) is difficult to apply because these constraints on the kind of problem that can be studied in this robust approximation framework are limiting, and local search with approximation guarantees are somewhat rare when we cannot assume that we work on a metric.
	
	However, in light of the question of knowing whether any approximation algorithm can be turned to a robust approximation algorithm, it seems that we are simply lacking some important tools to release some of the assumptions that we currently need, and some more interesting work needs to be done before the question can be fully solved.

	\newpage
	\appendix
		
	\section{Proof of \Cref{lem:55} and necessary lemmas}
	
	\lemFiveFourteen*
	\begin{proof}
		
		We analyze one forward phase of the \textsc{DoubleApprox} algorithm. We let $\alg_{\rho,j}^{(i+1)}$ denote the value of the variable $\alg_\rho^{(i+1)}$ after $j$ iterations of the loop on Line~\ref{line:firstwhile}, and we let $J$ be the total number of iterations. Therefore, the final value of $\alg^{(i)}$ is equal to $\alg_{\rho,0}^{(i+1)}$. Because of the loop condition and the assumption that $\rho\leq(1+\eps)\Phi(\alg^{(i)}\setminus \sol)$, we have:
		
		$$\Phi(\alg^{(i+1)}_\rho) > \Phi(\alg^{(i)}) - \rho/2 \geq  \Phi(\alg^{(i)}) - \frac{1+\eps}{2}\Phi(\alg^{(i)} \setminus \sol).$$
		
		Then, by \Cref{lem512} and the assumption that $\Phi(\alg^{(i)} \setminus \sol) > d\Gamma \cdot \Phi(\sol \setminus \alg^{(i)})$ in the lemma statement, we have:
		
		\begin{align*}
			&\Phi(\alg^{(i+1)}_\rho \setminus \sol) \geq \Phi(\alg^{(i+1)}_\rho \setminus \sol) - \Phi(\sol \setminus \alg^{(i+1)}_{\rho}) \\
			\qquad&=\Phi(\alg^{(i)} \setminus \sol) - \Phi(\sol \setminus \alg^{(i)}) + \Phi(\alg^{(i+1)}_\rho) - \Phi(\alg^{(i)}) \text{ (\Cref{lem512})}\\
			&\geq \Phi(\alg^{(i)} \setminus \sol) - \Phi(\sol \setminus \alg^{(i)}) - \frac{1+\eps}{2}\Phi(\alg^{(i)} \setminus \sol) \\
			& \geq \left(\frac{1-\eps}{2} - \frac{1}{d\Gamma}\right)\Phi(\alg^{(i)} \setminus \sol) \\
			& \geq \frac1{10}\rho \text{ (since $\eps\leq \frac23-\frac5{3d\Gamma}$)}\,.
		\end{align*}
		
		Because of the assumption that $\Phi(\alg^{(i+1)}_\rho \setminus \sol) > d\Gamma \cdot \Phi(\sol \setminus \alg^{(i+1)}_{\rho})$ in the lemma statement, \Cref{lem510sc} guarantees that \textsc{GreedySwap} will find a good move. Moreover, since $\frac1{n^2}\Phi(\alg^{(i+1)}_\rho \setminus \sol) \geq \frac1{10n^2}\rho$, this move is accepted in Line~\ref{line:move-improves-enough} of \textsc{GreedySwap}, therefore $stop$ is never set to $1$. However, the loop will end in polynomial time because each move performed by \textsc{GreedySwap} improves the potential by at least $\frac1{10n^2}\rho>0$.
		
		\bigskip
		Now, suppose $\alg^{(i+1)}_{\rho, J}$ satisfies $\Phi(\alg^{(i+1)}_{\rho, J}) \leq \Phi(\alg^{(i)}) - \rho/2$, meaning that the while loop at Line~\ref{line:firstwhile} of \textsc{DoubleApprox} exits and ends the forward phase. In that case:
		
		\begin{align*}
			\Phi(\alg^{(i)}) - \Phi(\alg^{(i+1)}_{\rho, J}) &\geq \rho/2 \geq \frac{1}{2} \Phi(\alg^{(i)} \setminus \sol)\\
			&\geq \frac{1}{2} [\Phi(\alg^{(i)} \setminus \sol) - \Phi(\sol \setminus \alg^{(i)})]\\
			&= \frac{1}{2} [\Phi(\alg^{(i+1)}_{\rho,0} \setminus \sol) -\Phi(\sol \setminus \alg^{(i+1)}_{\rho,0})]\\
			&\geq \frac{1}{2}[\Phi(\alg^{(i+1)}_{\rho,J} \setminus \sol) - \Phi(\sol \setminus \alg^{(i+1)}_{\rho,J}) ]\\
			&\geq \frac{d\Gamma - 1}{2d\Gamma} \Phi(\alg^{(i+1)}_{\rho, J}\setminus \sol)\,.
		\end{align*}
		
		The second-to-last inequality uses \Cref{lem512}, which implies $\Phi(\alg^{(i+1)}_{\rho, j} \setminus \sol) - \Phi(\sol \setminus \alg^{(i+1)}_{\rho, j})$ is decreasing with swaps, and the last inequality holds by the assumption $\Phi(\alg^{(i+1)}_\rho \setminus \sol) > d\Gamma \cdot \Phi(\sol \setminus \alg^{(i+1)}_{\rho})$ in the lemma statement. Thus if $\Phi(\alg^{(i+1)}_{\rho, J}) \leq \Phi(\alg^{(i)}) - \rho/2$, the lemma holds.
		
		\bigskip
		Now, the loop might otherwise end with $\Phi(\alg^{(i+1)}_{\rho, J}) > \Phi(\alg^{(i)}) - \rho/2$, meaning that $\Phi'(\alg^{(i+1)}_{\rho, J}) > (d\Gamma' + \kappa)\chi'$. We want a lower bound on
		$$\Phi(\alg^{(i+1)}_{\rho, 0}) - \Phi(\alg^{(i+1)}_{\rho, J})= \sum_{j = 0}^{J-1}[\Phi(\alg^{(i+1)}_{\rho, j}) - \Phi(\alg^{(i+1)}_{\rho, j+1})]\,.$$
		
		We bound each $\Phi(\alg^{(i+1)}_{\rho, j}) - \Phi(\alg^{(i+1)}_{\rho, j+1})$ term using both \Cref{lem510sc} and Property~\ref{efficient-gm}. By \Cref{lem510sc} and the assumption in the lemma statement that  $\Phi(\alg^{(i+1)}_\rho \setminus \sol) > d\Gamma \cdot \Phi(\sol \setminus \alg^{(i+1)}_{\rho})$, we know there exists a move $S\in\Pi_\apij^{\sol}$ such that:
		
		$$\frac{(1+\eps)\Phi'(S\setminus \apij)-\Phi'(\apij\setminus S)}{\Phi(\apij\setminus S)-(1+\eps)\Phi(S\setminus \apij)} \leq \frac{d(1+\eps)\Gamma}{(\sqrt\Gamma-1)(\sqrt\Gamma-1-\eps)} \frac{\Phi'(\sol \setminus \apij)}{\Phi(\apij \setminus \sol)},$$
		
		where $\Phi(\apij\setminus S)-(1+\eps)\Phi(S\setminus \apij) > 0$.
		
		When \textsc{GreedySwap} is in the inner loop where $W$ is a $\sqrt{1+\eps}$-guess of $\Phi(S\setminus\apij)-\Phi(\apij\setminus S)$ and $W'$ is a $\sqrt{1+\eps}$-guess of $\Phi'(S\setminus\apij)-\Phi'(\apij\setminus S)$, Property~\ref{efficient-gm} applied with precision $\sqrt{1+\eps}-1>0$ (meaning that the $(1+\eps)$ in the statement of this property becomes $\sqrt{1+\eps}$ here) finds a move $S'\in\Pi_\apij^{\sol}$ such that:
		\begin{itemize}
			\item $\Phi'(S'\setminus\apij)-\Phi'(\apij\setminus S')\leq (1+\eps)\Phi'(S\setminus\apij)-\Phi'(\apij\setminus S)$, and
			\item $\Phi(S'\setminus\apij)-\Phi(\apij\setminus S')\leq (1+\eps)\Phi(S\setminus\apij)-\Phi(\apij\setminus S)$.
		\end{itemize}
		
		Therefore, the move $S^*\in\Pi_\apij^\sol$ chosen by the $(j+1)$-th call to \textsc{GreedySwap} satisfies:
		
		\begin{align*}
			\frac{\Phi'(S^*\setminus \apij)-\Phi'(\apij\setminus S^*)}{\Phi(\apij\setminus S^*)-\Phi(S^*\setminus\apij)}
			&\leq \frac{\Phi'(S'\setminus\apij)-\Phi'(\apij\setminus S')}{\Phi(\apij\setminus S')-\Phi(S'\setminus\apij)} \\
			&\leq \frac{(1+\eps)\Phi'(S\setminus\apij)-\Phi'(\apij\setminus S)}{\Phi(\apij\setminus S)-(1+\eps)\Phi(S\setminus\apij)} \\
			&\leq \frac{d(1+\eps)\Gamma}{(\sqrt\Gamma-1)(\sqrt\Gamma-1-\eps)} \frac{\Phi'(\sol \setminus \apij)}{\Phi(\apij \setminus \sol)}
		\end{align*}
		
		Rearranging terms and observing that $\Phi'(\sol) \geq \Phi'(\sol \setminus \alg^{(i+1)}_{\rho,j})$ gives:
		
		\begin{align*}
			&\Phi(\alg^{(i+1)}_{\rho, j}) - \Phi(\alg^{(i+1)}_{\rho, j+1})
					= \Phi(\apij\setminus S^*) - \Phi(S^*\setminus\apij)\\
			\qquad&\geq \frac{(\sqrt{\Gamma}-1)(\sqrt{\Gamma}-1-\eps)}{d(1+\eps)\Gamma} \cdot \Phi(\alg^{(i+1)}_{\rho,j} \setminus \sol) \frac{\Phi'(S^*\setminus\apij) - \Phi'(\apij\setminus S^*)}{\Phi'(\sol)}\\
			&= \frac{(\sqrt{\Gamma}-1)(\sqrt{\Gamma}-1-\eps)}{d(1+\eps)\Gamma} \cdot \Phi(\alg^{(i+1)}_{\rho,j} \setminus \sol) \frac{\Phi'(\alg^{(i+1)}_{\rho,j+1})-\Phi'(\alg^{(i+1)}_{\rho,j})}{\Phi'(\sol)}\,.
		\end{align*}
		
		We are now ready to sum this over all iterations of the forward phase:
		
		\begin{align*}
			\Phi&(\alg^{(i+1)}_{\rho, 0}) - \Phi(\alg^{(i+1)}_{\rho,J})\\
			&= \sum_{j = 0}^{J-1}[\Phi(\alg^{(i+1)}_{\rho, j}) - \Phi(\alg^{(i+1)}_{\rho, j+1})] \\
			&\geq \sum_{j=0}^{J-1} \frac{(\sqrt{\Gamma}-1)(\sqrt{\Gamma}-1-\eps)}{d(1+\eps)\Gamma} \cdot \Phi(\alg^{(i+1)}_{\rho,j} \setminus \sol) \frac{\Phi'(\alg^{(i+1)}_{\rho, j+1})-\Phi'(\alg^{(i+1)}_{\rho, j})}{\Phi'(\sol)} \\
			& \geq \frac{(\sqrt{\Gamma}-1)(\sqrt{\Gamma}-1-\eps)}{d(1+\eps)\Gamma} \sum_{j=0}^{J-1} [\Phi(\alg^{(i+1)}_{\rho, j} \setminus \sol) - \Phi(\sol \setminus \alg^{(i+1)}_{\rho, j})]\\
			&\qquad \cdot \frac{\Phi'(\alg^{(i+1)}_{\rho, j+1})-\Phi'(\alg^{(i+1)}_{\rho, j})}{\Phi'(\sol)} \\
			&\geq \frac{(\sqrt{\Gamma}-1)(\sqrt{\Gamma}-1-\eps)}{d(1+\eps)\Gamma} \sum_{j=0}^{J-1} [\Phi(\alg^{(i+1)}_{\rho, J} \setminus \sol) - \Phi(\sol \setminus \alg^{(i+1)}_{\rho, J})]\\
			&\qquad \cdot \frac{\Phi'(\alg^{(i+1)}_{\rho, j+1})-\Phi'(\alg^{(i+1)}_{\rho, j})}{\Phi'(\sol)}\\
			&\geq \frac{(d\Gamma - 1)(\sqrt{\Gamma}-1)(\sqrt{\Gamma}-1-\eps)}{d^2(1+\eps)\Gamma^2} \frac{\Phi(\alg^{(i+1)}_{\rho, J} \setminus \sol)}{\Phi'(\sol)}\\
			&\qquad \cdot \sum_{j=0}^{J-1}  \left[\Phi'(\alg^{(i+1)}_{\rho, j+1})-\Phi'(\alg^{(i+1)}_{\rho, j})\right]\\
			&=\frac{(d\Gamma - 1)(\sqrt{\Gamma}-1)(\sqrt{\Gamma}-1-\eps)}{d^2(1+\eps)\Gamma^2} \Phi(\alg^{(i+1)}_{\rho, J} \setminus \sol)  \frac{\Phi'(\alg^{(i+1)}_{\rho, J})-\Phi'(\alg^{(i+1)}_{\rho, 0})}{\Phi'(\sol)}\\
			&\geq \frac{(d\Gamma - 1)(\sqrt{\Gamma}-1)(\sqrt{\Gamma}-1-\eps)\kappa}{d^2(1+\eps)\Gamma^2} \Phi(\alg^{(i+1)}_{\rho, J} \setminus \sol)\,.
		\end{align*}
		
		The third-to-last inequality uses \Cref{lem512}, which implies that $\Phi(\alg^{(i+1)}_{\rho, j} \setminus \sol) - \Phi(\sol \setminus \alg^{(i+1)}_{\rho, j})$ is decreasing with swaps. The second-to-last inequality uses the assumption $\Phi(\alg^{(i+1)}_\rho \setminus \sol) > d\Gamma \cdot \Phi(\sol \setminus \alg^{(i+1)}_{\rho})$ in the statement of the lemma. The last inequality uses the fact that the while loop in Line~\ref{line:firstwhile} of \textsc{DoubleApprox} terminates because $\Phi'(\alg^{(i+1)}_{\rho, J}) > (d\Gamma' + \kappa)\chi'$, and lines~\ref{line:approx} and~\ref{line:secondwhile} of \textsc{DoubleApprox} give that  $\Phi'(\alg^{(i+1)}_{\rho, 0}) \leq d\Gamma' \chi'$, and also because $\chi'\geq \Phi(\sol)$.
	\end{proof}
	
	\smallskip
	\lemFiveFifteen*
	\begin{proof}
		Because $\Phi'(\alg^{(i+2)}_\rho) \geq d\Gamma'\chi'$ in a backward phase and $\chi' \geq \Phi'(\sol)$,  \Cref{lem510sc} guarantees that each time \textsc{GreedySwap} is called in Line~\ref{line:secondswap} of \textsc{DoubleApprox}, at least one move is possible. Since $\phi'$ only takes values that are multiples of $\frac{\eps}{n}\chi'$, and the last argument to \textsc{GreedySwap} is $\frac{\eps}{n}\chi'$ (which lower bounds the decrease in $\Phi'(\alg^{(i+2)}_\rho)$ due to any improving move), \textsc{GreedySwap} always makes a move.
		
		We let $\alg^{(i+2)}_{\rho, j}$ be the value of $\alg^{(i+2)}$ after $j$ calls to \textsc{GreedySwap} on $\alg^{(i+2)}$, and let $J$ be the total number of iterations of the loop on Line~\ref{line:secondwhile}. Then $\alg^{(i+2)}_{\rho, 0}$ is the final value of $\alg^{(i+1)}$ and the final value of $\alg^{(i+2)}$ is $\alg^{(i+2)}_{\rho, J}$.  We want to show the following upper bound:
		$$\Phi(\alg^{(i+2)}_{\rho, J}) - \Phi(\alg^{(i+2)}_{\rho, 0}) = \sum_{j = 0}^{J-1}[\Phi(\alg^{(i+2)}_{\rho, j+1}) - \Phi(\alg^{(i+2)}_{\rho, j})] \leq (e^{\zeta' T} - 1)  \Phi(\alg^{(i+2)}_{\rho, 0} \setminus \sol).$$
		
		For each $\Phi(\alg^{(i+2)}_{j+1}) - \Phi(\alg^{(i+2)}_{j})$ term, we use Lemma~\ref{lem510sc} and Lemma~\ref{efficient-gm}. Since  $\Phi'(\alg^{(i+2)}_\rho) > d\Gamma' \Phi'(\sol)$ in a backward phase, Lemma~\ref{lem510sc} implies that there exists some move $S\in\Pi_\aptj^\sol$ such that:
		$$\frac{(1+\eps)\Phi(S\setminus\aptj)-\Phi(\aptj\setminus S)}{\Phi'(\aptj\setminus S)-(1+\eps)\Phi'(S\setminus\aptj)} \leq \frac{d(1+\eps)\Gamma'}{(\sqrt{\Gamma'}-1)(\sqrt{\Gamma'}-1-\eps)} \frac{\Phi(\sol \setminus \aptj)}{\Phi'(\aptj \setminus \sol)},$$
		
		where $\Phi'(\aptj\setminus S)-(1+\eps)\Phi'(S\setminus\aptj)>0$.
		
		When \textsc{GreedySwap} is in the inner loop where $W$ is a $\sqrt{1+\eps}$-guess of $\Phi'(S\setminus\apij)-\Phi'(\apij\setminus S)$ and $W'$ is a $\sqrt{1+\eps}$-guess of $\Phi(S\setminus\apij)-\Phi(\apij\setminus S)$, \textsc{GreedySwap} uses \Cref{efficient-gm} applied with precision $\sqrt{1+\eps}-1>0$ (meaning that the $(1+\eps)$ in the statement of this property becomes $\sqrt{1+\eps}$ here) to find a move $S'\in\Pi_\aptj^\sol$ such that:
		\begin{itemize}
			\item $\Phi'(S'\setminus\aptj)-\Phi'(\aptj\setminus S')\leq (1+\eps)\Phi'(S\setminus\aptj)-\Phi'(\aptj\setminus S)$, and
			\item $\Phi(S'\setminus\aptj)-\Phi(\aptj\setminus S')\leq (1+\eps)\Phi(S\setminus\aptj)-\Phi(\aptj\setminus S)$.
		\end{itemize}
		
		Therefore, the move $S^*\in\Pi_\aptj^\sol$ returned by \textsc{GreedySwap} satisfies:
		
		\begin{align*}
			\frac{\Phi(S^*\setminus \aptj)-\Phi(\aptj\setminus S^*)}{\Phi'(\aptj\setminus S^*)-\Phi'(S^*\setminus\aptj)}
					&\leq \frac{\Phi(S'\setminus\aptj)-\Phi(\aptj\setminus S')}{\Phi'(\aptj\setminus S')-\Phi'(S'\setminus\aptj)} \\
			\qquad&\leq \frac{(1+\eps)\Phi(S\setminus\aptj)-\Phi(\aptj\setminus S)}{\Phi'(\aptj\setminus S)-(1+\eps)\Phi'(S\setminus\aptj)} \\
			&\leq \frac{d(1+\eps)\Gamma'}{(\sqrt\Gamma'-1)(\sqrt\Gamma'-1-\eps)} \frac{\Phi(\sol \setminus \aptj)}{\Phi'(\aptj \setminus \sol)} \\
			&\leq \frac{d(1+\eps)\Gamma'}{(\sqrt{\Gamma'}-1)(\sqrt{\Gamma'}-1-\eps)(d\Gamma'-1)d\Gamma} \frac{\Phi(\aptj \setminus \sol)}{\Phi'(\sol)}.
		\end{align*}
		
		The last inequality comes from the assumption that for all $j<J$, we have $\Phi(\alg^{(i+2)}_{\rho} \setminus \sol) > d\Gamma \cdot \Phi(\sol \setminus \alg^{(i+2)}_{\rho})$, and  $\Phi'(\alg^{(i+2)}_{\rho, j}) \geq d\Gamma' \Phi'(\sol) $ implies that
		$$\Phi'(\alg^{(i+2)}_{\rho, j} \setminus \sol) \geq \Phi'(\alg^{(i+2)}_{\rho, j}) - \Phi'(\sol) \geq (d\Gamma'-1) \Phi'(\sol).$$
		
		We can now compute:
		
		\begin{align*}
			&\Phi(\alg^{(i+2)}_{\rho, J}) - \Phi(\alg^{(i+2)}_{\rho, 0})\\ &= \sum_{j = 0}^{J-1}[\Phi(\alg^{(i+2)}_{\rho, j+1}) - \Phi(\alg^{(i+2)}_{\rho, j})] \nonumber \\
			&= \sum_{j = 0}^{J-1}\frac{\Phi(\alg^{(i+2)}_{\rho, j+1}) - \Phi(\alg^{(i+2)}_{\rho, j})}{\Phi'(\alg^{(i+2)}_{\rho, j}) - \Phi'(\alg^{(i+2)}_{\rho, j+1})} \cdot [\Phi'(\alg^{(i+2)}_{\rho, j}) - \Phi'(\alg^{(i+2)}_{\rho, j+1})] \nonumber \\
			&\leq  \frac{d(1+\eps)\Gamma'}{(\sqrt{\Gamma'}-1)(\sqrt{\Gamma'}-1-\eps)(d\Gamma'-1)d\Gamma} \sum_{j = 0}^{J-1} [\Phi(\alg^{(i+2)}_{\rho, j} \setminus \sol)] \nonumber\cdot \frac{\Phi'(\alg^{(i+2)}_{\rho, j}) - \Phi'(\alg^{(i+2)}_{\rho, j+1})}{\Phi'(\sol)} \nonumber \\
			&\leq \frac{d(1+\eps)\Gamma'}{(\sqrt{\Gamma'}-1)(\sqrt{\Gamma'}-1-\eps)(d\Gamma'-1)(d\Gamma-1)} \nonumber\cdot \sum_{j = 0}^{J-1}  [\Phi(\alg^{(i+2)}_{\rho, j} \setminus \sol) - \Phi(\sol \setminus \alg^{(i+2)}_{\rho, j})] \nonumber\\
			&\qquad\qquad  \cdot\frac{\Phi'(\alg^{(i+2)}_{\rho, j}) - \Phi'(\alg^{(i+2)}_{\rho, j+1})}{\Phi'(\sol)} \nonumber \\
			&= \zeta' \sum_{j = 0}^{J-1} [\Phi(\alg^{(i+2)}_{\rho, j} \setminus \sol) - \Phi(\sol \setminus \alg^{(i+2)}_{\rho, j})] \cdot \frac{\Phi'(\alg^{(i+2)}_{\rho, j}) - \Phi'(\alg^{(i+2)}_{\rho, j+1})}{\Phi'(\sol)}\,.
		\end{align*}
		
		The last inequality comes from the assumption that 
		\[
		\Phi(\alg^{(i+2)}_{\rho,j} \setminus \sol) > d\Gamma \cdot \Phi(\sol \setminus \alg^{(i+2)}_{\rho,j})
		\]
		for all $j\leq J$, which implies:
		\begin{align*}
			\Phi(\alg^{(i+2)}_{\rho, j} \setminus \sol) &= \frac{d\Gamma}{d\Gamma-1}\Phi(\alg^{(i+2)}_{\rho, j} \setminus \sol) - \frac{1}{d\Gamma-1}\Phi(\alg^{(i+2)}_{\rho, j} \setminus \sol)\\
			&< \frac{d\Gamma}{d\Gamma-1}\left[\Phi(\alg^{(i+2)}_{\rho, j} \setminus \sol) - \Phi(\sol \setminus \alg^{(i+2)}_{\rho, j})\right].
		\end{align*}

		It now suffices to show:
		
		$$\sum_{j = 0}^{J-1} [\Phi(\alg^{(i+2)}_{\rho, j} \setminus \sol) - \Phi(\sol \setminus \alg^{(i+2)}_{\rho, j})] \cdot \frac{\Phi'(\alg^{(i+2)}_{\rho, j}) - \Phi'(\alg^{(i+2)}_{\rho, j+1})}{\Phi'(\sol)} \leq $$
		$$\frac{e^{\zeta' T} - 1}{\zeta'} \Phi(\alg^{(i+2)}_{\rho, 0} \setminus \sol).$$ 
		
		This is done at the end of the proof of Lemma 5.15 in \cite{GaneshRobustAlgorithmsTSP2023} by analyzing the integral of a well-chosen function.
	\end{proof}
	
	\lemFiveSeventeen*
	\begin{proof}
		We let $\Delta(i) := \Phi(\alg^{(i)} \setminus \sol) - \Phi(\sol \setminus \alg^{(i)})$ for even $i$. We seek a contradiction by assuming that the assumptions of the lemma are true but that there is no intermediate value such as in the lemma. Then by the same argument as in the proof of \Cref{cor:swapgain}, for all $i$ we have:
		$$\Phi'(\alg^{(i)}_{\rho}) \leq (d\Gamma' +\kappa+ 1 + \eps) \Phi'(\sol).$$
		
		Therefore, if the lemma is false, we have that for all $i$ and $\rho$:
		$$\Phi(\alg^{(i)}_{\rho} \setminus \sol) > d\Gamma \Phi(\sol \setminus \alg^{(i)}_{\rho}).$$
		
		By \Cref{cor:swapgain}, and under the assumption of the lemma that
		$$\min\left\{\frac{d\Gamma - 1}{2d\Gamma} ,\frac{(d\Gamma - 1)(\sqrt{\Gamma}-1)(\sqrt{\Gamma}-1-\eps)\kappa}{d^2(1+\eps)\Gamma^2}\right\} - (e^{\zeta' (d\Gamma'+\kappa+1+\eps)} - 1) > 0,$$
		we know that for all $i$, $\Phi(\alg^{(i)})-\Phi(\alg^{(i+2)})\geq 0$ and therefore, the loop on Line~\ref{line:mainwhile} never breaks. We will reach a contradiction by showing that $\Delta(I)<\frac{d\Gamma-1}{d\Gamma}\min_{\substack{i\in\Omega\\\phi(i)>0}}\phi(i)$. Indeed, if this happens, we know from the proof of \Cref{lemma:backwardphase} how the inequality $\Phi(\alg^{(I)} \setminus \sol) > d\Gamma \Phi(\sol \setminus \alg^{(I)})$ implies that $\Phi(\alg^{(I)} \setminus \sol) - \Phi(\sol \setminus \alg^{(I)})>\frac{d\Gamma-1}{d\Gamma}\Phi(\alg^{(I)}\setminus\sol)$. However, since we do not have a $d\Gamma$-difference approximation, $\Phi(\alg^{(I)}\setminus\sol)>0$ which implies $\Phi(\alg^{(I)}\setminus\sol)>\min_{\substack{i\in\Omega\\\phi(i)>0}}\phi(i)$.
		
		Now, let us show how we reach this contradiction by studying how $\Delta(i)$ evolves. At first, $\Phi(0)\leq n\max_{i\in\Omega}\phi(i)$. Then, by \Cref{lem512}, we know that: $\Delta(i)-\Delta(i+2)=\Phi(\alg^{(i)})-\Phi(\alg^{(i+2)})$. Now, setting $\rho$ as in \Cref{lemma:forwardphase} such that $\rho \in [\Phi(\alg^{(i)} \setminus \sol), (1+\eps) \Phi(\alg^{(i)} \setminus \sol)]$, we have by \Cref{cor:swapgain}:
		
		\begin{align*}
			&\Delta(i)-\Delta(i+2) \\
			&\quad= \Phi(\alg^{(i)})-\Phi(\alg^{(i+2)}) \\
			&\quad\geq \left[\min\left\{\frac{d\Gamma - 1}{2d\Gamma} ,\frac{(d\Gamma - 1)(\sqrt{\Gamma}-1)(\sqrt{\Gamma}-1-\eps)\kappa}{d^2(1+\eps)\Gamma^2}\right\} - (e^{\zeta' (d\Gamma'+\kappa+1+\eps)} - 1)\right]\\
			&\quad \qquad  \cdot c(\alg^{(i+1)}_\rho \setminus \sol) \\
			&\quad\geq \left[\min\left\{\frac{d\Gamma - 1}{2d\Gamma} ,\frac{(d\Gamma - 1)(\sqrt{\Gamma}-1)(\sqrt{\Gamma}-1-\eps)\kappa}{d^2(1+\eps)\Gamma^2}\right\} - (e^{\zeta' (d\Gamma'+\kappa+1+\eps)} - 1)\right]\\
			&\quad \qquad\cdot [\Phi(\alg^{(i+1)}_\rho \setminus \sol) - \Phi(\sol \setminus \alg^{(i+1)}_\rho)].
		\end{align*}
		
		Since we have assumed that $\Phi(\alg^{(i+1)}_{\rho} \setminus \sol) > d\Gamma \Phi(\sol \setminus \alg^{(i+1)}_{\rho})$, we have the following:
		$$\Phi(\alg^{(i+1)}_{\rho} \setminus \sol) - \Phi(\sol \setminus \alg^{(i+1)}_{\rho})  > \frac{d\Gamma - 1}{d\Gamma} \Phi(\alg^{(i+1)}_{\rho} \setminus \sol).$$
		
		We use this inequality, along with both \Cref{lemma:backwardphase}, knowing that $T\leq d\Gamma'+\kappa+1+\eps$ (as shown in the proof of \Cref{cor:swapgain}), and \Cref{lem512}, to get the following derivation:
		
		\begin{align*}
			&\Delta(i+2) - [\Phi(\alg^{(i+1)}_\rho \setminus \sol) - \Phi(\sol \setminus \alg^{(i+1)}_\rho)] \\
			&\quad\leq (e^{\zeta' (d\Gamma'+\kappa+1+\eps)} - 1) \Phi(\alg^{(i+1)}_\rho \setminus \sol) \\
			&\quad< \frac{d\Gamma}{d\Gamma - 1} (e^{\zeta' (d\Gamma'+\kappa+1+\eps)} - 1) [\Phi(\alg^{(i+1)}_\rho \setminus \sol) - \Phi(\sol \setminus \alg^{(i+1)}_\rho)],
		\end{align*}
		
		which implies:
		$$\Phi(\alg^{(i+1)}_\rho \setminus \sol) - \Phi(\sol \setminus \alg^{(i+1)}_\rho) > \frac{1}{1 + \frac{d\Gamma}{d\Gamma - 1} (e^{\zeta' (d\Gamma'+\kappa+1+\eps)} - 1)}\Delta(i+2).$$
		
		Now, we merge this inequality into our lower bound of $\Delta(i)-\Delta(i+2)$:
		
		$$\Phi(i+2) < \left(1 +  \frac{\min\{\frac{d\Gamma - 1}{2d\Gamma} ,\frac{(d\Gamma - 1)(\sqrt{\Gamma}-1)(\sqrt{\Gamma}-1-\eps)\kappa}{d^2(1+\eps)\Gamma^2}\} - (e^{\zeta' (d\Gamma'+\kappa+1+\eps)} - 1)}{1 + \frac{d\Gamma}{d\Gamma - 1} (e^{\zeta' (d\Gamma +\kappa+1+\eps)} - 1)}\right)^{-1} \Phi(i) $$
		$$= (1+\eta)^{-1}\Phi(i).$$
		
		Therefore, $\Delta(i)<(1+\eta)^{-i/2}\Delta(0) \leq (1+\eta)^{-i/2}n\max_{i\in\Omega}\phi(i)$.
		
		Finally, setting $i=I$ gives $\Delta(I) < \min_{\substack{i\in\Omega\\\phi(i)>0}}\phi(i)$, yielding the contradiction explained above.
	\end{proof}
	
	\lemFiveFive*
	\begin{proof}
		One important remark here is that either $\Phi(\sol)$ or $\Phi'(\sol)$ might be $0$. If this is the case, the result is easy to get, and we treat it as a special case. Indeed, remark that since by definition $\Phi\geq \Phi'$, either (a) $\Phi'(\sol)=0$ and $\Phi(\sol)>0$ or (b) $\Phi'(\sol)=\Phi(\sol)=0$. Remark furthermore that we use a well-behaved potential, we know that for all $i$, $\phi(i)=0\iff w_i=0$ and $\phi'(i)=0\iff w'_i=0$.
		
		In case (a), $\Phi'(\sol)=0$ means that $w'(\sol)=0$. Therefore, to get the approximation guarantee that we want on $w'$, we need to find a solution $\alg_j$ with $w'(\alg_j)=0$. This means that we can simply restrict our attention to solutions in $\Omega_0\subs\Omega$, the set of items with $w'$-weight $0$. Then, we can simply run the local search algorithm on $w$, and \Cref{lem-ls-dapx} ensures that we get some solution that is a $\gamma$-difference-approximation on $w$ (where $\gamma$ is the approximation guarantee of the local search algorithm), thereby satisfying both approximation guarantees since $d\geq\lambda$.
		
		In case (b), we have $w(\sol)=w'(\sol)=0$. Therefore, the approximations guarantees that we want are as follows: $w'(\alg_j)=0$ and $w(\alg_j\setminus \sol)=0$. Since $w(\sol)=0$, the second guarantee implies that $w(\alg_j)=0$ too. So, we can once again restrict our attention to solutions in $\Omega_{00}\subs\Omega$, defined as the set of items $i\in\Omega$ such that $w(i)=w'(i)=0$. Any feasible solution in this set will satisfy the guarantees that we are interested in.
		
		This concludes the analysis in the case where $\Phi(\sol)=0$ or $\Phi'(\sol)=0$; therefore, we will now assume that both are positive. In particular, this means that $\Phi(\sol)\geq \min_{\substack{i\in\Omega\\\phi(i)\neq 0}}\phi(i)$, and $\Phi'(\sol)\geq \min_{\substack{i\in\Omega\\\phi'(i)\neq 0}}\phi'(i)$.
		
		We now assume that we have $\chi\in[\Phi(\sol), (1+\eps)\Phi(\sol)]$ and $\chi'\in[\Phi'(\sol), (1+\eps)\Phi'(\sol)]$ and that the potential $\Phi'$ has been properly quantified to only take values that are multiples of $\frac\eps n\chi'$, then the conditions of \textsc{DoubleApprox} are met. Now, \Cref{lemma:algconverges} applies as soon as:
		$$\min\left\{\frac{d\Gamma - 1}{2d\Gamma} ,\frac{(d\Gamma - 1)(\sqrt{\Gamma}-1)(\sqrt{\Gamma}-1-\eps)\kappa}{d^2(1+\eps)\Gamma^2}\right\} - (e^{\zeta' (d\Gamma'+\kappa+1+\eps)} - 1) > 0.$$
		
		We show that there exists some large enough $\Gamma>1$ such that this is the case. Indeed, since $\zeta'\xrightarrow[\Gamma\rightarrow\infty]{} 0$,
		$$\min\left\{\frac{d\Gamma - 1}{2d\Gamma} ,\frac{(d\Gamma - 1)(\sqrt{\Gamma}-1)(\sqrt{\Gamma}-1-\eps)\kappa}{d^2(1+\eps)\Gamma^2}\right\} - (e^{\zeta' (d\Gamma'+\kappa+1+\eps)} - 1)$$ $$\xrightarrow[\Gamma\rightarrow\infty]{} \min\left\{\frac12, \frac{\kappa}{d(1+\eps)}\right\}>0.$$
		
		Since $\Gamma>1$, asking for $\eps<1/4$ ensures that $\eps<2/3-\frac5{3d\Gamma}$. Therefore \Cref{lemma:algconverges} applies and there is some solution $\alg_j$ obtained by the algorithm \textsc{DoubleApprox} such that $\Phi(\alg_j\setminus \sol)\leq d\Gamma\cdot \Phi(\sol\setminus \alg_j)$ and $\Phi'(\alg_j)\leq (d\Gamma'+\kappa+1+\eps)\Phi'(\sol)$. Now, we use \Cref{ppt:bounded-potential} to conclude: $w(\alg_j\setminus \sol)\leq d\Gamma\Lambda\cdot w(\sol\setminus \alg_j)$ and $w'(\alg_j)\leq (d\Gamma'+\kappa+1+\eps)\Lambda w'(\sol)$.
		
		Therefore, the set $\alg$ of solutions that contains one solution with the guarantees of the lemma can be composed of the solution computed in cases (a) and (b), along with all the solutions returned by the \textsc{DoubleApprox} algorithm.
		
		Now, we show that we can indeed find this special solution $\alg_j$ in polynomial time. Let $m=\min_{\substack{i\in\Omega\\\phi(i)>0}}\phi(i)$ and $m'=\min_{\substack{i\in\Omega\\\phi'(i)>0}}\phi'(i)$. We can call \textsc{DoubleApprox} many times, for each pair $(\chi, \chi')$ where $$\chi\in\left\{m, \sqrt{1+\eps}m, ..., \sqrt{1+\eps}^{\left\lceil\log_{\sqrt{1+\eps}}\left(\frac{n\max_{i\in\Omega}\phi(i)}{m}\right)\right\rceil}m\right\}$$ and $$\chi'\in\left\{m', \sqrt{1+\eps}m', ..., \sqrt{1+\eps}^{\left\lceil\log_{\sqrt{1+\eps}}\left(\frac{n\max_{i\in\Omega}\phi'(i)}{m'}\right)\right\rceil}m'\right\}\,.$$ These two sets are polynomially sized. When we call \textsc{DoubleApprox} for some value $\chi'$, we round the potential $\Phi'$ up to smallest greater multiple of $\frac\eps n\chi'$ (that is, $\phi'(i)$ becomes $\left\lceil\frac{\phi'(i)}{\frac\eps n\chi'}\right\rceil\frac\eps n\chi'$); since the problem is linear, we just lose an $O(\eps)$ term in the approximation ratio.
		
		Finally, each call of \textsc{DoubleApprox} runs in polynomial time. We have shown in \Cref{lemma:algconverges} that the loop in Line~\ref{line:mainwhile} of \textsc{DoubleApprox} stops after a polynomial number of iterations. The forward loop in Line~\ref{line:firstwhile} cannot run for more than $5n^2$ iterations because each move decreases the potential $\Phi$ by at least $\frac1{10n^2}\rho$ and the loop ends if the decrease is more than $\rho/2$. The backward loop in Line~\ref{line:secondwhile} breaks when the potential $\Phi'$ reaches $d\Gamma'\chi'$, and as shown in the proof of \Cref{cor:swapgain}, the potential $\Phi'$ is at most $(d\Gamma'+\kappa+1+\eps)\chi'$ at the beginning of the loop. Therefore, since each iteration decreases the potential by at least $\frac\eps n\chi'$, there can be at most $(\kappa+1+\eps)\frac n\eps$ iterations.
	\end{proof}
	
\end{document}